\newcommand{\limp}[0]{\ensuremath{\rightarrow}}
\newcommand{\mimp}[0]{\ensuremath{{-\!*\;}}}
\newcommand{\munit}{\ensuremath{\top^*}}
\newcommand{\simp}[0]{\ensuremath{\triangleright}}
\newcommand{\mand}[0]{\ensuremath{*}}
\newcommand{\lsbbi}[0]{\ensuremath{LS_{BBI}}}
\newcommand{\fvlsbbi}[0]{\ensuremath{FVLS_{BBI}}}
\newcommand{\psl}[0]{\ensuremath{\mbox{PASL}}}
\newcommand{\lspslh}[0]{\ensuremath{LS_{PASL}}}
\newcommand{\ilspslh}[0]{\ensuremath{ILS_{PASL}}}
\newcommand{\ilspslhe}[0]{\ensuremath{ILS_{PASL}2}}
\newcommand{\lsst}[0]{\ensuremath{LS_{ST}}}
\newcommand{\myframe}[3]{\ensuremath{(#1, #2, #3)}}
\newcommand{\mymodel}[4]{\ensuremath{(#1, #2, #3, #4)}}
\newcommand{\mylmodel}[5]{\ensuremath{(#1, #2, #3, #4, #5)}}
\newcommand{\myseq}[3]{\ensuremath{#1; #2 \vdash #3}}
\newcommand{\pfun}{\rightharpoonup}
\def\Fcal{\mathcal{F}}
\def\Gcal{\mathcal{G}}
\def\Mcal{\mathcal{M}}
\def\Ncal{\mathcal{N}}
\def\Lcal{\mathcal{L}}
\def\Pcal{\mathcal{P}}
\def\Scal{\mathcal{S}}
\def\subst{\mathit{subst}}
\def\val{\nu}
\def\Abb{\mathbb{A}}
\def\Ebb{\mathbb{E}}
\def\Ubb{\mathbb{U}}
\def\Sbb{\mathbb{S}}
\def\EMbb{\mathbb{E}\mathbb{M}}
\def\CSbb{\mathbb{C}\mathbb{S}}
\begin{document}

\title{Proof Search for Propositional Abstract Separation Logics via
Labelled Sequents}

\author{
Zh\'e H\'ou\inst{1} \and Ranald Clouston\inst{1} \and Rajeev Gor\'e\inst{1} \and Alwen Tiu\inst{1,2}
}
\institute{
Research School of Computer Science,
The Australian National University
\and
School of Computer Engineering,
Nanyang Technological University 
}

\maketitle

\begin{abstract}
Abstract separation logics are a family of extensions of Hoare logic
for reasoning about programs that mutate memory. These logics are
``abstract'' because they are independent of any particular concrete
memory model. Their assertion languages, called propositional abstract
separation logics, extend the logic of (Boolean) Bunched Implications
(BBI) in various ways.

We develop a modular proof theory for various propositional abstract
separation logics using cut-free labelled sequent calculi. We first
extend the cut-fee labelled sequent calculus for BBI of H\'{o}u et al
to handle Calcagno et al's original logic of separation algebras by
adding sound rules for partial-determinism and cancellativity, while
preserving cut-elimination. We prove the completeness of our calculus
via a sound intermediate calculus that enables us to construct
counter-models from the failure to find a proof. We then capture other
propositional abstract separation logics by adding sound rules for
indivisible unit and disjointness, while maintaining completeness and
cut-elimination. We present a theorem prover based on our labelled
calculus for
these 
logics.
\end{abstract}




\section{Introduction}

Separation logic (SL)~\cite{reynolds2002} is an extension of Hoare logic for
reasoning about programs that explicitly mutate memory. This is achieved via an
assertion language
that, along with the usual (additive) connectives and predicates for first-order
logic with arithmetic, has the multiplicative connectives \emph{separating
conjunction} $\mand$, its unit $\munit$, and \emph{separating implication}, or
\emph{magic wand}, $\mimp$, from the logic of Bunched Implications
(BI)~\cite{OHearnPym1999}, as well as the \emph{points-to} predicate $\mapsto$.
The additive connectives may be either intuitionistic, as for BI, or classical,
as for the logic of \emph{Boolean} Bunched Implications (BBI). Classical
additives are more expressive as they support reasoning about non-monotonic
commands such as memory deallocation, and assertions such as ``the heap is
empty''~\cite{IshtiaqOHearn01}. In this paper we consider classical additives
only.

The assertion language of SL must provide a notion of inference to support
precondition strengthening and postcondition weakening, yet little such proof
theory exists, despite its link with the proof-theoretically motivated
BI. Instead, inference must proceed via reasoning directly about the concrete semantics of
\emph{heaps}, or finite partial functions from addresses to values. A heap
satisfies $P\mand Q$ iff it can be partitioned into heaps satisfying $P$ and $Q$
respectively; it satisfies $\munit$ iff it is empty; it satisfies $P\mimp Q$ iff
any extension with a heap that satisfies $P$ must then satisfy $Q$; and it
satisfies $E\mapsto E'$ iff it is a singleton map sending the address specified
by the expression $E$ to the value specified by the expression $E'$. Such
concrete semantics are appropriate for proving the correctness of a specific
program in a specific environment, but mean that if a different notion of memory
(or more generally, resource) is required then a new logic is also required.

Calcagno et al's Abstract Separation Logic (ASL)~\cite{calcagno2007} introduced
the abstract semantics of partial cancellative monoids, or \emph{separation
algebras}, to unify notions of resources for heaps, heaps with permissions,
Petri nets, and other examples. These semantics allow interpretation of $\mand$,
$\munit$ and $\mimp$, although the latter is not considered by Calcagno et al.
However $\mapsto$ has no meaning in separation algebras in general, and is
therefore not a first class citizen of ASL; it may be introduced as a predicate
only if an appropriate concrete separation algebra is fixed. Calcagno et al
do not consider proof theory for their assertion language, whose propositional
fragment we call Propositional Abstract Separation Logic (PASL), but separation
algebras are a restriction of \emph{non-deterministic monoids}, which are known
to give sound and complete semantics for BBI~\cite{GalmicheLarcheyWendling2006}.
In this sense PASL is a refinement of BBI, differing only by the addition of the
semantic properties of \emph{partial-determinism} and \emph{cancellativity}.

This link between BBI and PASL semantics raises the question of whether existing
proof theory for BBI can be extended to give a sound and cut-free complete proof
system for PASL; we answer this question in the affirmative by
extending the labelled sequent calculus $LS_{BBI}$ of H\'{o}u et
al~\cite{hou2013} by adding explicit rules for partial-determinism and cancellativity.
The completeness of $LS_{BBI}$ was demonstrated via the Hilbert axiomatisation of
BBI, but this avenue is not open to us as partial-determinism and cancellativity are not
axiomatisable in BBI~\cite{brotherston2013}; instead completeness follows via a
\emph{counter-model construction} procedure. A novelty of our counter-model construction
is that it can be modularly extended to handle extensions and sublogics of PASL.

We have also implemented proof search
using our calculus (although no decision procedure for PASL is
possible~\cite{brotherston2010}). To our knowledge this is the first proof to be
presented of the cut-free completeness of a calculus for PASL%
\footnote{Larchey-Wendling~\cite{wendling2012} claims that
the tableaux 
for BBI with partial-determinism in \cite{wendling2009} can be extended to cover cancellativity, 
but the ``rather involved'' proof  has not appeared yet.
}%
, and our implementation is the first automated theorem prover for PASL. 

Just as we have a family of separation logics, ranging across different
concrete semantics, we now also have a family of abstract separation logics for
different abstract semantics. These abstract semantics are often expressed as
extensions of the usual notion of separation algebra; most notably Dockins et
al~\cite{dockins2009} suggested the additional properties of positivity (here
called \emph{indivisible unit}), \emph{disjointness}, \emph{cross-split}, and
\emph{splittability}%
\footnote{Dockins et al~\cite{dockins2009} also suggest generalising
separation algebras to have a \emph{set} of units; it is an easy corollary
of~\cite[Lem. 3.11]{brotherston2013} that single-unit and multiple-unit
separation algebras satisfy the same set of formulae.}%
. Conversely, the abstract semantics for Fictional Separation
Logic~\cite{JensenBirkedal2012} generalise separation algebras by dropping
cancellativity. Hence there is demand for a modular approach to proof theory
and proof search for propositional abstract separation logics. Labelled sequent
calculi, with their explicitly semantics-based rules, provide good support for
this modularity, as rules for the various properties can be added and removed
as required. We investigate which properties can be combined without sacrificing
our cut-free completeness result.

While we work with abstract models of separation logics, the reasoning
principles
behind our proof-theoretic methods should be applicable to
concrete models also, so
we investigate as further work how concrete predicates such as $\mapsto$
might be integrated into our approach.
Proof search strategies that come out of our
proof-theoretic analysis could also potentially be applied to guide proof search in
various 
encodings
of separation logics~\cite{Appel2006,Tuch2007,McCreight2009}
in proof assistants, e.g., they can guide the constructions of proof tactics needed to automate the
reasoning tasks in those embeddings.

The remainder of this paper is structured as follows. Section 2 introduces propositional abstract separation logic based on separation algebra semantics, and gives the labelled sequent calculus for this logic. Fundamental results such as soundness and cut-elimination are also shown. Section 3 proves the completeness of the labelled calculus by counter-model construction. Section 4 discusses extensions of the labelled calculus with desirable properties in separation theory.
Implementation and experiment are shown in Section 5, followed by Section 6 where a preliminary future work on how concrete predicates such as $\mapsto$ might be integrated into our approach is outlined. Finally, Section 7 discusses related work.

\section{The labelled sequent calculus for $\psl$}

In this section we define the \emph{separation algebra} semantics of
Calcagno et al~\cite{calcagno2007} for Propositional Abstract Separation Logic
(PASL), and present the labelled sequent calculus $\lspslh$ for this logic,
extending the calculus $LS_{BBI}$ for BBI of H\'{o}u et al~\cite{hou2013}
with partial-determinism and cancellativity.
Soundness and cut-elimination are then demonstrated for $\lspslh$.

%

\subsection{Propositional abstract separation logic}
\label{subsec:pasl}
The formulae of $\psl$ are defined inductively as follows, where $p$ ranges over
some set $Var$ of propositional variables:
\begin{align*}
A ::= & \ p\mid \top\mid \bot\mid\lnot A\mid A\lor A\mid A\land A\mid A\limp A\mid \top^*\mid A\mand A\mid A\mimp A
\end{align*}
PASL-formulae are interpreted with respect to the following semantics:

\begin{definition}\label{dfn:sepalg}
A \emph{separation algebra}, or partial cancellative commutative monoid, is a
triple $(H,\circ,\epsilon)$ where $H$ is a non-empty set, $\circ$ is a partial
binary function $H\times H\pfun H$ written infix, and $\epsilon\in H$,
satisfying the following conditions, where `$=$' is interpreted as `both sides
undefined, or both sides defined and equal':
\begin{description}
\item[identity:] $\forall h\in H.\, h\circ\epsilon = h$
\item[commutativity:] $\forall h_1,h_2\in H.\, h_1\circ h_2 = h_2\circ h_1$
\item[associativity:] $\forall h_1,h_2,h_3\in H.\, h_1\circ(h_2\circ h_3) = (h_1\circ h_2)\circ h_3$
\item[cancellativity:] $\forall h_1,h_2,h_3,h_4\in H.$ if $h_1\circ h_2 = h_3$ and $h_1\circ h_4 = h_3$ then $h_2 = h_4$
\end{description}
\end{definition}

Note that \textit{partial-determinism} of the monoid is assumed since 
$\circ$ is a partial function: 
for any $h_1,h_2,h_3,h_4\in H$, if $h_1\circ h_2 = h_3$ and $h_1\circ h_2 = h_4$ then $h_3 = h_4$. The paradigmatic example of a separation algebra is the set of \emph{heaps};
here $\circ$ is the combination of two heaps with disjoint domain, and
$\epsilon$ is the empty heap.

In this paper we prefer to express PASL semantics in the style of \emph{ternary
relations}, to maintain consistency with the earlier work of H\'{o}u et
al on BBI~\cite{hou2013}; it is easy to see that the definition below is a
trivial notational variant of Def.~\ref{dfn:sepalg}.

\begin{definition}
A \emph{PASL Kripke relational frame} is a triple $(H, R, \epsilon)$, where $H$
is a non-empty set of \emph{worlds}, $R \subseteq H \times H \times H$, and
$\epsilon \in H$, satisfying the following conditions for all $h_1,h_2,h_3,h_4,
h_5$ in $H$:
\begin{description}
\item[identity:] $R(h_1,\epsilon, h_2) \Leftrightarrow$ $h_1 = h_2$
\item[commutativity:] $R(h_1,h_2, h_3) \Leftrightarrow$ $R(h_2,h_1,
  h_3)$
\item[associativity:] $(R(h_1,h_5, h_4) \& R(h_2,h_3,h_5)) \Rightarrow$
 $\exists h_6.(R(h_6,h_3, h_4) \& R(h_1,h_2, h_6))$
\item[cancellativity:] $(R(h_1,h_2,h_3) \& R(h_1,h_4,h_3)) \Rightarrow h_2 = h_4$
\item[partial-determinism:] $(R(h_1,h_2,h_3) \& R(h_1,h_2, h_4)) \Rightarrow h_3 = h_4$.
\end{description}
\end{definition}

A \emph{PASL Kripke relational model} is a tuple
$\mymodel{H}{R}{\epsilon}{\val}$ of a PASL Kripke relational frame $(H,R,
\epsilon)$ and a {\em valuation} function $\val : Var \rightarrow \Pcal(H)$
(where $\Pcal(H)$ is the power set of $H$).  The forcing relation $\Vdash$
between a model  $\Mcal = \mymodel{H}{R}{\epsilon}{\val}$
and a formula is defined in Table~\ref{tab:psl_semantics}, where we write
$\Mcal, h \not \Vdash A$ for the negation of $\Mcal, h \Vdash A$. Given a model
$\Mcal = \mymodel{H}{R}{\epsilon}{\val}$, a formula is \emph{true at (world)
$h$} iff $\Mcal,h \Vdash A$. The formula $A$ is {\em valid} iff it is true at
all worlds of all models.
\begin{table*}
\vspace{-0.5cm}
\centering
\begin{tabular}{l@{\extracolsep{1cm}}l}
\begin{tabular}{lcl}
$\Mcal, h \Vdash p$ & iff & $p \in Var$ and $h \in v(p)$
\\
$\Mcal, h \Vdash A\land B$ & iff & $\Mcal, h \Vdash A$ and $\Mcal, h \Vdash B$
\\
$\Mcal, h \Vdash A\limp B$ & iff & $\Mcal, h\not \Vdash A$ or $\Mcal, h\Vdash B$\\
$\Mcal, h \Vdash A\lor B$ & iff & $\Mcal, h \Vdash A$ or $\Mcal, h \Vdash B$
\\
\end{tabular}
&
\begin{tabular}{lcl}
$\Mcal, h \Vdash \top^*$ & iff & $h = \epsilon$\\
$\Mcal, h \Vdash \top$ & iff & always
\\
$\Mcal, h \Vdash \bot$ & iff & never 
\\
$\Mcal, h \Vdash \lnot A$ & iff & $\Mcal, h \not \Vdash A$
\end{tabular}\\
\multicolumn{2}{l}{$\;\;\Mcal, h \Vdash A\mand B$ iff 
$\exists h_1,h_2.(R(h_1,h_2, h)$ and $\Mcal, h_1\Vdash A$ and $\Mcal, h_2 \Vdash B)$}\\
\multicolumn{2}{l}{$\;\;\Mcal, h \Vdash A \mimp B$ iff 
$\forall h_1,h_2.((R(h,h_1,h_2)$ and $\Mcal, h_1 \Vdash A)$ implies $\Mcal, h_2 \Vdash B)$}\\
\end{tabular}\ \\[3px]
\caption{Semantics of PASL, where $\Mcal = \mymodel{H}{R}{\epsilon}{\val}.$}
\label{tab:psl_semantics}
\vspace{-1cm}
\end{table*}

\subsection{The labelled sequent calculus $\lspslh$}
\label{subsec:lspsl}


Let $LVar$ be an infinite set of {\em label variables}, and let the set $\Lcal$ of \emph{labels} be $LVar\cup\{\epsilon\}$, where $\epsilon$ is a label constant
not in $LVar$; here we overload the notation for the identity world in the
semantics. Labels will be denoted by lower-case letters such as $a,b,x,y,z$.
A {\em labelled formula} is a pair $a : A$ of a label $a$ and formula $A$.
As usual in a labelled sequent calculus
one needs to incorporate Kripke relations explicitly into the sequents. This 
is achieved via the syntactic notion of {\em relational atoms}, which have
the form $(a, b \simp c)$, where $a,b,c$ are labels. 
A \emph{sequent} takes the form
$$\myseq{\Gcal}{\Gamma}{\Delta}$$
where $\Gcal$ is a set of relational atoms, 
and $\Gamma$ and $\Delta$ are multisets of labelled formulae. 
Then, $\Gamma;A$ is the multiset union of $\Gamma$ and $\{A\}$.

As the interpretation of the logical connectives of $\psl$ are the same as those
for BBI, we may obtain a labelled sequent calculus for $\psl$, called
$\lspslh$,
by adding the rules $P$ (partial-determinism) and $C$ (cancellativity) to
$\lsbbi$~\cite{hou2013}. The rules for $\lspslh$ are presented in
Fig.~\ref{fig:LS_PSL_H}, where $p$ is a propositional variable, $A,B$ are
formulae,  and $w,x,y,z\in\Lcal$. Note that some rules use
\emph{label substitutions}. We write $\Gcal[y/x]$ (resp. $\Gamma[y/x]$) for the 
(multi)set of relational atoms (resp. labelled formulae) for which 
the label variable $x$ has been uniformly replaced by the label $y$.
In each rule, the formula (resp. relational atom) shown explicitly 
in the conclusion is called the \textit{principal} formula (resp. relational
atom). A rule with no premise is called a \textit{zero-premise} rule.
Note that the $\limp L$ rule is the classical implication left rule. 



\begin{figure*}[!t]
\footnotesize
\centering
\begin{tabular}{cc}
\multicolumn{2}{l}{\textbf{Identity and Cut:}}\\[10px]
\AxiomC{$$}
\RightLabel{\tiny $id$}
\UnaryInfC{$\myseq{\Gcal}{\Gamma;w:p}{w:p;\Delta}$}
\DisplayProof
&
\AxiomC{$\myseq{\Gcal}{\Gamma}{x:A;\Delta}$}
\AxiomC{$\myseq{\Gcal'}{\Gamma';x:A}{\Delta'}$}
\RightLabel{\tiny $cut$}
\BinaryInfC{$\myseq{\Gcal;\Gcal'}{\Gamma;\Gamma'}{\Delta;\Delta'}$}
\DisplayProof
\\[15px]
\multicolumn{2}{l}{\textbf{Logical Rules:}}\\[10px]
\AxiomC{$$}
\RightLabel{\tiny $\bot L$}
\UnaryInfC{$\myseq{\Gcal}{\Gamma; w:\bot}{\Delta}$}
\DisplayProof
$\qquad$
\AxiomC{$\myseq{(\epsilon,w\simp \epsilon);\Gcal}{\Gamma}{\Delta}$}
\RightLabel{\tiny $\top^* L$}
\UnaryInfC{$\myseq{\Gcal}{\Gamma;w:\top^*}{\Delta}$}
\DisplayProof
&
\AxiomC{$$}
\RightLabel{\tiny $\top R$}
\UnaryInfC{$\myseq{\Gcal}{\Gamma}{w:\top;\Delta}$}
\DisplayProof
$\qquad$
\AxiomC{$$}
\RightLabel{\tiny $\top^* R$}
\UnaryInfC{$\myseq{\Gcal}{\Gamma}{\epsilon:\top^*;\Delta}$}
\DisplayProof\\[15px]
\AxiomC{$\myseq{\Gcal}{\Gamma;w:A;w:B}{\Delta}$}
\RightLabel{\tiny $\land L$}
\UnaryInfC{$\myseq{\Gcal}{\Gamma;w:A\land B}{\Delta}$}
\DisplayProof
&
\AxiomC{$\myseq{\Gcal}{\Gamma}{w:A;\Delta}$}
\AxiomC{$\myseq{\Gcal}{\Gamma}{w:B;\Delta}$}
\RightLabel{\tiny $\land R$}
\BinaryInfC{$\myseq{\Gcal}{\Gamma}{w:A\land B;\Delta}$}
\DisplayProof\\[15px]
\AxiomC{$\myseq{\Gcal}{\Gamma}{w:A;\Delta}$}
\AxiomC{$\myseq{\Gcal}{\Gamma;w:B}{\Delta}$}
\RightLabel{\tiny $\limp L$}
\BinaryInfC{$\myseq{\Gcal}{\Gamma;w:A\limp B}{\Delta}$}
\DisplayProof
&
\AxiomC{$\myseq{\Gcal}{\Gamma;w:A}{w:B;\Delta}$}
\RightLabel{\tiny $\limp R$}
\UnaryInfC{$\myseq{\Gcal}{\Gamma}{w:A\limp B; \Delta}$}
\DisplayProof\\[15px]
\AxiomC{$\myseq{(x,y \simp z);\Gcal}{\Gamma;x:A;y:B}{\Delta}$}
\RightLabel{\tiny $\mand L$}
\UnaryInfC{$\myseq{\Gcal}{\Gamma;z:A\mand B}{\Delta}$}
\DisplayProof
&
\AxiomC{$\myseq{(x,z \simp y);\Gcal}{\Gamma;x:A}{y:B;\Delta}$}
\RightLabel{\tiny $\mimp R$}
\UnaryInfC{$\myseq{\Gcal}{\Gamma}{z:A\mimp B;\Delta}$}
\DisplayProof\\[15px]
\multicolumn{2}{c}{
\AxiomC{$\myseq{(x,y \simp z);\Gcal}{\Gamma}{x:A;z:A\mand B;\Delta}$}
\AxiomC{$\myseq{(x,y \simp z);\Gcal}{\Gamma}{y:B;z:A\mand B;\Delta}$}
\RightLabel{\tiny $\mand R$}
\BinaryInfC{$\myseq{(x,y \simp z);\Gcal}{\Gamma}{z:A\mand B;\Delta}$}
\DisplayProof
}\\[15px]
\multicolumn{2}{c}{
\AxiomC{$\myseq{(x,y \simp z);\Gcal}{\Gamma;y:A\mimp B}{x:A;\Delta}$}
\AxiomC{$\myseq{(x,y \simp z);\Gcal}{\Gamma;y:A\mimp B; z:B}{\Delta}$}
\RightLabel{\tiny $\mimp L$}
\BinaryInfC{$\myseq{(x,y \simp z);\Gcal}{\Gamma;y:A\mimp B}{\Delta}$}
\DisplayProof
}
\\[15px]
\multicolumn{2}{l}{\textbf{Structural Rules:}}\\[15px]
\AxiomC{$\myseq{(y,x \simp z);(x,y \simp z);\Gcal}{\Gamma}{\Delta}$}
\RightLabel{\tiny $E$}
\UnaryInfC{$\myseq{(x,y \simp z);\Gcal}{\Gamma}{\Delta}$}
\DisplayProof
&
\AxiomC{$\myseq{(u,w \simp z);(y,v \simp w);(x,y \simp z);(u,v \simp x);\Gcal}{\Gamma}{\Delta}$}
\RightLabel{\tiny $A$}
\UnaryInfC{$\myseq{(x,y \simp z);(u,v \simp x);\Gcal}{\Gamma}{\Delta}$}
\DisplayProof\\[15px]
\AxiomC{$\myseq{(x,\epsilon \simp x);\Gcal}{\Gamma}{\Delta}$}
\RightLabel{\tiny $U$}
\UnaryInfC{$\myseq{\Gcal}{\Gamma}{\Delta}$}
\DisplayProof
&
\AxiomC{$\myseq{(x,w\simp x);(y,y\simp w);(x,y\simp x);\Gcal}{\Gamma}{\Delta}$}
\RightLabel{\tiny $A_C$}
\UnaryInfC{$\myseq{(x,y\simp x);\Gcal}{\Gamma}{\Delta}$}
\DisplayProof
\\[15px]
\AxiomC{$\myseq{(\epsilon,w'\simp w');\Gcal[w'/w]}{\Gamma[w'/w]}{\Delta[w'/w]}$}
\RightLabel{\tiny $Eq_1$}
\UnaryInfC{$\myseq{(\epsilon,w\simp w');\Gcal}{\Gamma}{\Delta}$}
\DisplayProof
&
\AxiomC{$\myseq{(\epsilon, w' \simp w');\Gcal[w'/w]}{\Gamma[w'/w]}{\Delta[w'/w]}$}
\RightLabel{\tiny $Eq_2$}
\UnaryInfC{$\myseq{(\epsilon,w'\simp w);\Gcal}{\Gamma}{\Delta}$}
\DisplayProof\\[15px]
\AxiomC{$\myseq{(x,y\simp z);\Gcal[z/w]}{\Gamma[z/w]}{\Delta[z/w]}$}
\RightLabel{\tiny $P$}
\UnaryInfC{$\myseq{(x,y\simp z);(x,y\simp w);\Gcal}{\Gamma}{\Delta}$}
\DisplayProof
&
\AxiomC{$\myseq{(x,y\simp z);\Gcal[y/w]}{\Gamma[y/w]}{\Delta[y/w]}$}
\RightLabel{\tiny $C$}
\UnaryInfC{$\myseq{(x,y\simp z);(x,w\simp z);\Gcal}{\Gamma}{\Delta}$}
\DisplayProof\\[15px]
\multicolumn{2}{l}{\textbf{Side conditions:}} \\
\multicolumn{2}{l}{Only label variables (not $\epsilon$) may be substituted for.}\\ 
\multicolumn{2}{l}{In $\mand L$ and $\mimp R$, the labels $x$ and $y$
do not occur in the conclusion. }\\
\multicolumn{2}{l}{In the rules $A,A_C$, the label $w$ does not occur in the conclusion.}
\end{tabular}
\caption{The labelled sequent calculus $\lspslh$ for Propositional Abstract Separation Logic.} 
\label{fig:LS_PSL_H}
\vspace{-0.28cm}
\end{figure*}

A function $\rho: \Lcal \rightarrow H$ from
labels to worlds is a {\em label mapping} 
iff it satisfies $\rho(\epsilon) =  \epsilon$, mapping the
label constant $\epsilon$ to the identity world of $H$. Intuitively, a labelled formula $a:A$ means that formula $A$ is true in world $\rho(a)$.
Thus we define an \textit{extended $\psl$ Kripke relational model} $(H,R,\epsilon,\val,\rho)$ as a model equipped with a label mapping.

\begin{definition}[Sequent Falsifiability]
A sequent $\myseq{\Gcal}{\Gamma}{\Delta}$ is \emph{falsifiable in an extended model}
$\Mcal = \mylmodel{H}{R}{\epsilon}{\val}{\rho}$ if for every $x : A \in \Gamma$, $(a,b\simp c) \in \Gcal$, and for every
$y : B \in \Delta$,
we have $(\Mcal,\rho(x) \Vdash A)$, 
$R(\rho(a),\rho(b),\rho(c))$ and $(\Mcal,\rho(y) \not \Vdash B).$
It is {\em falsifiable} if it is falsifiable in some extended model. 
\end{definition}

To show that a formula $A$ is \emph{valid} in $\lspslh$, we prove $\vdash w:A$ for an
arbitrary label $w$. An example derivation is given in Fig.~\ref{fig:ex}.
\begin{figure*}[!t]
\centering
\AxiomC{}
\RightLabel{\tiny $\top^* R$}
\UnaryInfC{$\myseq{(\epsilon,a\simp a);(a,\epsilon\simp a)}{a:A}{\epsilon:\top^*} $}
\AxiomC{}
\RightLabel{\tiny $id$}
\UnaryInfC{$\myseq{(\epsilon,a\simp a);(a,\epsilon\simp a)}{a:A}{a:A}$}
\RightLabel{\tiny $\mand R$}
\BinaryInfC{$\myseq{(\epsilon,a\simp a);(a,\epsilon\simp a)}{a:A}{ a:\top^*  \mand   A}$}
\RightLabel{\tiny $E$}
\UnaryInfC{$\myseq{(a,\epsilon\simp a)}{a:A}{ a:\top^*  \mand   A}$}
\RightLabel{\tiny $U$}
\UnaryInfC{$\myseq{~}{a:A}{a:\top^*  \mand   A}$}
\RightLabel{\tiny $\limp R$}
\UnaryInfC{$ \myseq{~}{~}{ a:A \limp   (\top^*  \mand   A)}$}
\DisplayProof
\caption{An example derivation in $\lsbbi$.}
\label{fig:ex}
\end{figure*}

\begin{theorem}[Soundness]
For any formula $A$, and
for an arbitrary label $w$,
if the labelled sequent 
$\vdash w:A$ is derivable in $\lspslh$
then
$A$ is valid.
\end{theorem}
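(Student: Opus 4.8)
The plan is to prove the stronger statement that \emph{every} sequent derivable in $\lspslh$ is not falsifiable; the theorem is then the special case $\vdash w:A$, where non-falsifiability says that in every extended model $\Mcal = \mylmodel{H}{R}{\epsilon}{\val}{\rho}$ the single succedent formula is forced, i.e.\ $\Mcal, \rho(w) \Vdash A$. Since $w$ is a label variable, $\rho(w)$ ranges over every world of every model, so $A$ is valid. I would prove non-falsifiability by induction on the height of the derivation, using in the inductive step the contrapositive form: for each rule it suffices to show that if the conclusion is falsifiable in some extended model then at least one premise is falsifiable in a possibly modified extended model, which contradicts the induction hypothesis.

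For the zero-premise rules the claim reads off the forcing clauses of Table~\ref{tab:psl_semantics}: $id$ because a forced antecedent $w:p$ is a forced succedent $w:p$; $\bot L$ because $w:\bot$ is never forced, so the antecedent can never be fully satisfied; and $\top R$, $\top^* R$ because $w:\top$ is always forced and $\epsilon:\munit$ is forced as $\rho(\epsilon)=\epsilon$. The purely logical rules transfer falsifiability by unfolding the semantic clause of the principal connective, the two-premise cases ($\land R$, $\limp L$, $\mand R$, $\mimp L$) branching on whether the relevant subformula is forced at the relevant world and thereby falsifying one premise. The rules $\mand L$ and $\mimp R$ introduce fresh labels $x,y$: here I would read off the existential witnesses $h_1,h_2$ supplied by the clause for $\mand$ (respectively by the failed universal for $\mimp$) and extend $\rho$ by $\rho(x)=h_1$, $\rho(y)=h_2$, which is legitimate precisely because the side condition guarantees $x,y$ are new.

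The structural rules are where the frame conditions are used. Falsifiability of the new atom of $E$ follows from \textbf{commutativity}, and that of $(x,\epsilon\simp x)$ in $U$ from \textbf{identity}. For $A$ and $A_C$, whose premises carry a fresh label $w$, I would apply \textbf{associativity} to the two atoms already known to hold, obtain the witnessing world $h_6$, set $\rho(w)=h_6$ (again licensed by freshness), and verify the remaining new atoms with the aid of commutativity. The delicate rules, which I expect to be the main obstacle, are the substitution rules $Eq_1$, $Eq_2$, $P$, $C$. For each, falsifiability of the conclusion supplies a pair of relational atoms that, via a frame condition, forces two labels to denote the \emph{same} world: \textbf{partial-determinism} gives $\rho(z)=\rho(w)$ for $P$, \textbf{cancellativity} gives $\rho(y)=\rho(w)$ for $C$, and \textbf{identity} together with commutativity gives $\rho(w)=\rho(w')$ for $Eq_1$ and $Eq_2$. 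The key observation is then that, once $\rho$ identifies the two labels of a substitution — say $\rho(w)=\rho(z)$ for the substitution $[z/w]$ of $P$ — that substitution changes no world, i.e.\ $\rho(u[z/w])=\rho(u)$ for every label $u$; hence the premise, being the conclusion with the substitution applied, is falsified by the \emph{same} model, and its single new atom is checked directly from the frame conditions.

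Finally, $cut$ is handled by observing that a model falsifying the conclusion $\myseq{\Gcal;\Gcal'}{\Gamma;\Gamma'}{\Delta;\Delta'}$ satisfies every atom of $\Gcal$ and $\Gcal'$ and forces every formula of $\Gamma$ and $\Gamma'$; a case split on whether $\Mcal,\rho(x)\Vdash A$ then falsifies the right premise (when $A$ is forced) or the left premise (when it is not). The only real bookkeeping lies in the substitution rules, where one must check both that the invoked frame condition applies to exactly the atoms present in the conclusion and that composing the substitution with $\rho$ returns $\rho$; once this is isolated as a small lemma the remaining cases are routine.
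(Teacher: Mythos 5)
Your proposal is correct and follows exactly the route the paper takes (and merely sketches): showing that every rule of $\lspslh$ preserves falsifiability upwards, with the frame conditions of identity, commutativity, associativity, partial-determinism and cancellativity discharging the structural rules, and the observation that a substitution $[z/w]$ leaves $\rho$ unchanged once the relevant frame condition forces $\rho(w)=\rho(z)$ handling $Eq_1$, $Eq_2$, $P$ and $C$. The paper omits these details and points to the analogous proof for $LS_{BBI}$; your write-up supplies them correctly.
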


\begin{proof}
  We prove that the rules of $\lspslh$ preserve falsifiability
  upwards.
The proof is straightforward so we omit the details; but refer the interested reader to a similar 
  proof for $LS_{BBI}$~\cite{hou2013}.\qed
\end{proof}



\subsection{Cut-elimination}
\label{subsec:cut-elim}

The only differences between $\lspslh$ and $LS_{BBI}$~\cite{hou2013} are the
additions of the structural rules $P$ and $C$, so we may prove cut-elimination
by the same route, which in turn follows from the usual cut-elimination
procedure for labelled sequent calculi for modal logics~\cite{negri2001}. 
We therefore delay the proofs to Appendix~\ref{sec:app1}, and simply list the necessary lemmas. 
In the sequel we use $ht(\Pi)$ to denote the height of the derivation $\Pi$. 

\begin{lemma}[Substitution]
\label{lem:subs}
If
$\Pi$ is an $\lspslh$ derivation for the sequent $\myseq{\Gcal}{\Gamma}{\Delta}$
then there is an $\lspslh$ derivation $\Pi'$ of 
the sequent $\myseq{\Gcal[y/x]}{\Gamma[y/x]}{\Delta[y/x]}$ 
such that $ht(\Pi') \leq ht(\Pi)$.
\end{lemma}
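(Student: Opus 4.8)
The plan is to prove the substitution lemma by a straightforward induction on the height $ht(\Pi)$ of the derivation. The statement to establish is that applying a uniform label substitution $[y/x]$ throughout a derivable sequent yields another derivable sequent of no greater height. I would first fix the substitution $[y/x]$ and then case-split on the last rule applied in $\Pi$.

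For the base cases, if $\Pi$ ends in a zero-premise rule ($id$, $\bot L$, $\top R$, $\top^* R$), I would check that the substituted sequent is still an instance of the same rule. This is immediate for $id$, $\bot L$, and $\top R$, since substitution preserves the matching of principal formulae on both sides. The only point requiring attention is $\top^* R$, whose principal formula is $\epsilon : \top^*$: since the side condition forbids substituting for $\epsilon$, the label $\epsilon$ survives the substitution and the rule instance is preserved. For the inductive step, each rule's premise(s) are derivations of strictly smaller height, so I can apply the induction hypothesis to their substituted forms and then re-apply the same rule. The routine observation is that label substitution commutes with the mechanical construction of premises from conclusions in every rule.

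The main obstacle, and the only genuinely delicate part, lies in the rules with eigenvariable side conditions, namely $\mand L$, $\mimp R$, $A$, and $A_C$, where certain fresh labels must not occur in the conclusion. Here a naive application of $[y/x]$ could either clash with an eigenvariable (if $y$ coincides with the fresh label) or fail to keep the freshness condition after substitution. My strategy is to first rename the offending eigenvariable in $\Pi$ to some entirely fresh label using the induction hypothesis itself — that is, apply the substitution lemma at smaller height to rename the bound label away from $x$, $y$, and all other labels in play — and only then push the substitution $[y/x]$ through. This renaming step is precisely why the statement is phrased with $ht(\Pi') \leq ht(\Pi)$ rather than equality: renaming and substituting both preserve height, so composing them does no harm to the height bound. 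I would also remark that the newly added rules $P$ and $C$ (and likewise $Eq_1$, $Eq_2$) themselves contain substitutions in their premises; for these I must verify that composing the rule's internal substitution with $[y/x]$ still yields a valid instance of the same rule, which follows from the commutativity of label substitutions acting on disjoint or appropriately ordered variables. Since $P$, $C$, $Eq_1$, and $Eq_2$ involve no eigenvariable freshness conditions, these cases reduce to elementary bookkeeping about substitution composition.

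Overall I expect the proof to be a standard labelled-calculus induction in the style of Negri~\cite{negri2001}, with the eigenvariable-renaming manoeuvre being the one step that must be stated carefully to keep the height bound intact.
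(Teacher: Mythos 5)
Your overall plan --- induction on $ht(\Pi)$ with a case analysis on the last rule, handling the eigenvariable rules $\mand L$, $\mimp R$, $A$, $A_C$ by first renaming the fresh label via the induction hypothesis --- matches the paper's, and those parts are fine. The gap is in your treatment of $P$, $C$, $Eq_1$, $Eq_2$, which you dismiss as ``elementary bookkeeping'' justified by ``the commutativity of label substitutions acting on disjoint or appropriately ordered variables.'' These are precisely the cases that constitute the paper's entire written proof, and the justification as stated does not go through: uniform label substitutions do not commute when their variables overlap, and overlap is unavoidable here because $x$ or $y$ may coincide with the very label that the rule's internal substitution eliminates.

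Concretely, take $P$ with premise $\Gcal[c/d];(a,b\simp c);\Gamma[c/d]\vdash\Delta[c/d]$ and conclusion $\Gcal;(a,b\simp c);(a,b\simp d);\Gamma\vdash\Delta$. If $x=d$, the substituted conclusion contains $(a,b\simp c)$ and $(a,b\simp y)$, so re-applying $P$ to it requires a derivation of $\Gcal[y/d][c/y];\dots$; you obtain this by invoking the induction hypothesis on the premise with the \emph{different} substitution $[c/y]$ and then verifying the identity $\Gcal[c/d][c/y]=\Gcal[y/d][c/y]$. When $y=\epsilon$ the side condition forbidding substitution for $\epsilon$ forces yet another choice of substitution ($[\epsilon/c]$ applied to the premise); when $y=c$ the two principal relational atoms become identical, the rule instance collapses, and the case reduces to admissibility of weakening --- this collapse, not your renaming manoeuvre, is the real reason the bound is $ht(\Pi')\leq ht(\Pi)$ rather than equality. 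So the $P$, $C$, $Eq_1$, $Eq_2$ cases are completable, but each demands a case split on how $x$ and $y$ interact with the rule's internal substitution and a bespoke choice of substitution for the induction hypothesis; none of this follows from commutativity.
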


\begin{lemma}[Admissibility of weakening]
If $\myseq{\Gcal}{\Gamma}{\Delta}$ is derivable in $\lspslh$, then for
all structures $\Gcal,\Gamma'$ and $\Delta'$, the sequent
$\myseq{\Gcal;\Gcal'}{\Gamma;\Gamma'}{\Delta;\Delta'}$ is derivable with
the same height in $\lspslh$.
\end{lemma}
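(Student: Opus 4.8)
The plan is to prove admissibility of weakening by induction on the height of the derivation $\Pi$ of $\myseq{\Gcal}{\Gamma}{\Delta}$. The statement to establish is that weakening by arbitrary $\Gcal'$, $\Gamma'$, $\Delta'$ preserves derivability \emph{at the same height}, so the induction must be set up to track heights carefully: for each rule, the weakened conclusion should be derivable by applying the same rule to weakened premises, which are available at no greater height by the induction hypothesis.

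First I would treat the base cases. The zero-premise rules are $id$, $\bot L$, $\top R$, $\top^* R$. For each of these the weakened sequent is again an instance of the same rule, since these rules place no constraint on the surrounding context $\Gcal,\Gamma,\Delta$; for instance $id$ remains an axiom after adding $\Gcal';\Gamma';\Delta'$ because $w:p$ still appears on both sides. So the base cases are immediate.

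For the inductive step I would work through the remaining rules, dividing them into those that do not touch labels and those that do. For the context-insensitive logical and structural rules (such as $\land L$, $\land R$, $\limp L$, $\limp R$, $\lor$, $\top^* L$, $E$, $U$) the argument is routine: apply the induction hypothesis to the premise(s) to weaken by $\Gcal';\Gamma';\Delta'$, then reapply the rule; the principal formula or relational atom of the rule is unaffected by the added material. The delicate cases are the rules with freshness side conditions and those employing label substitutions. For $\mand L$, $\mimp R$, $A$ and $A_C$ the side conditions require that certain labels ($x,y$, or $w$) not occur in the conclusion; after weakening, the added structures $\Gcal',\Gamma',\Delta'$ might mention exactly those labels, so I would first rename the offending eigenvariables in $\Pi$ to genuinely fresh labels (this renaming is justified by the Substitution Lemma~\ref{lem:subs}, which preserves height), and only then apply the induction hypothesis and reapply the rule. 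For the substitution-based rules $Eq_1$, $Eq_2$, $P$, $C$, the premise is obtained by applying a substitution $[\cdot/w]$ to the context; here weakening must be pushed through the substitution, so I would weaken the premise by the \emph{substituted} structures $\Gcal'[\cdot/w]$, etc., invoke the induction hypothesis, and observe that applying the rule recovers the desired unsubstituted weakened conclusion.

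The main obstacle is the interaction between weakening and the freshness conditions of the eigenvariable rules ($\mand L$, $\mimp R$, $A$, $A_C$): naively reapplying such a rule after weakening can violate its side condition when the new context reuses the eigenvariable. The key to overcoming this is precisely the height-preserving Substitution Lemma, which lets us first rename eigenvariables to avoid clashes without increasing derivation height, so that the overall height bound in the statement is maintained. I expect this to be the only genuinely non-routine point; everything else follows the standard pattern for labelled sequent calculi~\cite{negri2001}.
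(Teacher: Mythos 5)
Your proof is correct and follows exactly the standard route the paper relies on (it omits the details, deferring to the analogous argument for $LS_{BBI}$ and the usual method for labelled calculi \`a la Negri): induction on derivation height, with eigenvariable clashes in $\mand L$, $\mimp R$, $A$, $A_C$ resolved by the height-preserving Substitution Lemma, and weakening pushed through the substitutions in $Eq_1$, $Eq_2$, $P$, $C$ by instantiating the induction hypothesis with the substituted weakening material. No gaps.
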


\begin{lemma}[Invertibility]
\label{lem:invert}
If $\Pi$ is a cut-free $\lspslh$ derivation of the conclusion of a rule,
then there is a cut-free $\lspslh$ derivation for each premise, with
height at most $ht(\Pi).$
\end{lemma}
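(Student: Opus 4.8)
The plan is to prove height-preserving invertibility by induction on $ht(\Pi)$, but I would first observe that the rules of $\lspslh$ split into three groups needing quite different treatment, so that only one group actually requires the induction. First I would dispatch the rules whose premises contain everything in the conclusion together with some extra relational atoms or side formulae: the structural rules $E$, $A$, $U$, $A_C$, and the logical rules $\mand R$ and $\mimp L$, whose principal formula persists in both premises. For each of these the desired premise is obtained from the conclusion purely by the height-preserving admissibility of weakening: we simply adjoin the extra relational atoms (choosing the eigenvariable $w$ of $A$ and $A_C$ fresh, as the side condition permits, since weakening may introduce new labels) or the extra labelled formula ($x:A$, $y:B$, or $z:B$). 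No induction is needed here.

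Second, I would handle the rules built around a label substitution — $Eq_1$, $Eq_2$, $P$ and $C$ — using the substitution lemma (Lemma~\ref{lem:subs}). For instance, given a derivation of the conclusion $\myseq{(x,y\simp z);(x,y\simp w);\Gcal}{\Gamma}{\Delta}$ of $P$, applying the substitution $[z/w]$ via Lemma~\ref{lem:subs} yields a derivation of no greater height of $\myseq{(x,y\simp z);(x,y\simp z);\Gcal[z/w]}{\Gamma[z/w]}{\Delta[z/w]}$; since the relational atoms form a set the duplicated $(x,y\simp z)$ collapses, giving exactly the premise. The cases $C$, $Eq_1$ and $Eq_2$ are analogous, the relevant substitution sending the conclusion's relational atoms directly to those of the premise. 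Again no induction is needed.

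Third — the only genuine induction — I would treat the rules whose principal formula is consumed: $\land L$, $\land R$, $\limp L$, $\limp R$, $\top^* L$, $\mand L$ and $\mimp R$. These I would prove simultaneously invertible by induction on $ht(\Pi)$. In the base case $\Pi$ is a zero-premise rule; since the principal formula (e.g. $w:A\land B$ or $w:\munit$) is never the atom $w:p$ of $id$ nor $w:\bot$ of $\bot L$, the witnessing formulae of the axiom are untouched by the decomposition, so each premise is an instance of the same axiom. In the inductive step, if the last rule of $\Pi$ is exactly the rule being inverted acting on the principal formula, its premise is the one we want, with strictly smaller height. Otherwise the principal formula is a side formula of the last rule $r$; I would apply the induction hypothesis to each premise of $r$ (these have smaller height and still carry the principal formula) to perform the decomposition there, and then reapply $r$, obtaining the decomposed conclusion with height at most $ht(\Pi)$.

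The hard part will be the permutation cases of this third group when $r$ is an eigenvariable rule ($\mand L$, $\mimp R$, $A$, $A_C$) or a substitution rule ($P$, $C$, $Eq_1$, $Eq_2$) — and in particular $P$ and $C$, which are precisely the rules absent from $\lsbbi$. For the eigenvariable rules I must check that reapplying $r$ remains legitimate: this holds because decomposing a formula such as $w:A\land B$ into $w:A$; $w:B$ introduces no new label (and the $\top^* L$ case only adjoins the relational atom $(\epsilon,w\simp\epsilon)$ on existing labels), so $r$'s freshness side condition is preserved. For the substitution rules I must check that the decomposition commutes with the substitution carried by $r$; it does, since substitution only renames labels and leaves the formulae $A$, $B$ intact, so the principal formula $w:C$ and its decomposition are substituted uniformly. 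The height bound survives because Lemma~\ref{lem:subs} is height-preserving and the induction hypothesis already bounds the premise. Once these commutations are verified the result follows.
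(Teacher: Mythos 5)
Your proposal is correct and follows essentially the same route as the paper: the cumulative rules are inverted by height-preserving weakening, the substitution rules $Eq_1$, $Eq_2$, $P$, $C$ are inverted directly via Lemma~\ref{lem:subs}, and the remaining logical rules are handled by induction on $ht(\Pi)$ with rule permutation, the genuinely new work (relative to $\lsbbi$) being the commutation of $\top^* L$ and its kin past $P$ and $C$ — exactly the cases the paper works out in detail in its appendix.
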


\begin{lemma}[Admissibility of contraction]
If $\myseq{\Gcal;\Gcal}{\Gamma;\Gamma}{\Delta;\Delta}$ is derivable in $\lspslh$, 
then $\myseq{\Gcal}{\Gamma}{\Delta}$ is derivable with the same height in $\lspslh$.
\end{lemma}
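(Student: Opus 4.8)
The plan is to prove a single-element version of the statement---that a duplicated relational atom, a duplicated left labelled formula, or a duplicated right labelled formula may be contracted in a height-preserving way---and then recover the full multiset statement by iterating these single contractions. Each single-contraction claim is established by induction on $ht(\Pi)$, where $\Pi$ is the given cut-free derivation of the sequent carrying the duplicate, with a case analysis on the last rule $R$ applied. The base case, in which $\Pi$ ends in a zero-premise rule ($id$, $\bot L$, $\top R$, or $\top^* R$), is immediate: deleting one of the two copies leaves the principal occurrence intact, so the contracted sequent is again an instance of the same zero-premise rule and is derivable at height $0$.

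For the inductive step, when the element being contracted is neither principal nor a freshly generated label in $R$, I would apply the induction hypothesis to the premise(s)---whose heights are strictly smaller---and then reapply $R$. Here the substitution lemma (Lemma~\ref{lem:subs}) is used to rename the eigenvariables of $\mand L$, $\mimp R$, $A$ and $A_C$ away from any clash with the contracted labels, and the admissibility of weakening covers any relational atom that a structural rule would otherwise reintroduce. When the contracted element is instead a labelled formula that is \emph{principal} in a logical rule, I would appeal to invertibility (Lemma~\ref{lem:invert}) to decompose the second, non-principal copy already inside the premise without increasing height, apply the induction hypothesis to contract the resulting duplicated subformulae, and then reapply the logical rule. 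For two copies of $w : A \land B$ on the left, for instance, inverting $\land L$ on the surviving copy yields a premise containing $w : A;\, w : A;\, w : B;\, w : B$; two uses of the induction hypothesis reduce this to $w : A;\, w : B$, and a final $\land L$ restores the single $w : A \land B$.

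The main obstacle will be contracting a duplicated relational atom that is principal in one of the substitution-based structural rules $Eq_1$, $Eq_2$, $P$ and $C$, since the contraction must then be coordinated with the label substitution the rule performs. The observation that makes these cases go through is that the substitution in the premise frequently carries out the required merge for us: if $R$ is $P$ with principal atoms $(x,y\simp z)$ and $(x,y\simp w)$ and the atom to be contracted is a second copy of $(x,y\simp w)$ sitting in $\Gcal$, then the premise $\myseq{(x,y\simp z);\Gcal[z/w]}{\Gamma[z/w]}{\Delta[z/w]}$ already contains two copies of $(x,y\simp z)$; contracting these via the induction hypothesis and reapplying $P$ yields exactly the contracted conclusion at no extra height. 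The analogous merging handles $C$, $Eq_1$ and $Eq_2$, while the remaining sub-cases, where the contracted atom is distinct from both principal atoms, fall under the non-principal treatment above. Since every appeal to invertibility, substitution and the induction hypothesis is height non-increasing, the reconstructed derivation has height at most $ht(\Pi)$, giving the height-preserving bound claimed.
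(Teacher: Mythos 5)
Your proposal is correct and follows essentially the same route the paper intends: the paper does not spell out this proof but explicitly defers to the standard Negri-style argument used for $\lsbbi$, namely induction on derivation height with height-preserving invertibility (Lemma~\ref{lem:invert}) handling principal logical formulae and the substitution lemma handling eigenvariable clashes, and your treatment of the new substitution-based rules $P$ and $C$ (observing that the premise's substitution already merges the duplicated principal atom) is exactly the extra case the paper's strategy requires.
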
 

\begin{theorem}[Cut-elimination]
If $\myseq{\Gcal}{\Gamma}{\Delta}$ is derivable in $\lspslh$ then it is derivable without using the $cut$ rule.
\end{theorem}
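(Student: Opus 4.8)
The plan is to prove cut-elimination by the standard Gentzen-style strategy: we show that any single topmost application of $cut$ can be eliminated, and then appeal to this repeatedly (by induction on the number of cuts in the derivation, working on a topmost cut each time) to remove all cuts. For the elimination of a single cut we argue by a double induction on the pair $(|A|, ht(\Pi_1)+ht(\Pi_2))$, where $A$ is the cut formula, $|A|$ is its size (the \emph{cut rank}), and $\Pi_1,\Pi_2$ are the derivations of the two premises. The outer induction is on $|A|$ and the inner induction is on the sum of the heights of the two subderivations feeding the cut. This is exactly the route used for $LS_{BBI}$~\cite{hou2013}, which in turn follows the labelled-modal cut-elimination method of Negri~\cite{negri2001}, so the infrastructure we need is precisely the four lemmas already stated: Substitution, admissibility of weakening, Invertibility, and admissibility of contraction.

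The case analysis proceeds as usual. First I would dispose of the cases where the cut formula is not principal in at least one premise: here the last rule applied in that premise is permuted \emph{below} the cut, reducing the height sum and invoking the inner induction hypothesis; the admissibility of weakening is used to reconcile the relational and formula contexts $\Gcal,\Gcal',\Gamma,\Gamma',\Delta,\Delta'$ that the $cut$ rule merges. Special care is needed when the rule to be permuted has a variable side condition (the eigenvariable conditions on $\mand L$, $\mimp R$, $A$, and $A_C$): before permuting I would rename the fresh variable using the Substitution lemma so that it does not clash with the other premise's context. When the cut formula \emph{is} principal in both premises, it is a logical connective and I reduce to cuts on its immediate subformulae, which are strictly smaller, so the outer induction on $|A|$ applies; for the additive connectives this is textbook, and for the multiplicatives $\mand$ and $\mimp$ the relevant relational atom is introduced on both sides and the reduction goes through as in BBI, using Invertibility to extract the needed premises. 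Contraction-admissibility is the workhorse that lets us merge duplicated relational atoms and labelled formulae produced when a subformula is cut against material that already occurs in a context.

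The crucial point to check is that adding the two new structural rules $P$ and $C$ does not break any of this. These are \emph{purely structural} rules: they act only on the relational-atom zone $\Gcal$ and, via label substitution $[z/w]$ respectively $[y/w]$, on the entire sequent, but they neither introduce nor analyse any logical connective. Consequently they can never be the rule making the cut formula principal, so they only ever arise in the ``cut formula not principal'' permutation cases. The one thing genuinely requiring attention is that $P$ and $C$ carry a label substitution into the conclusion, so permuting a $cut$ past an instance of $P$ or $C$ must be done compatibly with that substitution. I expect this to be the main obstacle, and the key observation that resolves it is precisely the Substitution lemma (Lemma~\ref{lem:subs}): applying the same substitution $[z/w]$ (or $[y/w]$) to the \emph{other} premise of the cut, without increasing its height, lets the cut be reassembled after the structural rule has fired. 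Because Lemma~\ref{lem:subs} is height-preserving, this rearrangement does not increase the height sum, so the inner induction hypothesis still applies and the argument terminates. Once every such permutation is verified, the full proof is a routine assembly of these cases, and we conclude by the outer induction that every derivation can be transformed into a cut-free one. The detailed case-by-case verification is relegated to Appendix~\ref{sec:app1}.
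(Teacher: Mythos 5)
Your proposal follows essentially the same route as the paper: the paper's proof likewise defers to the $\lsbbi$ cut-elimination argument (itself in the Negri style, via the Substitution, weakening, Invertibility and contraction lemmas) and notes that the only new cases are the structural rules $P$ and $C$, handled like $Eq_1$ — i.e.\ exactly your observation that one permutes the cut past the substitution-carrying structural rule by applying the same substitution to the other premise via the height-preserving Substitution lemma. Your write-up is a correct and somewhat more explicit rendering of the same argument.
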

\begin{proof}
The proof follows the same structure as that for $\lsbbi$, utilising
the lemmas above. 
The additional cases we need to consider are those involving the rules 
$P$ and $C$;
their treatment is similar to that for $Eq_1$ in the proof for $\lsbbi$~\cite{hou2013}.\qed
\end{proof}

Since partial-determinism and cancellativity are not axiomatisable in
BBI~\cite{brotherston2013}, cut-elimination does not immediately yield the
completeness of $\lspslh$; we prove completeness of our calculus in the next section.



\section{Completeness of $\lspslh$}
\label{sec:complete_lspslh}
We prove the completeness of $\lspslh$ with respect to the Kripke relational
semantics by a counter-model construction. A standard way to construct a counter-model
for an unprovable sequent is to show that it can be saturated 
by repeatedly applying all applicable inference rules
to reach a limit sequent where a counter-model can be constructed. 
In adopting such a counter-model construction strategy to $\lspslh$ we
encounter difficulty in formulating the saturation conditions for rules
involving label substitutions.
We therefore adopt the approach of H\'{o}u et al~\cite{hou2013}, using an
intermediate system without explicit use of label substitutions, but
where equivalences between labels are captured via an entailment $\vdash_E$.

\subsection{The intermediate system $\ilspslh$}
\label{subsec:interm}
We introduce an intermediate system where rules with substitutions ($Eq_1$, $Eq_2$, $P$, $C$) are isolated  into an equivalence entailment $\vdash_E$, so that the resultant calculus does
not involve substitutions.

Let $r$ be an instance of a structural rule in which the substitution
used is $\theta$: this is the identity
substitution except when $r$ is $Eq_1$, $Eq_2$, $P$ or $C$.  We can view $r$
(upwards) as a function that takes a set of relational atoms (in the
conclusion of the rule) and outputs another set (in the premise). We
write $r(\Gcal,\theta)$
for the output relational atoms of an instance of $r$ with
substitution $\theta$ and with conclusion containing $\Gcal$.  Let
$\sigma$ be a sequence of instances of structural rules
$[r_1(\Gcal_1,\theta_1);\cdots;r_n(\Gcal_n,\theta_n)]$.  Given a set
of relational atoms $\Gcal$, the result of the (backward) application
of $\sigma$ to $\Gcal$, denoted by $\Scal(\Gcal, \sigma)$, is defined
as below, where $^\smallfrown$ is used for sequence concatenation:
\begin{eqnarray*}
\Scal(\Gcal, \sigma)
 =
 \left\{
     \begin{array}{ll}
      \Gcal & \mbox{if } \sigma = [~] \\
      \Scal(\Gcal \theta \cup r(\Gcal',\theta), \sigma') 
            & \mbox{if }
             \Gcal' \subseteq \Gcal \mbox{ and }\\ 
            & \sigma = [r(\Gcal',\theta)]^\smallfrown\sigma'
     \\
       \mbox{undefined} & \mbox{ otherwise}
     \end{array}
 \right.
\end{eqnarray*}

Given $\sigma = [r_1(\Gcal_1,\theta_1);\cdots;r_n(\Gcal_n,\theta_n)]$, let
$\subst(\sigma)$ be the composite substitution $\theta_1 \circ
\cdots \circ \theta_n$, where $t(\theta_1\circ \theta_2)$ means $(t\theta_1)\theta_2$. 
We write $s \equiv t$ to mean that $s$ and $t$ are syntactically equal. 

\begin{definition}[Equivalence entailment]
\label{dfn:equiv-entail}
Let $\Gcal$ be a set of relational atoms. The entailment 
$\Gcal\vdash_E (a = b)$ holds iff there exists a sequence $\sigma$ of
$Eq_1,Eq_2,P,C$ applications s.t. $\Scal(\Gcal,\sigma)$ is defined,
and $a\theta \equiv b\theta$, where $\theta = subst(\sigma)$.
\end{definition}

Since substitution is no longer in the calculus, some inference rules
that involve matching two equal labels need to be changed. We define the
intermediate system $\ilspslh$ as $\lspslh$ minus 
$\{Eq_1,Eq_2,P,C\}$, 
with certain rules changed following Fig.~\ref{fig:ILSPSLH}. Note that the
equivalence entailment $\vdash_E$ is not a premise, but rather a condition of
the rules.

\begin{figure*}[!ht]
\footnotesize
\centering
\begin{tabular}{cc}
\AxiomC{$\mathcal{G} \vdash_E (w_1 = w_2)$}
\RightLabel{\tiny $id$}
\UnaryInfC{$\myseq{\mathcal{G}}{\Gamma;w_1:p}{w_2:p;\Delta}$}
\DisplayProof
&
\AxiomC{$\mathcal{G} \vdash_E (w = \epsilon)$}
\RightLabel{\tiny $\top^* R$}
\UnaryInfC{$\myseq{\mathcal{G}}{\Gamma}{ w:\top^*;\Delta}$}
\DisplayProof\\[15px]
\multicolumn{2}{c}{
\AxiomC{$\myseq{(x,w\simp x');(y,y\simp w);(x,y\simp x');\Gcal}{\Gamma}{\Delta}$}
\RightLabel{\tiny $A_C$}
\UnaryInfC{$\myseq{(x,y\simp x');\Gcal}{\Gamma}{\Delta}$}
\DisplayProof
}\\[15px]
\multicolumn{2}{c}{
\AxiomC{$\myseq{(u,w \simp z);(y,v \simp w);(x,y \simp z);(u,v \simp x');\Gcal}{\Gamma}{\Delta}$}
\RightLabel{\tiny $A$}
\UnaryInfC{$\myseq{(x,y \simp z);(u,v \simp x');\Gcal}{\Gamma}{\Delta}$}
\DisplayProof
}\\[15px]
\multicolumn{2}{c}{
\AxiomC{$\myseq{(x,y\simp w');\mathcal{G}}{\Gamma}{x:A;w:A\mand B;\Delta}$}
\AxiomC{$\myseq{(x,y\simp w');\mathcal{G}}{\Gamma}{y:B;w:A\mand B;\Delta}$}
\RightLabel{\tiny $\mand R$}
\BinaryInfC{$\myseq{(x,y\simp w');\mathcal{G}}{\Gamma}{w:A\mand B;\Delta}$}
\DisplayProof
}\\[15px]
\multicolumn{2}{c}{
\AxiomC{$\myseq{(x,w'\simp z);\mathcal{G}}{\Gamma;w:A\mimp B}{x:A;\Delta}$}
\AxiomC{$\myseq{(x,w'\simp z);\mathcal{G}}{\Gamma;w:A\mimp B; z:B}{\Delta}$}
\RightLabel{\tiny $\mimp L$}
\BinaryInfC{$\myseq{(x,w'\simp z);\mathcal{G}}{\Gamma;w:A\mimp B}{\Delta}$}
\DisplayProof
}\\[15px]
\multicolumn{2}{l}{Side conditions:}\\[5px]
\end{tabular}
\begin{tabular}{l@{\extracolsep{1cm}}l}
\multicolumn{2}{l}{In $A,A_C$, the label $w$ does not occur in the conclusion.}\\
In $A_C$, $(x,y\simp x');\Gcal\vdash_E (x = x')$
&
In $A$, $(x,y \simp z);(u,v \simp x');\mathcal{G}\vdash_E (x = x')$\\
In $\mand R$, $(x,y\simp w');\mathcal{G} \vdash_E (w = w')$
&
In $\mimp L$, $(x,w'\simp z);\mathcal{G} \vdash_E (w = w')$
\end{tabular}
\caption{Changed rules in the intermediate system $\ilspslh$.}
\label{fig:ILSPSLH}
\end{figure*}

Given a set of relational atoms $\Gcal$, we define the relation
$=_\Gcal$ as follows: $a =_\Gcal b$ iff $\Gcal \vdash_E (a = b).$
We show next that $=_\Gcal$ is in fact an equivalence relation.
This equivalence relation will be useful in our counter-model construction later.

\begin{lemma}
\label{lm:eq_concat}
Let $\Gcal$ be a set of relational atoms, if $\Gcal\vdash_E (a = b)$
by applying $\sigma_1$ and $\Gcal\vdash_E(c = d)$ by applying
$\sigma_2$, then $\exists \sigma_3$
s.t. $\Scal(\Gcal,\sigma_1)\vdash_E (c\theta = d\theta)$ by
$\sigma_3$, where $\theta = subst(\sigma_1)$.
\end{lemma}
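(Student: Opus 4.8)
The plan is to reduce the statement to a \emph{substitution-stability} property of the entailment $\vdash_E$ and then prove that property by induction. First I would record a simplifying observation: for each of the four rules $Eq_1,Eq_2,P,C$ the output atoms satisfy $r(\Gcal',\theta)=\Gcal'\theta$, so that $\Gcal\theta\cup r(\Gcal',\theta)=\Gcal\theta$ whenever $\Gcal'\subseteq\Gcal$; unfolding the definition of $\Scal$ then gives $\Scal(\Gcal,\sigma)=\Gcal\,\subst(\sigma)$ whenever $\Scal(\Gcal,\sigma)$ is defined. In particular $\Scal(\Gcal,\sigma_1)=\Gcal\theta$ with $\theta=\subst(\sigma_1)$, and the hypothesis $\Gcal\vdash_E(a=b)$ by $\sigma_1$ is used \emph{only} to ensure $\Scal(\Gcal,\sigma_1)$ is defined; the labels $a,b$ play no further role. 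Thus it suffices to prove: if $\Gcal\vdash_E(c=d)$ then $\Gcal\theta\vdash_E(c\theta=d\theta)$, which I would establish for an arbitrary label substitution $\theta$ (one fixing $\epsilon$), since the induction below instantiates it at substitutions other than $\subst(\sigma_1)$.

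I would prove this claim by induction on the length of a witnessing sequence $\sigma_2=[r(\Gcal',\eta)]^\smallfrown\sigma_2'$ for $\Gcal\vdash_E(c=d)$. The base case $\sigma_2=[\,]$ is immediate: then $c\equiv d$, hence $c\theta\equiv d\theta$, and $\sigma_3=[\,]$ works. For the inductive step, write the rule's substitution as $\eta=[t/w]$ with $w$ a label variable. Peeling off the first step shows $\Gcal\eta\vdash_E(c\eta=d\eta)$ via the shorter $\sigma_2'$, using $\subst(\sigma_2)=\eta\circ\subst(\sigma_2')$. The goal is then to \emph{transport} this entailment across $\theta$.

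The heart of the argument is a commutation identity. I would form $\sigma_3=[r^\ast(\Gcal'\theta,\eta')]^\smallfrown\sigma_3'$ whose first step is an instance of one of $Eq_1,Eq_2,P,C$ applicable to $\Gcal\theta$ that records the equality $t\theta=w\theta$; concretely $\eta'$ is chosen so that $(t\theta)\eta'\equiv(w\theta)\eta'$. Putting $\theta''=\theta\circ\eta'$, the only label on which $\eta\circ\theta''$ and $\theta''$ could differ is $w$, and there $(w\eta)\theta''\equiv(t\theta)\eta'\equiv(w\theta)\eta'\equiv w\theta''$, so $\eta\circ\theta''=\theta''$. Consequently $(\Gcal\eta)\theta''=(\Gcal\theta)\eta'$ and likewise $(c\eta)\theta''=(c\theta)\eta'$ and $(d\eta)\theta''=(d\theta)\eta'$. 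Applying the induction hypothesis to $\Gcal\eta\vdash_E(c\eta=d\eta)$ with the transported substitution $\theta''$ produces $\sigma_3'$ deriving $(c\theta)\eta'=(d\theta)\eta'$ from $(\Gcal\theta)\eta'$; prepending the first step and using $\subst(\sigma_3)=\eta'\circ\subst(\sigma_3')$ yields $c\theta\,\subst(\sigma_3)\equiv d\theta\,\subst(\sigma_3)$, i.e. $\Gcal\theta\vdash_E(c\theta=d\theta)$.

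The step I expect to be the main obstacle is exhibiting a \emph{legal} first rule instance $r^\ast(\Gcal'\theta,\eta')$, because the side condition forbids substituting for $\epsilon$. When $w\theta$ is a label variable I take $\eta'=[t\theta/w\theta]$ and $r^\ast=r$. When $w\theta=\epsilon$ the rule's own orientation is illegal, but then $w\theta\neq t\theta$ forces $t\theta$ to be a variable, so I instead substitute $t\theta$ by $\epsilon$ via a \emph{re-orientation} of the principal atoms: this is self-contained for $P$ and $C$, while for the identity rules it requires switching $Eq_1\leftrightarrow Eq_2$. When $w\theta\equiv t\theta$ the first step collapses and the induction hypothesis applies directly with $\theta''=\theta$. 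Checking, for each of the four rules, that one of these re-orientations is a valid instance whose principal atoms are precisely $\Gcal'\theta\subseteq\Gcal\theta$ is the delicate part; the remaining manipulations of composite substitutions are routine. Finally, this lemma is exactly what is needed for transitivity of $=_\Gcal$: given $a=_\Gcal b$ by $\sigma_1$ and $b=_\Gcal c$ by $\sigma_2$, it lets one append $\sigma_3$ to $\sigma_1$ to witness $a=_\Gcal c$.
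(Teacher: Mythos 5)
Your proposal is correct, and its opening reduction is exactly the paper's: observe that $\Scal(\Gcal,\sigma_1)=\Gcal\theta$ and reduce the lemma to the substitution-stability claim that $\Gcal\vdash_E(c=d)$ implies $\Gcal\theta\vdash_E(c\theta=d\theta)$. Where you diverge is in how that claim is discharged. The paper dispatches it in one line as ``a consequence of the substitution Lemma~\ref{lem:subs}'', i.e.\ it reuses the already-proved closure of $\lspslh$ derivations under label substitution (whose appendix proof contains precisely the case analyses for $P$ and $C$ on whether the substituted label collides with a principal label or with $\epsilon$). You instead give a self-contained induction on the length of the witnessing sequence, with an explicit commutation identity $\eta\circ(\theta\circ\eta')=\theta\circ\eta'$ and a case split on whether $w\theta$ is a variable, is $\epsilon$ (forcing a re-orientation of the rule instance, e.g.\ $Eq_1\leftrightarrow Eq_2$), or already equals $t\theta$ (collapsing the step). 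The paper's route is shorter and avoids duplicating work, but is slightly loose: Lemma~\ref{lem:subs} is stated for full sequent derivations, whereas a $\vdash_E$ witness is a bare sequence of structural-rule applications on relational atoms, so strictly one must re-run the substitution argument in that setting --- which is exactly what your induction does. Your version also makes explicit the bookkeeping facts the paper leaves implicit, namely $r(\Gcal',\theta)=\Gcal'\theta$ and hence $\Scal(\Gcal,\sigma)=\Gcal\,\subst(\sigma)$, and correctly generalises the claim to arbitrary $\epsilon$-fixing substitutions so that the induction hypothesis can be instantiated at the transported substitution $\theta''$. In short: same decomposition, but you prove the key step directly rather than by citation, trading brevity for rigour.
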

\begin{proof}
Note that $\Scal(\Gcal,\sigma_1) = \Gcal\theta$. So essentially we need to show that if $\Gcal \vdash_E (c = d)$, then $\Gcal\theta \vdash_E (c\theta = d\theta)$. This is a consequence of the substitution Lemma~\ref{lem:subs}.\qed
\end{proof}

\begin{lemma}
\label{lm:vdashe_eq}
Given a set of relational atoms $\Gcal$, the relation $=_\Gcal$ 
is an equivalence relation on the set of labels.
\end{lemma}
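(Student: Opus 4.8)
The plan is to verify the three defining properties of an equivalence relation — reflexivity, symmetry, and transitivity — directly from Definition~\ref{dfn:equiv-entail}, with essentially all of the real work going into transitivity, which I would reduce to the already-established Lemma~\ref{lm:eq_concat}.

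For reflexivity, given any label $a$ I would take the empty sequence $\sigma = [\,]$. Then $\Scal(\Gcal, [\,]) = \Gcal$ is defined and $\subst([\,])$ is the identity substitution $\theta$, so trivially $a\theta \equiv a\theta$; hence $\Gcal \vdash_E (a = a)$, i.e. $a =_\Gcal a$. For symmetry, suppose $a =_\Gcal b$, witnessed by some $\sigma$ with $\theta = \subst(\sigma)$ and $a\theta \equiv b\theta$. Since syntactic equality $\equiv$ is itself symmetric, $b\theta \equiv a\theta$, so the very same $\sigma$ witnesses $\Gcal \vdash_E (b = a)$, giving $b =_\Gcal a$.

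The interesting case is transitivity. Suppose $a =_\Gcal b$ via a sequence $\sigma_1$ (so $\theta_1 = \subst(\sigma_1)$ satisfies $a\theta_1 \equiv b\theta_1$) and $b =_\Gcal c$ via a sequence $\sigma_2$. Applying Lemma~\ref{lm:eq_concat} to these two entailments, instantiating its second entailment to $\Gcal \vdash_E (b = c)$, yields a sequence $\sigma'$ with $\theta' = \subst(\sigma')$ such that $\Scal(\Gcal, \sigma_1) \vdash_E (b\theta_1 = c\theta_1)$; unfolding this and using $\Scal(\Gcal, \sigma_1) = \Gcal\theta_1$, we get that $\Scal(\Gcal\theta_1, \sigma')$ is defined and $b\theta_1\theta' \equiv c\theta_1\theta'$. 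I would then take $\sigma_3 = \sigma_1 {}^\smallfrown \sigma'$. Two routine facts about $\Scal$ and $\subst$ finish the argument: first, $\Scal$ decomposes over concatenation, so $\Scal(\Gcal, \sigma_3) = \Scal(\Scal(\Gcal, \sigma_1), \sigma') = \Scal(\Gcal\theta_1, \sigma')$, which is defined; second, $\subst(\sigma_3) = \theta_1 \circ \theta'$. Writing $\theta_3 = \theta_1 \circ \theta'$, I then chain $a\theta_3 \equiv (a\theta_1)\theta' \equiv (b\theta_1)\theta' \equiv (c\theta_1)\theta' \equiv c\theta_3$, where the second step uses $a\theta_1 \equiv b\theta_1$, the third uses $b\theta_1\theta' \equiv c\theta_1\theta'$, and transitivity of $\equiv$ closes the chain. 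Hence $\sigma_3$ witnesses $\Gcal \vdash_E (a = c)$, i.e. $a =_\Gcal c$.

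The main obstacle is transitivity, and specifically the need to certify that the concatenated sequence $\sigma_1 {}^\smallfrown \sigma'$ is genuinely a legitimate witness: that backward application $\Scal$ distributes over sequence concatenation (so definedness of the two pieces implies definedness of the whole) and that $\subst$ is multiplicative over concatenation. Both follow by a straightforward induction on the length of $\sigma_1$ from the recursive definition of $\Scal$ and of $\subst$, but they are exactly the bookkeeping facts that must not be skipped, since without the definedness claim the composite sequence need not satisfy the side condition of Definition~\ref{dfn:equiv-entail}. Everything else reduces to the reflexivity, symmetry, and transitivity of syntactic equality $\equiv$ together with Lemma~\ref{lm:eq_concat}.
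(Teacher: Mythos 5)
Your proof is correct and follows essentially the same route as the paper's: reflexivity via the empty sequence, symmetry from the symmetry of syntactic equality under the same witnessing sequence, and transitivity via Lemma~\ref{lm:eq_concat}. The only difference is that you spell out the bookkeeping for the concatenated witness $\sigma_1{}^\smallfrown\sigma'$ (definedness of $\Scal$ over concatenation and multiplicativity of $\subst$), which the paper's one-line transitivity argument leaves implicit.
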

 
The intermediate system $\ilspslh$ is equivalent to $\lspslh$, i.e.,
every sequent provable in $\ilspslh$ is also provable
in $\lspslh$, and vice versa. 
This connection is easy to make, as is shown by H\'ou et al.~\cite{hou2013}. 
Properties such as contraction admissibility, closure
under substitution etc. also hold for $\ilspslh$.

\begin{lemma}
\label{lm:interm}
The intermediate 
labelled calculus $\ilspslh$ is equivalent to $\lspslh$.
\end{lemma}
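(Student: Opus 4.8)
The plan is to prove the two inclusions separately, in each case by induction on the height of a derivation, treating the two calculi as differing only in \emph{how} label identifications are recorded: $\lspslh$ performs them eagerly through the substitution rules $\{Eq_1,Eq_2,P,C\}$, whereas $\ilspslh$ defers them into the side conditions $\vdash_E$. The logical and structural rules common to both calculi ($\bot L$, $\top^* L$, $\top R$, $\land L$, $\land R$, $\limp L$, $\limp R$, $\mand L$, $\mimp R$, $E$, $U$) pass through either induction unchanged, so the real work concerns the six rules whose $\ilspslh$ form carries a $\vdash_E$ condition (Fig.~\ref{fig:ILSPSLH}) and the four substitution rules of $\lspslh$.

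For the direction $\ilspslh \Rightarrow \lspslh$, I would show that every $\ilspslh$ rule instance is derivable in $\lspslh$. The only rules needing attention are those with a side condition $\Gcal \vdash_E (a = b)$. By Definition~\ref{dfn:equiv-entail} this condition supplies a concrete sequence $\sigma$ of $Eq_1,Eq_2,P,C$ instances with $\Scal(\Gcal,\sigma)$ defined and $a\theta \equiv b\theta$ for $\theta = \subst(\sigma)$. In $\lspslh$ I apply exactly the substitution rules listed in $\sigma$, backwards from the conclusion; this is always legal because $\Scal(\Gcal,\sigma)$ being defined certifies that each required principal atom is present at each step. After $\sigma$ the two labels coincide syntactically, so the conclusion has become an ordinary $\lspslh$ instance of the corresponding rule: for the axioms $id$ and $\top^* R$ the resulting sequent is an $\lspslh$ axiom outright, while for $A$, $A_C$, $\mand R$, $\mimp L$ its premises match the $\theta$-images of the original $\ilspslh$ premises, up to the extra relational atoms introduced along $\sigma$. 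These premises are supplied by the induction hypothesis together with the Substitution Lemma~\ref{lem:subs} and admissibility of weakening.

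For the converse $\lspslh \Rightarrow \ilspslh$, I would show that every $\lspslh$ rule is admissible in $\ilspslh$. Each $\lspslh$ instance of $id$, $\top^* R$, $A$, $A_C$, $\mand R$, $\mimp L$ is just the special case of its $\ilspslh$ counterpart in which the two labels are already syntactically equal; reflexivity of $=_\Gcal$ (Lemma~\ref{lm:vdashe_eq}, via the empty sequence $\sigma = [\,]$) discharges the side condition, so these cases are immediate. The content lies in showing that the four substitution rules are admissible in $\ilspslh$: from an $\ilspslh$ proof of the substituted premise of, say, $P$ --- which is $S\theta$ with $\theta = [z/w]$ --- I must build an $\ilspslh$ proof of the conclusion $S$, whose context additionally contains $(x,y\simp w)$ and hence already satisfies $\Gcal_S \vdash_E (z = w)$. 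Intuitively the conclusion only re-separates two labels that $\ilspslh$ regards as equivalent, so the proof of $S\theta$ should lift to $S$; making this precise is where Lemma~\ref{lm:eq_concat} is used, to transport every $\vdash_E$ side condition appearing in the proof of $S\theta$ back across the identification $[z/w]$ and keep it valid in the richer contexts that mention $w$.

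I expect this last step --- admissibility of the substitution rules in $\ilspslh$, that is, lifting a proof of $S\theta$ to a proof of $S$ while preserving all deferred equalities --- to be the main obstacle, since it is the only point at which the two representations of label identification genuinely interact; the remaining cases are bookkeeping. The whole argument runs parallel to the corresponding equivalence for $LS_{BBI}$ established by H\'{o}u et al.~\cite{hou2013}, from which the treatment of $P$ and $C$ is inherited by direct analogy with that of $Eq_1$ and $Eq_2$.
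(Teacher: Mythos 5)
Your proposal is correct and follows essentially the route the paper intends: the paper itself gives no detailed argument, deferring to the analogous equivalence for $LS_{BBI}$ in H\'{o}u et al, and your two-direction induction --- unfolding each $\vdash_E$ side condition into an explicit backward sequence of $Eq_1,Eq_2,P,C$ applications for one inclusion, and absorbing the substitution rules into $\vdash_E$ (the ``lift $S\theta$ to $S$'' step, using Lemma~\ref{lm:eq_concat}) for the other --- is exactly that argument with $P$ and $C$ handled by direct analogy with $Eq_1$ and $Eq_2$. You also correctly identify the admissibility of the substitution rules in $\ilspslh$ as the only non-routine step.
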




\subsection{Counter-model construction}
\label{subsec:counter_model_constr}

We now give a counter-model construction procedure for $\ilspslh$
which, by Lemma~\ref{lm:interm}, applies to $\lspslh$ as well. 
In the construction, we assume that labelled sequents
such as $\myseq{\Gcal}{\Gamma}{\Delta}$ are built from \textbf{sets}
$\Gcal,\Gamma,\Delta$ rather than \textbf{multisets}. This is harmless since
contraction is admissible in $\lspslh$ (and thus also in $\ilspslh$).
Detailed proofs in this section can be found in Appendix~\ref{sec:app2}.

As the counter-model construction involves infinite sets and sequents,
we extend the definition of $\vdash_E$ appropriately as below.

\begin{definition}
A (possibly infinite) set $\Gcal$ of relational atoms satisfies 
$\Gcal\vdash_E (x = y)$ iff $\Gcal_f\vdash_E (x = y)$ for some finite $\Gcal_f\subseteq \Gcal$.
\end{definition}

Given a set of relational atoms $\Gcal$, 
the equivalence relation $=_\Gcal$ partitions $\Lcal$ into
equivalence classes $[a]_\Gcal$ for each label $a \in \Lcal$:
$$[a]_\Gcal = \{ a' \in \Lcal \mid a =_\Gcal a' \}.$$


The counter-model procedure is essentially a procedure to saturate a sequent
by applying all applicable rules repeatedly. The 
aim
is to obtain 
an infinite saturated sequent from which a counter-model can be extracted. 
We first define a list of desired properties of such an infinite sequent which
would allow the counter-model construction. This is given in the following definition. 

\begin{definition}[Hintikka sequent]
\label{definition:hintikka_seq}
A labelled sequent $\myseq{\Gcal}{\Gamma}{\Delta}$ is a {\em Hintikka sequent}
if it satisfies the following
conditions for any formulae $A,B$ and any labels $a,a',b,c,d,e,z$:
\begin{enumerate} 
\item It is not the case that $a:A\in \Gamma$, $b:A\in\Delta$ and $a =_\Gcal b.$
\item If $a:A\land B\in \Gamma$ then $a:A\in \Gamma$ and $a:B\in \Gamma.$
\item If $a:A\land B\in \Delta$ then $a:A\in \Delta$ or $a:B\in \Delta.$
\item If $a:A\limp B\in \Gamma$ then $a:A\in \Delta$ or $a:B\in\Gamma.$
\item If $a:A\limp B\in \Delta$ then $a:A\in \Gamma$ and $a:B\in\Delta.$
\item If $a:\top^*\in \Gamma$ then $a =_\Gcal \epsilon.$
\item If $a:\top^* \in\Delta$ then $a \not =_\Gcal \epsilon.$

\item If $z:A\mand B\in \Gamma$ then $\exists x,y,z'$ s.t. 
$(x,y\simp z')\in\Gcal$, $z =_\Gcal z'$, $x:A\in\Gamma$ and $y:B\in\Gamma.$

\item If $z:A\mand B\in \Delta$ then $\forall x,y,z'$ 
if $(x,y\simp z')\in\Gcal$ and $z =_\Gcal z'$ then $x:A\in\Delta$ or $y:B\in\Delta.$

\item If $z:A\mimp B\in \Gamma$ then $\forall x,y,z'$ 
if $(x,z'\simp y)\in\Gcal$ and $z =_\Gcal z'$, then $x:A\in\Delta$ or $y:B\in\Gamma.$

\item If $z:A\mimp B\in \Delta$ then $\exists x,y,z'$ s.t. 
$(x,z'\simp y)\in\Gcal$, $z =_\Gcal z'$, $x:A\in\Gamma$ and $y:B\in\Delta.$

\item For any label $m\in \Lcal$, $(m,\epsilon\simp m)\in\Gcal.$

\item If $(a,b\simp c)\in\Gcal$ then $(b,a\simp c)\in\Gcal.$

\item If $(a,b\simp c)\in\Gcal$ and $(d,e\simp a')\in\Gcal$ and $a =_\Gcal a'$, 
then $\exists f,f'$ s.t. $(d,f\simp c)\in\Gcal$, $(b,e\simp
f')\in\Gcal$ and $f =_\Gcal f'$.
\item $a : \bot \not \in \Gamma$ and $a: \top \not \in \Delta.$
\end{enumerate}
\end{definition}

The next lemma 
shows that
a Hintikka
sequent gives a $\psl$ Kripke relational frame which is a (counter-)model of the formulae in the sequent.

\begin{lemma}
\label{lem:hintikka_sat}
Every Hintikka sequent is falsifiable. 
\end{lemma}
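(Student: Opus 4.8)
The plan is to take a Hintikka sequent $\myseq{\Gcal}{\Gamma}{\Delta}$ and build from it an extended PASL Kripke relational model that falsifies it, i.e.\ that makes every labelled formula in $\Gamma$ true, every labelled formula in $\Delta$ false, and every relational atom in $\Gcal$ hold. The worlds of the model will be the equivalence classes $[a]_\Gcal$ under $=_\Gcal$ (Lemma~\ref{lm:vdashe_eq} guarantees this is an equivalence relation, so the quotient is well defined); I take $H = \{[a]_\Gcal \mid a \in \Lcal\}$, distinguished world $\epsilon^{\Mcal} = [\epsilon]_\Gcal$, ternary relation $R([a]_\Gcal,[b]_\Gcal,[c]_\Gcal)$ iff there exist representatives with $(a',b'\simp c')\in\Gcal$, and valuation $\val(p) = \{[a]_\Gcal \mid a : p \in \Gamma\}$. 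The label mapping is $\rho(a) = [a]_\Gcal$, which sends $\epsilon$ to $[\epsilon]_\Gcal$ as required.

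\textbf{First} I would check that $R$ and $\val$ are well defined on equivalence classes, which is where conditions~13 and~14 of Def.~\ref{definition:hintikka_seq} do their work together with Lemma~\ref{lm:eq_concat}. \textbf{Next} I would verify that $(H,R,\epsilon^{\Mcal})$ is genuinely a PASL Kripke relational frame, discharging each of the five frame conditions from the Hintikka conditions: identity from~12 (and the converse direction from the definition of $=_\Gcal$), commutativity from~13, associativity from~14, and crucially partial-determinism and cancellativity \emph{for free} --- these hold automatically because $R$ is defined modulo $=_\Gcal$, and the substitution rules $P$ and $C$ folded into $\vdash_E$ are exactly what forces the relevant labels into the same equivalence class. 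This is the payoff of having isolated the substitution rules into the entailment $\vdash_E$ rather than leaving them as explicit structural rules.

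\textbf{The heart} of the argument is the forcing lemma: for every formula $A$ and label $a$, if $a:A\in\Gamma$ then $\Mcal,[a]_\Gcal\Vdash A$, and if $a:A\in\Delta$ then $\Mcal,[a]_\Gcal\not\Vdash A$. This goes by induction on the structure of $A$, with the two clauses proved simultaneously. The propositional cases use conditions~1--5 and~15; the $\top^*$ cases use~6 and~7 against the definition $\Mcal,h\Vdash\top^*$ iff $h=[\epsilon]_\Gcal$; and the multiplicative cases use the existential conditions~8 and~11 to supply the witnessing relational atoms and the universal conditions~9 and~10 to rule out satisfaction, in each case appealing to $=_\Gcal$ to move between the label that appears in $\Gcal$ and the one in the formula. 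Condition~1 is what makes the base case consistent (no atom sits on both sides at $=_\Gcal$-equal labels), and it is used again implicitly to keep the two inductive clauses from colliding.

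\textbf{The main obstacle} I anticipate is the multiplicative cases of the forcing lemma, specifically threading the $=_\Gcal$ equivalence correctly through the witnesses. For $\mand$ on the left (condition~8) the relational atom $(x,y\simp z')$ carries a label $z'$ that is only $=_\Gcal$-equal to the subscript $z$, not identical to it, so I must check that $R([x]_\Gcal,[y]_\Gcal,[z]_\Gcal)$ really holds --- which it does precisely because $R$ is defined up to $=_\Gcal$ and $z=_\Gcal z'$. The symmetric care is needed for $\mimp$ in $\Delta$ (condition~11), and for the universal clauses~9 and~10 I must confirm that ranging over all $(x,y\simp z')$ with $z=_\Gcal z'$ in the Hintikka condition matches ranging over all triples with $R([x],[y],[z])$ in the semantics; the two quantifier ranges coincide exactly because of how $R$ was defined on classes. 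Once the forcing lemma is established, falsifiability of the whole sequent is immediate from the definition of Sequent Falsifiability, and the proof is complete.
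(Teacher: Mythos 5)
Your construction is exactly the one the paper uses: quotient the labels by $=_\Gcal$, define $\simp_\Gcal$ via existence of representatives in $\Gcal$, read off the valuation from $\Gamma$, verify the five frame conditions from Hintikka conditions 12--14 plus the $P$/$C$ rules inside $\vdash_E$ (via Lemma~\ref{lm:eq_concat}), and then prove the forcing claims for $\Gamma$ and $\Delta$ by simultaneous induction on formula size. The proposal is correct and follows essentially the same route as the paper's proof, including the key observation that partial-determinism and cancellativity are discharged by the entailment $\vdash_E$ rather than by extra Hintikka conditions.
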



To prove the completeness of $\ilspslh$, we have to show that any given
unprovable sequent can be extended to a Hintikka sequent.
To do so we need a way to enumerate all possible applicable
rules in a fair way so that every rule will be chosen infinitely often. 
Traditionally, this is achieved via a fair enumeration strategy of every principal
formula of every rule. 
Since our calculus contains structural rules with no principal
formulas, we need to include them in the enumeration strategy as
well. For this purpose, we define a notion of {\em extended formulae}, given by the grammar:
\begin{center}
$ExF ::= F~|~\Ubb~|~\Ebb~|~\Abb~|~\Abb_C$
\end{center}
where $F$ is a formula, and $\Ubb,\Ebb,\Abb,\Abb_C$ are constants
that are used as ``dummy''
principal formulae for the structural rules $U$, $E$, $A$, and $A_C$, 
respectively.
A scheduler enumerates each combination of
left or right of turnstile, a label, an extended formula and at most two relational atoms infinitely often.

\newcommand{\exfml}[1]{Ex#1}

\begin{definition}[Scheduler $\phi$]
\label{definition:fair}
A {\em schedule} is a tuple $(O, m, \exfml{F}, R)$, where $O$ is either $0$
(left) or $1$ (right),
$m$ is a label, $\exfml{F}$ is an extended formula and $R$ is a set of relational atoms
such that $|R| \leq 2.$ 
Let $\Scal$ denote the set of all schedules.
A {\em scheduler} is a function from 
natural numbers $\Ncal$ to $\Scal.$ 
A scheduler $\phi$ is {\em fair} if 
for every schedule $S$, the set $\{i \mid \phi(i) = S \}$ is infinite.
\end{definition}

\begin{lemma}
There exists a fair scheduler.
\end{lemma}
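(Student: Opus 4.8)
The plan is to prove this as a pure cardinality-and-enumeration fact. Observe first that any fair scheduler must take each schedule as a value, so its range is all of $\Scal$, and conversely any function $\phi:\Ncal\to\Scal$ whose values can be arranged to repeat infinitely often will witness fairness. Hence the entire content of the lemma is that $\Scal$ is countably infinite; once this is established, a routine dovetailing construction finishes the argument.

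First I would establish countability of each component of a schedule. Assuming, as is implicit throughout, that both $Var$ and $LVar$ are countable, the label set $\Lcal = LVar\cup\{\epsilon\}$ is countable; the set of $\psl$-formulae is countable, being generated from a countable alphabet by finitely many constructors, so the set of extended formulae $ExF$ (obtained by adjoining the four constants $\Ubb,\Ebb,\Abb,\Abb_C$) is countable as well; and the set of relational atoms is a subset of $\Lcal^3$, hence countable, so the collection of sets $R$ of relational atoms with $|R|\le 2$ is countable too, since the finite subsets of a countable set form a countable set. A schedule is then an element of the finite product $\{0,1\}\times\Lcal\times ExF\times\{R : |R|\le 2\}$, and a finite product of countable sets is countable; since there are already infinitely many schedules (one for each label $m$), $\Scal$ is countably infinite.

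With countability in hand, I would fix a bijection $g:\Ncal\to\Scal$. It then remains to define $\phi$ so that every schedule is hit infinitely often. Let $\langle\cdot,\cdot\rangle:\Ncal\times\Ncal\to\Ncal$ be the Cantor pairing bijection, and let $\pi(n)$ denote the first component of $\langle\cdot,\cdot\rangle^{-1}(n)$; then $\pi^{-1}(k)=\{\langle k,j\rangle : j\in\Ncal\}$ is infinite for every $k$. Setting $\phi = g\circ\pi$, we have for each schedule $S=g(k)$ that $\{i : \phi(i)=S\}=\{i : \pi(i)=k\}=\pi^{-1}(k)$, which is infinite, so $\phi$ is fair.

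The construction is routine; the only point requiring care — and the step I would flag as the main obstacle — is the implicit assumption that the underlying syntactic sets $Var$ and $LVar$ are countable. This hypothesis is necessary as well as sufficient, since if either were uncountable then $\Scal$ would be uncountable and no function $\phi:\Ncal\to\Scal$ could even be surjective, let alone fair. Under the standard convention that the propositional and label alphabets are countable, the lemma follows.
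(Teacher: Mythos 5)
Your proof is correct and follows essentially the same route as the paper: the paper's argument also reduces the lemma to the countability of $\Scal$ (observing it is a finite product of countable sets) and then appeals to Larchey-Wendling's fair-strategy construction for the enumeration that visits every schedule infinitely often. You have merely made explicit the dovetailing step (via Cantor pairing) that the paper delegates to the cited work.
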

\begin{proof}
Our proof is similar to the proof of \textit{fair strategy} of 
Larchey-Wendling~\cite{wendling2012}. To adapt their proof, 
we need to show
that the set $\Scal$ is countable. 
This follows from the fact that $\Scal$ is a finite product of countable sets. \qed
\end{proof}
From now on, we shall fix a fair scheduler, which we call 
$\phi$. 
We assume that the set of labels $\Lcal$
is totally ordered, and its elements can be enumerated
as $a_0,a_1,a_2,\ldots$ where $a_0 = \epsilon.$
This indexing 
is used
to select fresh labels in our construction of Hintikka sequents.

We say the formula $F$ is not cut-free provable in $\ilspslh$ if
the sequent $\vdash w : F$  is not cut-free derivable  in $\ilspslh$
for any label $w \not = \epsilon.$
Since we shall be concerned only with cut-free provability, in the following
when we mention derivation, we mean cut-free derivation. 

\begin{definition}
\label{definition:sequent-series}
Let $F$ be a formula which is not provable in $\ilspslh$. 
We construct a series of finite sequents 
$\langle \myseq{\Gcal_i}{\Gamma_i}{\Delta_i} \rangle_{i \in \Ncal}$
from $F$ where
$\Gcal_1 = \Gamma_1 = \emptyset$ and $\Delta_1 =
a_1:F$.

Assuming that $\myseq{\Gcal_i}{\Gamma_i}{\Delta_i}$
has been defined, we define $\myseq{\Gcal_{i+1}}{\Gamma_{i+1}}{\Delta_{i+1}}$ as follows.  
Suppose $\phi(i) = (O_i,m_i,\exfml{F}_i,R_i).$
\begin{itemize}

\item If $O_i = 0$, $\exfml{F}_i$ is a $\psl$ formula $C_i$ and
  $m_i:C_i\in\Gamma_i$:
\begin{itemize}
\item If $C_i = F_1\land F_2$, then $\Gcal_{i+1} = \Gcal_i$,
  $\Gamma_{i+1} = \Gamma_i\cup\{m_i:F_1,m_i:F_2\}$, $\Delta_{i+1} =
  \Delta_i$.
\item If $C_i = F_1\limp F_2$. If there is no derivation for
  $\myseq{\Gcal_i}{\Gamma_i}{m_i:F_1;\Delta_i}$ then $\Gamma_{i+1} =
  \Gamma_i$, $\Delta_{i+1} = \Delta_i\cup\{m_i:F_1\}$. Otherwise
  $\Gamma_{i+1} = \Gamma_i\cup\{m_i:F_2\}$, $\Delta_{i+1} =
  \Delta_i$. In both cases, $\Gcal_{i+1} = \Gcal_i$.
\item If $C_i = \top^*$, then $\Gcal_{i+1} =
  \Gcal_i\cup\{(\epsilon,m_i\simp \epsilon)\}$, $\Gamma_{i+1} =
  \Gamma_i$, $\Delta_{i+1} = \Delta_i$.
\item If $C_i = F_1\mand F_2$, then $\Gcal_{i+1} =
  \Gcal_i\cup\{(a_{2i},a_{2i+1}\simp m_i)\}$, $\Gamma_{i+1} =
  \Gamma_i\cup\{a_{2i}:F_1,a_{2i+1}:F_2\}$, $\Delta_{i+1} = \Delta_i$.
\item If $C_i = F_1\mimp F_2$ and $R_i = \{(x,m\simp  y)\}\subseteq\Gcal_i$ and $\Gcal_i\vdash_E (m = m_i)$. 
  If $\myseq{\Gcal_i}{\Gamma_i}{x:F_1;\Delta_i}$ has no derivation, then
  $\Gamma_{i+1} = \Gamma_i$, $\Delta_{i+1} =
  \Delta_i\cup\{x:F_1\}$. Otherwise $\Gamma_{i+1} =
  \Gamma_i\cup\{y:F_2\}$, $\Delta_{i+1} = \Delta_i$. In both cases,
  $\Gcal_{i+1} = \Gcal_i$.
\end{itemize}

\item If $O_i = 1$, $\exfml{F}_i$ is a $\psl$ formula $C_i$, and
  $m_i:C_i\in\Delta$:
\begin{itemize}
\item If $C_i = F_1\land F_2$. If there is no derivation for
  $\myseq{\Gcal_i}{\Gamma_i}{m_i:F_1;\Delta_i}$ then $\Delta_{i+1} =
  \Delta_i\cup\{m_i:F_1\}$. Otherwise $\Delta_{i+1} =
  \Delta_i\cup\{m_i:F_2\}$. In both cases, $\Gcal_{i+1} = \Gcal_i$ and
  $\Gamma_{i+1} = \Gamma_i$.
\item If $C_i = F_1\limp F_2$, then $\Gamma_{i+1} =
  \Gamma\cup\{m_i:F_1\}$, $\Delta_{i+1} = \Delta_i\cup\{m_i:F_2\}$,
  and $\Gcal_{i+1} = \Gcal_i$.
\item $C_i = F_1\mand F_2$ and $R_i = \{(x,y\simp
  m)\}\subseteq\Gcal_i$ and $\Gcal_i\vdash_E (m_i = m)$. If
  $\myseq{\Gcal_i}{\Gamma_i}{x:F_1;\Delta_i}$ has no derivation, then
  $\Delta_{i+1} = \Delta_i\cup\{x:F_1\}$. Otherwise $\Delta_{i+1} =
  \Delta_i\cup\{y:F_2\}$. In both cases, $\Gcal_{i+1} = \Gcal_i$ and
  $\Gamma_{i+1} = \Gamma_i$.
\item If $C_i = F_1\mimp F_2$, then $\Gcal_{i+1} =
  \Gcal_i\cup\{(a_{2i},m_i\simp a_{2i+1})\}$, $\Gamma_{i+1} =
  \Gamma_i\cup\{a_{2i}:F_1\}$, and $\Delta_{i+1} =
  \Delta_i\cup\{a_{2i+1}:F_2\}$.
\end{itemize}
\item If $\exfml{F}_i \in\{\Ubb,\Ebb,\Abb, \Abb_C\}$, we proceed as follows:
\begin{itemize}
\item If $\exfml{F}_i = \Ubb$, $R_i = \{(a_n,\epsilon\simp a_n)\}$, where $n
  \leq 2i+1$, then $\Gcal_{i+1}=\Gcal_i\cup\{(a_n,\epsilon\simp
  a_n)\}$, $\Gamma_{i+1} = \Gamma_i$, $\Delta_{i+1} = \Delta_i$.
\item If $\exfml{F}_i = \Ebb$, $R_i = \{(x,y\simp z)\}\subseteq\Gcal_i$, then
  $\Gcal_{i+1}=\Gcal_i\cup\{(y,x\simp z)\}$, $\Gamma_{i+1} =
  \Gamma_i$, $\Delta_{i+1} = \Delta_i$.
\item If $\exfml{F}_i = \Abb$, $R_i = \{(x,y\simp z);(u,v\simp
  x')\}\subseteq\Gcal_i$ and $\Gcal_i\vdash_E (x = x')$, then
  $\Gcal_{i+1}=\Gcal_i\cup\{(u,a_{2i}\simp z),(y,v\simp a_{2i})\}$,
  $\Gamma_{i+1} = \Gamma_i$, $\Delta_{i+1} = \Delta_i$.
\item If $\exfml{F}_i = \Abb_C$, $R_i = \{(x,y \simp x') \} \subseteq \Gcal_i$,
and $\Gcal_i \vdash_E (x = x')$ then $\Gcal_{i+1} = \Gcal_i \cup \{(x, a_{2i} \simp x), (y, y \simp a_{2i})\}$,
$\Gamma_{i+1} = \Gamma_i$, $\Delta_{i+1} = \Delta_i.$
\end{itemize}
\item In all other cases, $\Gcal_{i+1} = \Gcal_i$, $\Gamma_{i+1} =
  \Gamma_i$ and $\Delta_{i+1} = \Delta_i$.
\end{itemize}
\end{definition}

Intuitively, each tuple $(O_i, m_i, \exfml{F_i}, R_i)$ corresponds to
a potential rule application . If the components of the rule
application are in the current sequent, we apply the corresponding
rule to these components. 
The indexing of labels guarantees that the
choice of $a_{2i}$ and $a_{2i+1}$ are always fresh for the sequent
$\myseq{\Gcal_i}{\Gamma_i}{\Delta_i}$.
The construction in Def.~\ref{definition:sequent-series} non-trivially extends a similar construction of Hintikka CSS due to Larchey-Wendling~\cite{wendling2012}, in addition to which we have to consider the cases for structural rules.

We say $\myseq{\Gcal'}{\Gamma'}{\Delta'} \subseteq
\myseq{\Gcal}{\Gamma}{\Delta}$ iff $\Gcal'\subseteq\Gcal$,
$\Gamma'\subseteq\Gamma$ and $\Delta'\subseteq\Delta$. A labelled
sequent $\myseq{\Gcal}{\Gamma}{\Delta}$ is \textit{finite} if
$\Gcal,\Gamma,\Delta$ are finite sets. Define $\myseq{\Gcal'}{\Gamma'}{\Delta'}
\subseteq_f \myseq{\Gcal}{\Gamma}{\Delta}$ iff
$\myseq{\Gcal'}{\Gamma'}{\Delta'} \subseteq \myseq{\Gcal}{\Gamma}{\Delta}$ and
$\myseq{\Gcal'}{\Gamma'}{\Delta'}$ is finite. If
$\myseq{\Gcal}{\Gamma}{\Delta}$ is a finite sequent, it is
\textit{consistent} iff it does not have a derivation in $\ilspslh$. 
A (possibly infinite) sequent $\myseq{\Gcal}{\Gamma}{\Delta}$
is \textit{finitely-consistent} iff every
$\myseq{\Gcal'}{\Gamma'}{\Delta'}\subseteq_f \myseq{\Gcal}{\Gamma}{\Delta}$ is
consistent.


We write $\Lcal_i$ for the set of labels occurring in the sequent
$\myseq{\Gcal_i}{\Gamma_i}{\Delta_i}$. Thus $\Lcal_1 = \{a_1\}$. 
The following lemma states some properties of the construction of 
the sequents $\myseq{\Gcal_i}{\Gamma_i}{\Delta_i}$, e.g., the labels $a_{2i},a_{2i+1}$ are always fresh for $\myseq{\Gcal_i}{\Gamma_i}{\Delta_i}$. 
This can be proved by a simple induction on $i.$
\begin{lemma}
\label{lm:construction}
For any $i\in\mathcal{N}$, the following properties hold:
\begin{enumerate}
\item $\myseq{\Gcal_i}{\Gamma_i}{\Delta_i}$ has no derivation
\item $\Lcal_i\subseteq \{a_0, a_1,\cdots,a_{2i-1}\}$
\item $\myseq{\Gcal_i}{\Gamma_i}{\Delta_i}\subseteq_f
  \myseq{\Gcal_{i+1}}{\Gamma_{i+1}}{\Delta_{i+1}}$
\end{enumerate}
\end{lemma}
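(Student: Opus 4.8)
The plan is to prove all three properties \textbf{simultaneously} by induction on $i$, since establishing Property~1 at stage $i+1$ relies on the freshness of $a_{2i},a_{2i+1}$ that is guaranteed by Property~2 at stage $i$. For the base case $i=1$ we have $\Gcal_1=\Gamma_1=\emptyset$ and $\Delta_1=\{a_1:F\}$: Property~1 is exactly the hypothesis that $F$ is not (cut-free) provable, so $\vdash a_1:F$ has no derivation; Property~2 holds since $\Lcal_1=\{a_1\}\subseteq\{a_0,a_1\}$; and Property~3 is part of the inductive step. For the step I would assume the three properties at stage $i$ and run through the cases of Def.~\ref{definition:sequent-series} according to the shape of $\phi(i)$.

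Properties~2 and~3 are the routine part. Every clause of the construction only ever \emph{adds} relational atoms and labelled formulae (it takes set unions and never deletes), so $\Gcal_i\subseteq\Gcal_{i+1}$, $\Gamma_i\subseteq\Gamma_{i+1}$ and $\Delta_i\subseteq\Delta_{i+1}$; together with the fact that each step adds only finitely many elements to the finite sequent $\myseq{\Gcal_i}{\Gamma_i}{\Delta_i}$, this gives Property~3. For Property~2, the only fresh labels any clause can introduce are $a_{2i}$ and $a_{2i+1}$ (in the $\mand L$, $\mimp R$, $\Abb$ and $\Abb_C$ clauses; the $\Ubb$ clause only reuses some $a_n$ with $n\le 2i+1$, and $\epsilon=a_0$ is always in range), so from $\Lcal_i\subseteq\{a_0,\dots,a_{2i-1}\}$ we obtain $\Lcal_{i+1}\subseteq\{a_0,\dots,a_{2i+1}\}$. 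As a by-product this shows $a_{2i},a_{2i+1}\notin\Lcal_i$, which is precisely the freshness side condition those clauses need.

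The heart of the argument is Property~1. The key observation is that, because the sequents are built from \textbf{sets} and contraction is admissible, each clause of the construction is literally a backward instance of a rule $r$ of $\ilspslh$ whose conclusion is $\myseq{\Gcal_i}{\Gamma_i}{\Delta_i}$ and whose premise(s) are the candidate successor sequents: one takes the rule's context to be the full set $\Gamma_i$ (resp.\ $\Delta_i$, $\Gcal_i$), so that the principal formula or atom is simply retained rather than deleted. Moreover the side conditions of these rules are met: the $\vdash_E$ conditions of $\mand R$, $\mimp L$, $\Abb$ and $\Abb_C$ are exactly what the corresponding clauses check, and the freshness conditions hold by Property~2 at stage $i$. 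For a one-premise rule (e.g.\ $\land L$, $\limp R$, $\top^* L$, $\mand L$, $\mimp R$, $U$, $E$, $\Abb$, $\Abb_C$), $\myseq{\Gcal_{i+1}}{\Gamma_{i+1}}{\Delta_{i+1}}$ is that single premise; were it derivable, applying $r$ would make $\myseq{\Gcal_i}{\Gamma_i}{\Delta_i}$ derivable, contradicting the inductive hypothesis. For a two-premise rule ($\limp L$, $\land R$, $\mand R$, $\mimp L$) the construction first tests whether the left premise has a derivation: if not, it takes that premise as $\myseq{\Gcal_{i+1}}{\Gamma_{i+1}}{\Delta_{i+1}}$ and Property~1 is immediate; if it does, it takes the right premise, which must be underivable, since otherwise both premises would be derivable and $r$ would again yield a derivation of $\myseq{\Gcal_i}{\Gamma_i}{\Delta_i}$. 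The remaining ``all other cases'' leave the sequent unchanged, so Property~1 is inherited directly.

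The step I expect to be the main obstacle is Property~1 for the branching and fresh-label clauses, and in particular making the set-based bookkeeping precise: one must check that retaining the principal formula inside the context is harmless (this is where admissibility of contraction is used) and that each chosen successor sequent really is an instance of the intended premise, with all $\vdash_E$ and freshness side conditions discharged. Once this correspondence between construction clauses and rule instances is pinned down, the ``premises derivable $\Rightarrow$ conclusion derivable'' direction of each rule does all the work, and the induction closes.
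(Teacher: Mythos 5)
Your proposal is correct and follows essentially the same route as the paper: induction on $i$ with a case analysis on which clause of Def.~\ref{definition:sequent-series} fires, observing that each clause is a backward rule instance (with the principal formula retained in the set-based context) so that underivability propagates to the chosen premise, and that only $a_{2i},a_{2i+1}$ can be introduced as new labels. The paper's own proof is terser — it disposes of items 1 and 3 in one line each and only details the induction for item 2 — so your write-up is simply a more careful elaboration of the same argument, including the (correct) observation that the freshness needed for item 1 comes from item 2 at the previous stage.
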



Given the construction of the series of sequents we have just seen above, we define
a notion of a {\em limit sequent}, as the union of 
every sequent in the series.

\begin{definition}[Limit sequent]
\label{definition:lim_seq} 
Let $F$ be a formula unprovable in $\ilspslh.$
The {\em limit sequent for $F$} is the sequent 
$\myseq{\Gcal^\omega}{\Gamma^\omega}{\Delta^\omega}$ 
where 
$\Gcal^\omega = \bigcup_{i\in\mathcal{N}}\Gcal_i$
and
$\Gamma^\omega = \bigcup_{i\in\mathcal{N}}\Gamma_i$
and
$\Delta^\omega = \bigcup_{i\in\mathcal{N}}\Delta_i$
and
where $\myseq{\Gcal_i}{\Gamma_i}{\Delta_i}$ is as defined in Def.\ref{definition:sequent-series}.
\end{definition}

The following lemma shows that the limit sequent defined above is
indeed a Hintikka sequent, thus we can use it to extract a
counter-model.

\begin{lemma}
\label{lem:lim_hintikka}
If $F$ is a formula unprovable in $\ilspslh$, then 
the limit labelled sequent for $F$ is a Hintikka sequent.
\end{lemma}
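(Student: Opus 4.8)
The plan is to show that the limit sequent $\myseq{\Gcal^\omega}{\Gamma^\omega}{\Delta^\omega}$ satisfies each of the fifteen conditions of Def.~\ref{definition:hintikka_seq}. The proof proceeds condition-by-condition. The engine driving every case is \emph{fairness} of the scheduler $\phi$: for each syntactic object (a labelled formula on a given side, or a structural-rule dummy $\Ubb,\Ebb,\Abb,\Abb_C$, together with at most two relational atoms) the set of stages $i$ with $\phi(i)$ equal to the corresponding schedule is infinite, so whenever such an object and its required relational atoms have appeared in the (monotonically growing, by Lemma~\ref{lm:construction}(3)) sequent, the scheduler will eventually select it at some stage $j$ where they are already present, and the construction of Def.~\ref{definition:sequent-series} will fire the matching decomposition into $\myseq{\Gcal_{j+1}}{\Gamma_{j+1}}{\Delta_{j+1}}\subseteq\myseq{\Gcal^\omega}{\Gamma^\omega}{\Delta^\omega}$.

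First I would dispatch the logical conditions (2)--(11). For the branching cases ($\limp L$, $\land R$, $\mand R/\mand L$, $\mimp L/\mimp R$) the construction makes a choice guided by cut-free derivability of one premise; here I would invoke Lemma~\ref{lm:construction}(1) (no $\myseq{\Gcal_i}{\Gamma_i}{\Delta_i}$ is derivable) to argue that the chosen branch preserves underivability, so the relevant formula lands in $\Gamma^\omega$ or $\Delta^\omega$ as the condition demands. The existential conditions (8) and (11) additionally use the freshness of $a_{2i},a_{2i+1}$ (Lemma~\ref{lm:construction}(2)) to supply the witnessing labels and the new relational atom. The $\top^*$ conditions (6),(7) reduce to the behaviour of $\vdash_E$: condition (6) holds because processing $a:\top^*\in\Gamma^\omega$ inserts $(\epsilon,a\simp\epsilon)$, whence $\Gcal^\omega\vdash_E(a=\epsilon)$ by an $Eq$-style derivation; condition (7) is the contrapositive, using that if $a=_{\Gcal^\omega}\epsilon$ and $a:\top^*\in\Delta^\omega$ then the $\top^* R$ rule of $\ilspslh$ (Fig.~\ref{fig:ILSPSLH}) would make some finite subsequent derivable, contradicting finite-consistency.

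Next I would handle the structural/frame conditions (12)--(14), which mirror the rules $U$, $E$, and $A$ with side condition $a=_{\Gcal^\omega}a'$ matched through $\vdash_E$; each follows by fair scheduling of the dummy extended formulae $\Ubb,\Ebb,\Abb$ exactly as the logical cases, together with the monotonicity of $\Gcal^\omega$ and the definition of $=_{\Gcal^\omega}$ as a finite-support entailment. Condition (1) (the ``clash'' condition) and condition (15) are the consistency conditions: I would prove them by contradiction, showing that if $a:A\in\Gamma^\omega$, $b:A\in\Delta^\omega$ and $a=_{\Gcal^\omega}b$, then some finite $\myseq{\Gcal'}{\Gamma'}{\Delta'}\subseteq_f\myseq{\Gcal^\omega}{\Gamma^\omega}{\Delta^\omega}$ is closed by $id$ (using the $\vdash_E$-labelled $id$ rule of Fig.~\ref{fig:ILSPSLH}), contradicting finite-consistency; conditions $a:\bot\notin\Gamma^\omega$ and $a:\top\notin\Delta^\omega$ follow similarly from $\bot L$ and $\top R$.

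\textbf{The main obstacle} I expect is condition (1) together with the faithful tracking of the equivalence entailment $=_{\Gcal^\omega}$ across the infinite construction. Since $\vdash_E$ is only finitely supported, I must ensure that whenever $a=_{\Gcal^\omega}b$ the witnessing substitution sequence $\sigma$ lives inside some finite $\Gcal_i$, so that the closure argument takes place in a genuinely finite, and hence by Lemma~\ref{lm:construction}(1) underivable, subsequent; this is where the interaction between fairness, finite-consistency, and the $id$/$\top^* R$ rules modified to use $\vdash_E$ must be argued most carefully. Establishing that this construction actually refutes underivability, rather than merely saturating syntactically, is the crux, and the remaining conditions are then routine verifications of the scheduler's fairness against each clause of Def.~\ref{definition:hintikka_seq}.
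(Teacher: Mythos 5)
Your proposal is correct and follows essentially the same route as the paper: establish finite-consistency of the limit sequent (via monotonicity and weakening admissibility) to dispatch the closure conditions (1), (7) and (15), and then verify each remaining clause by fairness of the scheduler, using the finite-support definition of $\vdash_E$ to pull any needed equivalence $a =_{\Gcal^\omega} b$ back into some finite stage $\Gcal_i$ before the matching schedule fires. The subtlety you flag about $\vdash_E$ being witnessed inside a finite $\Gcal_i$ is exactly the point the paper's case analysis relies on (e.g.\ in its treatment of conditions 9 and 14), so nothing essential is missing.
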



Finally we can state completeness: whenever a formula has
no derivation in $\ilspslh$,  we can extract an infinite
counter-model based on the limit sequent and the Kripke relational
frame.

\begin{theorem}[Completeness]
\label{thm:completeness}
Every formula $F$ unprovable in $\ilspslh$
is not valid (in $\psl$ relational Kripke models).
\end{theorem}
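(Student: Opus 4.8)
The plan is to assemble the machinery developed above into a single counter-model argument, carried out entirely within $\ilspslh$. Suppose $F$ is unprovable in $\ilspslh$, where $F$ is the right-hand formula of the sequent $\vdash a_1:F$ with $a_1\neq\epsilon$. Then the construction of Definition~\ref{definition:sequent-series} applies with $\Gcal_1=\Gamma_1=\emptyset$ and $\Delta_1=\{a_1:F\}$, and by Lemma~\ref{lm:construction}(1) every finite sequent $\myseq{\Gcal_i}{\Gamma_i}{\Delta_i}$ in the resulting series remains underivable, so the construction is well defined and never aborts. I would then form the limit sequent $\myseq{\Gcal^\omega}{\Gamma^\omega}{\Delta^\omega}$ of Definition~\ref{definition:lim_seq}.

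The core of the argument is a two-step invocation of the preceding lemmas. First, by Lemma~\ref{lem:lim_hintikka}, the limit sequent is a Hintikka sequent; this is where fairness of the scheduler $\phi$ and the freshness of the labels $a_{2i},a_{2i+1}$ for $\myseq{\Gcal_i}{\Gamma_i}{\Delta_i}$ (Lemma~\ref{lm:construction}(2)) do their work, guaranteeing that each saturation condition of Definition~\ref{definition:hintikka_seq} is eventually satisfied. Second, by Lemma~\ref{lem:hintikka_sat}, this Hintikka sequent is falsifiable: there is an extended model $\Mcal=\mylmodel{H}{R}{\epsilon}{\val}{\rho}$ in which every labelled formula in $\Gamma^\omega$ is forced at its world, every relational atom of $\Gcal^\omega$ holds in $R$, and every labelled formula in $\Delta^\omega$ fails at its world.

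To finish, I would read off the counter-model. Since $a_1:F\in\Delta_1\subseteq\Delta^\omega$, falsifiability yields $\Mcal,\rho(a_1)\not\Vdash F$, so $F$ fails at the world $\rho(a_1)$ of a genuine $\psl$ Kripke relational model, and hence $F$ is not valid. Composed with Lemma~\ref{lm:interm}, this also gives completeness of $\lspslh$ in the contrapositive form: every valid formula is provable.

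Because the two substantive lemmas are already in hand, the theorem proof itself is a short chaining exercise, and the only genuine bookkeeping is confirming that $a_1:F$ survives into $\Delta^\omega$ and that the construction truly starts from $F$ on the right. The real obstacle lies upstream, inside Lemma~\ref{lem:hintikka_sat}: there the frame must be built as the quotient of $\Lcal$ by the equivalence relation $=_{\Gcal^\omega}$ (shown to be an equivalence by Lemma~\ref{lm:vdashe_eq}), with $R$ induced on equivalence classes from $\Gcal^\omega$, and one must verify that this quotient satisfies all five frame conditions. The delicate cases are partial-determinism and cancellativity: one must check that the substitution-free treatment of the $P$ and $C$ rules via $\vdash_E$ forces exactly the label collapses that make $R$ a partial and cancellative operation on classes, rather than merely an arbitrary ternary relation. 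That verification, not the assembly performed here, is where the difficulty of the completeness result resides.
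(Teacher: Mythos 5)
Your proof is correct and follows exactly the same route as the paper's: construct the limit sequent for $F$, apply Lemma~\ref{lem:lim_hintikka} to obtain a Hintikka sequent, apply Lemma~\ref{lem:hintikka_sat} to falsify it, and observe that $a_1:F\in\Delta^\omega$ forces $F$ to fail at $\rho(a_1)$. The additional remarks about where the substantive difficulty lives (inside the two lemmas, particularly the verification of partial-determinism and cancellativity for the quotient frame) are accurate but not needed for the theorem itself.
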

\begin{proof}
We construct a limit sequent $\myseq{\Gcal^\omega}{\Gamma^\omega}{\Delta^\omega}$ for $F$
following Def.~\ref{definition:lim_seq}. Note that by the construction of the limit sequent,
we have $a_1 : F \in \Delta^\omega.$ By Lemma~\ref{lem:lim_hintikka},
this limit sequent is a Hintikka sequent, and therefore by Lemma~\ref{lem:hintikka_sat},
$\myseq{\Gcal^\omega}{\Gamma^\omega}{\Delta^\omega}$ is falsifiable.
This means there exists a model $(\Fcal,\val,\rho)$ that satisfies $\Gcal^\omega$ and $\Gamma^\omega$
and falsifies every element of $\Delta^\omega$, including $a_1 : F$, which means that
$F$ is false at world $\rho(a_1).$  Thus $F$ is not valid.\qed
\end{proof}

\begin{corollary}
\label{cor:lspslh_complete}
If formula $F$ is unprovable in $\lspslh$
then $F$ is not valid (in $\psl$ relational Kripke models).
\end{corollary}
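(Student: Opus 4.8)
The plan is to derive this corollary by transporting the Completeness Theorem (Theorem~\ref{thm:completeness}) for the intermediate calculus $\ilspslh$ across the equivalence of $\ilspslh$ and $\lspslh$ recorded in Lemma~\ref{lm:interm}. First I would assume that $F$ is unprovable in $\lspslh$, i.e.\ that the sequent $\vdash w:F$ has no derivation in $\lspslh$ for the relevant label $w$. The governing observation is that the two calculi prove exactly the same sequents, so the failure of provability transfers from $\lspslh$ to $\ilspslh$, at which point the counter-model machinery already built for $\ilspslh$ can be applied verbatim.

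More precisely, Lemma~\ref{lm:interm} gives in particular that every sequent provable in $\ilspslh$ is provable in $\lspslh$. Taking the contrapositive, if $\vdash w:F$ is unprovable in $\lspslh$ then it is also unprovable in $\ilspslh$, so $F$ is unprovable in $\ilspslh$. Theorem~\ref{thm:completeness} then applies directly and yields that $F$ is not valid in $\psl$ relational Kripke models, which is exactly the conclusion sought. Thus the corollary is a one-line consequence once the two-way equivalence and the completeness of the intermediate system are in hand.

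The only points requiring a little care are bookkeeping ones. First, the word ``unprovable'' must be reconciled across the two senses in play: in Theorem~\ref{thm:completeness} it refers to the absence of a \emph{cut-free} derivation in $\ilspslh$, whereas the corollary speaks of $\lspslh$. By the Cut-elimination Theorem, provability and cut-free provability coincide in $\lspslh$, so unprovability of $F$ in $\lspslh$ is the same as cut-free unprovability there, and Lemma~\ref{lm:interm} then transports this to cut-free unprovability in $\ilspslh$. Second, one must invoke the equivalence of Lemma~\ref{lm:interm} in the direction ``provable in $\ilspslh$ implies provable in $\lspslh$'', whose contrapositive is what carries unprovability in the needed direction. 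Neither point is a genuine obstacle: all the substantive work—the counter-model construction via the limit sequent and the verification that it yields a Hintikka sequent (Lemmas~\ref{lem:lim_hintikka} and~\ref{lem:hintikka_sat})—has already been discharged in proving Theorem~\ref{thm:completeness}, so this corollary is simply an immediate transfer of that result across the calculus equivalence.
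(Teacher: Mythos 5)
Your proposal is correct and follows exactly the paper's own route: the paper proves this corollary in one line by citing Lemma~\ref{lm:interm} and Theorem~\ref{thm:completeness}, which is precisely the transfer-of-unprovability argument you spell out. Your additional remarks on reconciling provability with cut-free provability and on the direction of the equivalence needed are sound bookkeeping that the paper leaves implicit.
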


\begin{proof}
By Lemma~\ref{lm:interm} and Theorem~\ref{thm:completeness}.\qed
\end{proof}



\section{Extensions of $\psl$}
\label{sec:extension_psl}
We now consider some extensions of $\psl$ obtained by imposing additional
properties on the semantics, as suggested by Dockins et al~\cite{dockins2009}.
We show that sound rules for \emph{indivisible unit} and the stronger property
of \emph{disjointness} can be added to our labelled sequent calculus without
jeopardising our completeness proof, but that the more exotic properties of
\emph{splittability} and \emph{cross-split} are not fully compatible with our
current framework, they require non-trivial changes to the proofs in previous sections.
See Appendix~\ref{sec:app3} for the proofs in this section.

\paragraph{Indivisible unit.} The unit $\epsilon$ in a commutative monoid 
$(H,\circ,\epsilon)$ is indivisible iff the following holds for 
any $h_1,h_2\in H$: $\mbox{ if } h_1\circ h_2 = \epsilon \mbox{ then } h_1 = \epsilon.$
Relationally, this corresponds to the first-order condition: $\forall h_1,h_2\in H. \mbox{ if } R(h_1, h_2, \epsilon) \mbox{ then } h_1 = \epsilon.$
Note that this also means that $h_2 = \epsilon$ whenever $h_1 \circ h_2 = \epsilon.$
Most memory models in the literature obey indivisible
unit~\cite{brotherston2010}, so this property seems appropriate
for reasoning about concrete applications of
separation logic. Indivisible unit can be axiomatised by the
formula~\cite{brotherston2013}:
$\top^*\land (A\mand B)\limp A$.
We use the following sound rule to capture this property:
\begin{center}
\AxiomC{$\myseq{(\epsilon,y \simp \epsilon);\Gcal[\epsilon/x]}{\Gamma[\epsilon/x]}{\Delta[\epsilon/x]}$}
\RightLabel{\tiny $IU$}
\UnaryInfC{$\myseq{(x,y\simp \epsilon);\Gcal}{\Gamma}{\Delta}$}
\DisplayProof
\end{center}

Note that we can then instantiate the label $y$ to $\epsilon$ by applying $Eq_1$
upwards.
Recall that the sequent calculus $\lsbbi$~\cite{hou2013} is 
just the sequent calculus $\lspslh$ minus the rules $C$ and $P$.
\begin{proposition}
\label{prop:iu_axiom}
The formula $\top^*\land (A\mand B)\limp A$ is provable in $\lsbbi + IU$. 
\end{proposition}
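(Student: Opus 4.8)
The plan is to exhibit an explicit cut-free derivation of $\myseq{~}{~}{w : \top^*\land (A\mand B)\limp A}$ in $\lsbbi + IU$, reading it from the conclusion upwards, and then to discharge the leaf with a generalised identity. First I would peel off the additive structure: apply $\limp R$ to obtain $\myseq{~}{w:\top^*\land(A\mand B)}{w:A}$, then $\land L$ to split the antecedent into $\myseq{~}{w:\top^*;\, w:A\mand B}{w:A}$, and then $\top^* L$ on $w:\top^*$, which records the relational atom $(\epsilon,w\simp\epsilon)$, yielding $\myseq{(\epsilon,w\simp\epsilon)}{w:A\mand B}{w:A}$. Semantically $(\epsilon,w\simp\epsilon)$ forces $w=\epsilon$, and syntactically this is realised by the rule $Eq_1$: applying it to this atom with $w'=\epsilon$ substitutes $\epsilon$ for the label variable $w$ everywhere (permitted, since only the variable $w$ is being substituted for), giving $\myseq{(\epsilon,\epsilon\simp\epsilon)}{\epsilon:A\mand B}{\epsilon:A}$.

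Next I would unfold the multiplicative conjunction. Applying $\mand L$ to $\epsilon:A\mand B$ with fresh labels $x,y$ introduces the atom $(x,y\simp\epsilon)$, producing $\myseq{(x,y\simp\epsilon);(\epsilon,\epsilon\simp\epsilon)}{x:A;\,y:B}{\epsilon:A}$. This is precisely the pattern that the indivisible-unit rule $IU$ is designed to act on: its principal atom is of the form $(x,y\simp\epsilon)$, and applying it substitutes $\epsilon$ for $x$ throughout. Since $x$ was chosen fresh, it occurs only in $x:A$ and in the principal atom, so the result is $\myseq{(\epsilon,y\simp\epsilon);(\epsilon,\epsilon\simp\epsilon)}{\epsilon:A;\,y:B}{\epsilon:A}$, where the spare labelled formula $y:B$ and the relational atoms are harmless context. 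The derivation now needs only to close on the matching pair $\epsilon:A$ occurring on both sides of the turnstile.

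The one point that is not a literal rule application is this final closure: the zero-premise rule $id$ is stated only for a \emph{propositional variable} $p$, whereas $A$ is an arbitrary formula, so I cannot invoke $id$ directly. The main obstacle is therefore to supply the \emph{generalised identity}, namely that $\myseq{\Gcal}{\Gamma;\,w:A}{w:A;\,\Delta}$ is derivable for every formula $A$. This is a standard fact for labelled calculi and is provable already in $\lsbbi$ by induction on the structure of $A$: the additive cases use the invertible logical rules, and the multiplicative cases are handled by applying $\mand L$ (resp.\ $\mimp R$) followed by $\mand R$ (resp.\ $\mimp L$) on the same freshly introduced relational atom, so that each resulting leaf closes by the induction hypothesis on the immediate subformulae. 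Instantiating this at $A$ and label $\epsilon$ closes the remaining leaf and completes the derivation; note in particular that I never need to force $y=\epsilon$, so the $Eq_1$ step mentioned in the remark following the $IU$ rule is not required here.
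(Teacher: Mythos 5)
Your derivation is correct and follows essentially the same route as the paper's: decompose the additives, use $\top^* L$ (plus $Eq_1$) to identify $w$ with $\epsilon$, apply $\mand L$ and then $IU$ to collapse the left component onto $\epsilon$, and close on $\epsilon:A$. You are in fact slightly more careful than the paper, which closes its displayed derivation with a bare $id$ on the compound formula $A$ and leaves the appeal to the generalised identity lemma implicit, whereas you state and justify it explicitly.
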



\begin{theorem}
\label{thm:add_iu}
$\lspslh + IU$ is sound and cut-free complete with respect to the
class of $\psl$ Kripke relational frames 
(and separation algebras)  
with indivisible unit.
\end{theorem}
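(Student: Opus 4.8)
The plan is to extend the three pillars already established for $\lspslh$ — soundness, cut-elimination, and counter-model completeness — treating the new rule $IU$ in close analogy with the substitution rule $Eq_1$, which it most resembles. For soundness I would show that $IU$ preserves falsifiability upwards, exactly as in the proof of the Soundness theorem. Suppose the conclusion $\myseq{(x,y\simp \epsilon);\Gcal}{\Gamma}{\Delta}$ is falsified in an extended model $\Mcal=\mylmodel{H}{R}{\epsilon}{\val}{\rho}$ whose frame has indivisible unit. Then $R(\rho(x),\rho(y),\epsilon)$ holds, so indivisible unit forces $\rho(x)=\epsilon$ (and $\rho(y)=\epsilon$). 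Consequently substituting $\epsilon$ for $x$ changes no truth value under $\rho$, and the newly introduced atom $(\epsilon,y\simp\epsilon)$ is satisfied since $R(\epsilon,\rho(y),\epsilon)$ is just $R(\rho(x),\rho(y),\epsilon)$; hence the premise is falsified in the same model. Combined with the existing soundness result, this yields soundness of $\lspslh+IU$ over the class of frames (equivalently separation algebras) with indivisible unit.

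Because $IU$ is a substitution rule that, like $Eq_1$, uniformly replaces a label and introduces a single relational atom whose third component is $\epsilon$, the supporting lemmas (Substitution~\ref{lem:subs}, weakening, invertibility~\ref{lem:invert}, and contraction admissibility) extend by adding an $IU$-case that mirrors the $Eq_1$-case, and the cut-elimination proof gains only the analogous $IU$ permutation cases. Thus cut is eliminable in $\lspslh+IU$ and cut-free provability coincides with provability.

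For cut-free completeness I would adapt the intermediate system and the counter-model construction. Since $IU$ involves a substitution, I would fold it into the equivalence entailment $\vdash_E$, adding $IU$ to the list $Eq_1,Eq_2,P,C$ from which entailment sequences $\sigma$ are built; the proofs that $=_\Gcal$ is an equivalence relation (Lemmas~\ref{lm:eq_concat} and~\ref{lm:vdashe_eq}) and that the intermediate system is equivalent to the full one (Lemma~\ref{lm:interm}) then extend routinely, as they rest only on the substitution lemma. I would add a single new Hintikka condition to Def.~\ref{definition:hintikka_seq}: if $(a,b\simp c)\in\Gcal$ and $c=_\Gcal\epsilon$ then $a=_\Gcal\epsilon$. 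This condition holds automatically in any limit sequent, because $\Gcal^\omega\vdash_E(a=\epsilon)$ is witnessed by the sequence that first rewrites $c$ to $\epsilon$ via $Eq_1/Eq_2$ (legitimate since $c=_\Gcal\epsilon$) and then applies $IU$ to the resulting atom $(a,b\simp\epsilon)$; so, just as for $Eq_1,Eq_2,P,C$, no new scheduler case is required in Def.~\ref{definition:sequent-series}. Finally, in the frame extraction of Lemma~\ref{lem:hintikka_sat} I would verify that the relation defined on $=_\Gcal$-classes inherits indivisible unit: whenever $R([a],[b],[\epsilon])$ holds it is witnessed by some $(a',b'\simp c')\in\Gcal$ with $c'=_\Gcal\epsilon$, so the new condition yields $a'=_\Gcal\epsilon$, that is $[a]=[\epsilon]$. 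With this condition added, the limit sequent of any $F$ unprovable in $\ilspslh+IU$ remains a Hintikka sequent, so $F$ is falsifiable in a frame with indivisible unit, completing the argument as in Theorem~\ref{thm:completeness} and Corollary~\ref{cor:lspslh_complete}.

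The main obstacle I anticipate is not a single hard step but the discipline of re-checking that every auxiliary result of Sections~2 and~3 survives enriching $\vdash_E$ with $IU$: in particular that the composite-substitution bookkeeping behind $\Scal(\Gcal,\sigma)$ and $\subst(\sigma)$ still makes $=_\Gcal$ symmetric and transitive once $IU$ can introduce fresh occurrences of $\epsilon$, and that the existing frame conditions (commutativity, associativity, cancellativity, partial-determinism) verified in Lemma~\ref{lem:hintikka_sat} are undisturbed by the extra atom $IU$ contributes. These checks are routine but must be carried out uniformly, and they are precisely where a careless alignment of $IU$ with $Eq_1$ could silently break either the equivalence-relation property or the indivisible-unit frame condition.
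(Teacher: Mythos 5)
Your proposal is correct and follows essentially the same route as the paper: soundness by showing $IU$ preserves falsifiability upwards, cut-elimination by adding $IU$ cases to the substitution and invertibility lemmas, and completeness by folding $IU$ into the entailment $\vdash_E$ and checking that the frame extracted from a Hintikka sequent satisfies indivisible unit via an $Eq$-then-$IU$ entailment sequence. The only (harmless) cosmetic difference is that you add an explicit Hintikka condition for indivisible unit, whereas the paper leaves the Hintikka definition unchanged --- it is parametric in $\vdash_E$ --- and performs the same derivation of $a' =_\Gcal \epsilon$ directly in the frame-verification step of Lemma~\ref{lem:hintikka_sat}.
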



\paragraph{Disjointness.} The separating conjunction $\mand$ in separation logic requires
that the two combined heaps have disjoint domains~\cite{reynolds2002}. 
In a separation algebra $(H, \circ, \epsilon)$, disjointness  
is defined by the following additional requirement: $\forall h_1,h_2\in H. \mbox{ if } h_1\circ h_1 = h_2 \mbox{ then } h_1 = \epsilon$.
Relationally:
$\forall h_1,h_2\in H. \mbox{ if } R(h_1, h_1, h_2) \mbox{ then } h_1 = \epsilon$.
This condition is captured 
by the following rule, where $x,y$ are labels.
\begin{center}
\AxiomC{$(\myseq{\epsilon,\epsilon\simp y);\Gcal[\epsilon/x]}{\Gamma[\epsilon/x]}{\Delta[\epsilon/x]}$}
\RightLabel{\tiny $D$}
\UnaryInfC{$\myseq{(x,x\simp y);\Gcal}{\Gamma}{\Delta}$}
\DisplayProof
\end{center}

Disjointness implies indivisible unit (but not vice versa), 
as shown by Dockins et al.~\cite{dockins2009}. We can prove the axiom for 
indivisible unit by using $\lsbbi + D$.

\begin{proposition}
\label{lem:d_iu_axiom}
The formula $\top^*\land (A\mand B)\limp A$ is provable in $\lsbbi +
D.$
\end{proposition}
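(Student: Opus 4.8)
The goal is to show that $\top^*\land (A\mand B)\limp A$ is provable in $\lsbbi + D$. The plan is to build a cut-free derivation bottom-up, starting from the sequent $\turnstile w : \top^*\land (A\mand B)\limp A$ for an arbitrary label $w$. First I would apply $\limp R$ to move the antecedent to the left, giving $w : \top^*\land (A\mand B) \turnstile w : A$, then apply $\land L$ to split the conjunction into $w : \top^*$ and $w : A\mand B$ on the left. Applying $\top^* L$ to the labelled formula $w:\top^*$ introduces the relational atom $(\epsilon, w\simp\epsilon)$ into $\Gcal$, which by the identity property forces $w$ to be equivalent to $\epsilon$. Applying $\mand L$ to $w:A\mand B$ introduces fresh labels $x,y$ together with the relational atom $(x,y\simp w)$ and the labelled formulae $x:A$ and $y:B$ on the left.

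The key step is to exploit the disjointness rule $D$. After the above decompositions the relational context contains $(\epsilon,w\simp\epsilon)$ and $(x,y\simp w)$. The crucial move is to use the equivalence machinery to rewrite $w$ to $\epsilon$ (via $Eq_1$ or $Eq_2$ applied to $(\epsilon,w\simp\epsilon)$, substituting $\epsilon$ for $w$), so that the relational atom $(x,y\simp w)$ becomes $(x,y\simp\epsilon)$. At this point I would note that Proposition~\ref{prop:iu_axiom} already establishes the same formula in $\lsbbi + IU$, so the real content here is to show that the $D$ rule lets us recover the effect of $IU$. Since disjointness implies indivisible unit, the cleanest route is to derive an instance of the $IU$ behaviour from $D$: from a relational atom of the form $(x,y\simp\epsilon)$ I want to conclude $x =_\Gcal \epsilon$, matching what $IU$ provides directly.

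The main obstacle is that the $D$ rule as stated fires on atoms of the shape $(x,x\simp y)$, i.e.\ it requires the two combined labels to be \emph{syntactically identical}, whereas after the decomposition we have $(x,y\simp\epsilon)$ with distinct labels. So the work is to manufacture a self-combination atom. I would use the structural rules of $\lsbbi$ (commutativity $E$, associativity $A$, and the unit rules $U$, $Eq_1$, $Eq_2$) to derive, from $(x,y\simp\epsilon)$, a relational atom that equates $x$ and $y$ to $\epsilon$ so that $D$ becomes applicable. Concretely, using the identity and associativity laws one can show that if $x\circ y = \epsilon$ then both $x$ and $y$ behave as the unit, and this algebraic fact must be mirrored by a sequence of backward rule applications that eventually introduces an atom $(x,x\simp y')$ for some $y'$ equal to $\epsilon$; applying $D$ then substitutes $\epsilon$ for $x$ throughout, after which the goal $x:A \turnstile w:A$ closes by $id$ once $w$ and $x$ are identified with $\epsilon$.

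The expected payoff is that after the substitution $[\epsilon/x]$ induced by $D$ (combined with the earlier identification $w =_\Gcal \epsilon$), the labelled formula $x:A$ on the left and $w:A$ on the right share the common label $\epsilon$, so the $id$ axiom closes the derivation. I expect the proof to be short once the right sequence of structural rules is found, and I anticipate it will closely parallel the derivation establishing Proposition~\ref{prop:iu_axiom}, the only genuinely new ingredient being the derivation of an $IU$-style label collapse from the disjointness rule $D$ rather than from $IU$ directly.
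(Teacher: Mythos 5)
Your overall strategy is the paper's: decompose with $\limp R$, $\land L$, $\top^* L$ and $\mand L$ to reach a sequent with $(x,y\simp\epsilon)$, $x{:}A$, $y{:}B$ on the left and $\epsilon{:}A$ on the right, then manufacture a syntactic self-combination atom so that $D$ can fire, and close with $id$ after the induced substitutions. You also correctly isolate the real difficulty: $D$ only applies to atoms of the shape $(x,x\simp y)$ whose first two labels are literally identical.

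However, the proposal leaves exactly that difficulty unresolved, and the justification you sketch for it is circular. The claim that ``if $x\circ y=\epsilon$ then both $x$ and $y$ behave as the unit'' is precisely the indivisible-unit property --- the very thing the rule $D$ is supposed to deliver --- and it is false in a bare commutative monoid, so it cannot be what licenses the structural manipulations in $\lsbbi$. The fact that actually does the work is weaker and purely associative: from $x\circ y=\epsilon$ and $\epsilon=\epsilon\circ\epsilon$, associativity guarantees that $y\circ y$ is \emph{defined}. The paper mirrors this with a concrete chain: apply $U$ to introduce $(\epsilon,\epsilon\simp\epsilon)$; apply $A$ to the pair $(\epsilon,\epsilon\simp\epsilon)$, $(a_1,a_2\simp\epsilon)$ to introduce a fresh $w_1$ with $(a_1,w_1\simp\epsilon)$ and $(\epsilon,a_2\simp w_1)$; then apply $A$ again to the pair $(\epsilon,a_2\simp w_1)$, $(a_1,a_2\simp\epsilon)$ to introduce a fresh $w_2$ with $(a_2,a_2\simp w_2)$. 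Now $D$ applies and substitutes $\epsilon$ for $a_2$, turning $(a_1,a_2\simp\epsilon)$ into $(a_1,\epsilon\simp\epsilon)$, after which $E$ and $Eq_1$ collapse $a_1$ to $\epsilon$ and $id$ closes the branch. Note also that the third component $w_2$ of the self-combination atom is a fresh label, not something ``equal to $\epsilon$'' as your proposal asserts; $D$ places no constraint on it, and $w_2=\epsilon$ only becomes derivable \emph{after} $D$ has fired. Without some such explicit construction the proof is incomplete, since no structural rule of $\lsbbi$ will ever identify the two distinct labels of $(x,y\simp\epsilon)$ for you.
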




\begin{theorem}
\label{thm:add_d}
$\lspslh + D$ is sound and cut-free complete with respect to the class
of $\psl$ Kripke relational frames 
(and separation algebras) 
with disjointness. 
\end{theorem}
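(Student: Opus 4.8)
The plan is to establish the two halves of the theorem---soundness and cut-free completeness---by mirroring the development already carried out for \lspslh{} in Sections 2--3 and for indivisible unit in Theorem~\ref{thm:add_iu}, treating the rule $D$ as one more substitution-driven structural rule alongside $Eq_1,Eq_2,P,C$. The guiding observation is that $D$, like those rules, merely forces a label equality (here $x=\epsilon$) and so is naturally absorbed into the equivalence entailment $\vdash_E$ rather than kept as a principal-formula rule.

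For soundness I would show that $D$ preserves falsifiability upwards, exactly as in the soundness proof for \lspslh. Suppose the conclusion $\myseq{(x,x\simp y);\Gcal}{\Gamma}{\Delta}$ is falsified by an extended model $\Mcal=\mylmodel{H}{R}{\epsilon}{\val}{\rho}$ whose frame satisfies disjointness. Falsifiability yields $R(\rho(x),\rho(x),\rho(y))$, so by disjointness $\rho(x)=\epsilon=\rho(\epsilon)$. Hence substituting $\epsilon$ for $x$ leaves every $\rho$-image unchanged, so the same $\rho$ falsifies the premise $\myseq{(\epsilon,\epsilon\simp y);\Gcal[\epsilon/x]}{\Gamma[\epsilon/x]}{\Delta[\epsilon/x]}$, using $R(\epsilon,\epsilon,\rho(y))=R(\rho(x),\rho(x),\rho(y))$ for the new relational atom. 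Cut-elimination then extends exactly as for $P$ and $C$: the substitution (Lemma~\ref{lem:subs}), weakening, invertibility (Lemma~\ref{lem:invert}) and contraction lemmas all go through for $D$ because $D$ is itself a substitution rule that commutes with label substitution, and the additional cut-reduction cases are handled like those for $Eq_1$.

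For completeness I would extend the intermediate system and the counter-model construction. First I enlarge the set of rules admitted in the equivalence entailment of Definition~\ref{dfn:equiv-entail} to include $D$, and re-prove that $=_\Gcal$ remains an equivalence relation (Lemma~\ref{lm:vdashe_eq}); this rests on Lemma~\ref{lm:eq_concat}, which survives unchanged since its proof only uses that each admitted rule respects the substitution lemma. Working in this strengthened system \ilspslh{}$+D$, I add one clause to the notion of Hintikka sequent (Definition~\ref{definition:hintikka_seq}): \emph{if $(a,b\simp c)\in\Gcal$ and $a=_\Gcal b$ then $a=_\Gcal\epsilon$}. Lemma~\ref{lem:hintikka_sat} then needs one extra case to check that the frame read off from the equivalence classes indeed satisfies disjointness: if $R([a],[a],[c])$ holds in that frame then, by the way $R$ is defined on equivalence classes, some $(a',a''\simp c')\in\Gcal$ has $a'=_\Gcal a''$, whence the new clause gives $a'=_\Gcal\epsilon$, i.e.\ $[a]=[\epsilon]$.

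Crucially, the construction of the sequent series (Definition~\ref{definition:sequent-series}) and the scheduler require \emph{no} new case, precisely because $D$ is folded into $\vdash_E$; so Lemma~\ref{lm:construction} and its consistency-preservation argument are unchanged except that every appeal to derivability and to $\vdash_E$ is now read in \ilspslh{}$+D$. The only genuinely new obligation is to verify the added Hintikka clause for the limit sequent in Lemma~\ref{lem:lim_hintikka}, and I expect this to be the crux, although it follows almost directly from the augmented entailment: given $(a,b\simp c)\in\Gcal^\omega$ with $a=_{\Gcal^\omega}b$ witnessed by a sequence $\sigma$ with substitution $\theta$, the atom becomes $(a\theta,a\theta\simp c\theta)$ after applying $\sigma$, so appending a single $D$-step with substitution $[\epsilon/a\theta]$ shows $\Gcal^\omega\vdash_E(a=\epsilon)$ (the case $a\theta\equiv\epsilon$ being immediate). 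The residual obstacle is thus a bookkeeping one: checking that strengthening $\vdash_E$ disturbs neither the non-derivability invariant of Lemma~\ref{lm:construction} nor any other Hintikka clause, since a stronger $\vdash_E$ makes more rules applicable; this is handled by noting that both the construction's internal derivability tests and the target calculus are uniformly \ilspslh{}$+D$. Completeness for \lspslh{}$+D$ then follows as in Theorem~\ref{thm:completeness} and Corollary~\ref{cor:lspslh_complete}, now over disjointness frames.
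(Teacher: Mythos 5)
Your proposal is correct and follows essentially the same route as the paper: the paper proves this theorem by direct analogy with the indivisible-unit case, i.e.\ soundness via upward preservation of falsifiability, cut-elimination by rechecking the substitution and invertibility lemmas for the new substitution rule, and completeness by folding $D$ into the entailment $\vdash_E$ so that the counter-model construction goes through unchanged and the frame read off the Hintikka sequent satisfies disjointness. Your extra explicit Hintikka clause is a harmless presentational refinement --- the paper instead observes that no new clause is needed because the Hintikka conditions are parametric in $\vdash_E$ and verifies disjointness of the derived frame directly from the augmented entailment, exactly as you do when discharging that clause for the limit sequent.
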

\begin{proof}
Similar to Theorem~\ref{thm:add_iu}.\qed
\end{proof}

\paragraph{Splittability and cross-split.} The property of infinite splittability is sometimes
useful when reasoning about the kinds of resource sharing that occur in divide-and-conquer 
style computations~\cite{dockins2009}. A separation algebra $(H,\circ,\epsilon)$ has splittability if for 
every $h_0 \in H\setminus\{\epsilon\}$, there are $h_1, h_2 \in H\setminus \{\epsilon\}$ 
such that $h_1\circ h_2 = h_0$. 
Relationally,
if $h_0 \neq \epsilon$ then there exist $h_1 \neq \epsilon, h_2
\neq\epsilon$ s.t.\ $R(h_1, h_2, h_0)$.
This property can be axiomatised as the 
formula $\lnot\top^* \limp (\lnot\top^* \mand \lnot \top^*)$~\cite{brotherston2013}.
We give the following rules for splittability:
\begin{center}
\AxiomC{$\myseq{(x,y\simp z);(x\not =\epsilon);(y\not =\epsilon);(z\not =\epsilon);\Gcal}{\Gamma}{\Delta}$}
\RightLabel{\tiny $S$}
\UnaryInfC{$\myseq{(z \not = \epsilon);\Gcal}{\Gamma}{\Delta}$}
\DisplayProof
$\qquad$
\AxiomC{}
\RightLabel{\tiny $\not = L$}
\UnaryInfC{$\myseq{(w\not = w);\Gcal}{\Gamma}{\Delta}$}
\DisplayProof\\[10px]
\AxiomC{$\myseq{(w\not = \epsilon);\Gcal}{\Gamma}{\Delta}$}
\AxiomC{$\myseq{(\epsilon,w\simp\epsilon);\Gcal}{\Gamma}{\Delta}$}
\RightLabel{\tiny $EM$}
\BinaryInfC{$\myseq{\Gcal}{\Gamma}{\Delta}$}
\DisplayProof
\end{center}


We add a new type of structure, namely \emph{inequality}, to our calculus. The inequality structures are grouped with relational atoms in $\Gcal$. The rule $S$ directly encodes the semantics of splittability. We then need another rule $\not = L$ to conclude that $(w\not = w)$, for any label $w$, cannot be valid. Finally, the rule $EM$, named as the law of excluded middle for $(w= \epsilon)\lor (w\not = \epsilon)$, is essentially a cut on $w:\top^*$.

The advantage of formulating the above as structural rules is that 
these rules do not require extra arguments 
in the cut-elimination proof.


Cross-split is a rather complicated property. It specifies that if a heap can be 
split in two different ways, then there should be intersections of these splittings. 
Formally, in a separation algebra $(H,\circ,\epsilon)$, if $h_1\circ h_2 = h_0$ and $h_3\circ h_4 = h_0$, 
then there should be four elements $h_{13},h_{14},h_{23},h_{24}$, informally representing the intersections 
$h_1\cap h_3$, $h_1\cap h_4$, $h_2\cap h_3$ and $h_2\cap h_4$ respectively, 
such that $h_{13}\circ h_{14} = h_1$, $h_{23}\circ h_{24} = h_2$, $h_{13}\circ h_{23} = h_3$, 
and $h_{14}\circ h_{24} = h_4$. The corresponding condition on Kripke relational
frames is obvious. The following sound rule naturally 
captures cross-split, where $p,q,s,t,u,v,x,y,z$ are labels:
\begin{center}
\AxiomC{$\myseq{(p,q\simp x);(p,s\simp u);(s,t\simp
  y);(q,t\simp v);(x,y\simp z);(u,v\simp z);\Gcal}{\Gamma}{\Delta}$}
\RightLabel{\tiny $CS$}
\UnaryInfC{$\myseq{(x,y\simp z);(u,v\simp z);\Gcal}{\Gamma}{\Delta}$}
\noLine
\UnaryInfC{The labels $p,q,s,t$ do not occur in the conclusion}
\DisplayProof\\[5px]
\end{center}
However, to ensure contraction admissibility, we need the following
special case for this rule where the two principal relational atoms are the same.
\begin{center}
\AxiomC{$\myseq{(p,q\simp x);(p,s\simp x);(s,t\simp
  y);(q,t\simp y);(x,y\simp z);\Gcal}{\Gamma}{\Delta}$}
\RightLabel{\tiny $CS_C$}
\UnaryInfC{$\myseq{(x,y\simp z);\Gcal}{\Gamma}{\Delta}$}
\noLine
\UnaryInfC{The labels $p,q,s,t$ do not occur in the conclusion}
\DisplayProof
\end{center}


We note that Reynolds' heap model~\cite{reynolds2002} falsifies splittability,
as heaps are finite objects that only non-trivially split finitely often.
On the other hand cross-split is true in the heap model; however we are not
aware of any
formulae whose proof requires this property. 


\begin{proposition}
The rules $S,\not= L, EM, CS, CS_C$ are sound.
\end{proposition}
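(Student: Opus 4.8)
The plan is to prove soundness of each rule by showing that it preserves falsifiability upwards, in exactly the style indicated for the Soundness theorem earlier in the paper: a rule is sound precisely when, whenever some instance of the conclusion is falsifiable in an extended model $\Mcal = \mylmodel{H}{R}{\epsilon}{\val}{\rho}$, at least one premise is also falsifiable (possibly after adjusting $\rho$ on the fresh labels). Since the $\psl$ connectives are interpreted identically to BBI, and the logical content of these five rules lives entirely in the relational/inequality atoms, I would first fix the convention that an inequality atom $(s \not = t)$ is ``satisfied'' by $\rho$ iff $\rho(s) \neq \rho(t)$, extending Sequent Falsifiability to the new inequality structures.

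First I would handle the three splittability rules. For $S$, suppose the conclusion $\myseq{(z \not = \epsilon);\Gcal}{\Gamma}{\Delta}$ is falsifiable in $\Mcal$, so $\rho(z) \neq \epsilon$ and all other atoms/formulae are as required. By the relational splittability condition there exist $h_1 \neq \epsilon$, $h_2 \neq \epsilon$ with $R(h_1, h_2, \rho(z))$; since $x,y$ are fresh (they do not occur in the conclusion) I extend $\rho$ by $\rho(x) = h_1$, $\rho(y) = h_2$, and this extended model falsifies the premise, establishing soundness. For $\not = L$, the conclusion contains the atom $(w \not = w)$, which no label mapping can satisfy because $\rho(w) = \rho(w)$; hence the conclusion is never falsifiable and the zero-premise rule is vacuously sound. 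For $EM$, suppose the conclusion $\myseq{\Gcal}{\Gamma}{\Delta}$ is falsifiable in $\Mcal$; then either $\rho(w) = \epsilon$ or $\rho(w) \neq \epsilon$. In the first case the identity condition gives $R(\rho(\epsilon), \rho(w), \rho(\epsilon))$ after noting $\rho(\epsilon)=\epsilon$, so the second premise (whose added atom is $(\epsilon, w \simp \epsilon)$) is falsifiable; in the second case $\rho(w) \neq \epsilon$ makes the added inequality $(w \not = \epsilon)$ satisfied and the first premise is falsifiable. Thus in either case a premise is falsifiable.

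Next I would treat the two cross-split rules. For $CS$, suppose the conclusion $\myseq{(x,y\simp z);(u,v\simp z);\Gcal}{\Gamma}{\Delta}$ is falsifiable, so in particular $R(\rho(x),\rho(y),\rho(z))$ and $R(\rho(u),\rho(v),\rho(z))$ both hold, i.e.\ a single world $\rho(z)$ is split two ways. The cross-split semantic condition then supplies four worlds $h_{13}, h_{14}, h_{23}, h_{24}$ witnessing the four combinations. Since the fresh labels $p,q,s,t$ do not occur in the conclusion, I extend $\rho$ by $\rho(p)=h_{13}$, $\rho(q)=h_{14}$, $\rho(s)=h_{23}$, $\rho(t)=h_{24}$, and verify that the six relational atoms of the premise, $(p,q\simp x)$, $(p,s\simp u)$, $(s,t\simp y)$, $(q,t\simp v)$, $(x,y\simp z)$, $(u,v\simp z)$, are all satisfied by the cross-split equations (after using commutativity of $R$ where the intended pairing differs from the literal order), making the premise falsifiable. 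The rule $CS_C$ is the specialisation in which the two splittings coincide on labels, i.e.\ the second split atom is identified with the first via $u \mapsto x$, $v \mapsto y$; soundness follows by the identical argument applied to the single atom $(x,y\simp z)$, reading $h_3,h_4$ as $\rho(x),\rho(y)$ and checking the correspondingly merged premise atoms.

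The routine verifications here are the arithmetic of matching each premise atom to a cross-split equation, which I would not grind out in full. The only point needing genuine care is the interaction between the relational direction of $R$ and the ``intersection'' equations $h_{13}\circ h_{14}=h_1$ etc.: because $R(a,b,c)$ encodes $a \circ b = c$, I must route each equation through commutativity to land the witness in the correct argument position, and for $CS_C$ I must confirm that collapsing the two split atoms is consistent with partial-determinism so that the merged premise is well-formed. I expect this bookkeeping to be the main (though still elementary) obstacle; once the witness assignment is fixed, each clause falls out directly from the corresponding semantic condition, so the proof is a straightforward case analysis preserving falsifiability upwards.
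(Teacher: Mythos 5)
Your proof is correct and follows exactly the approach the paper indicates for all its soundness results (showing each rule preserves falsifiability upwards, with fresh labels interpreted by the semantic witnesses); the paper itself omits the details of this proposition as routine, and your witness assignment $\rho(p)=h_{13},\rho(q)=h_{14},\rho(s)=h_{23},\rho(t)=h_{24}$ for $CS$ matches the intended encoding. The only point worth noting is that you correctly supply the freshness reading of $x,y$ in the rule $S$, which the paper leaves implicit in the displayed figure but confirms in its counter-model construction.
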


 
\begin{proposition}
\label{prop:add_s}
The axiom $\lnot\top^* \limp (\lnot\top^* \mand \lnot \top^*)$ for
splittability is provable in $\lsbbi + \{S,\not = L,EM\}$.
\end{proposition}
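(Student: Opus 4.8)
The plan is to build an explicit cut-free derivation of $\vdash w : \lnot\top^* \limp (\lnot\top^* \mand \lnot\top^*)$ for an arbitrary label $w$, read bottom-up. Applying $\limp R$ reduces the goal to $w:\lnot\top^* \vdash w : \lnot\top^* \mand \lnot\top^*$, and decomposing the antecedent negation (reading $\lnot A$ as $A\limp\bot$, so via $\limp L$ with the $w:\bot$ branch closed by $\bot L$) moves $\top^*$ to the succedent, leaving $\vdash w:\top^*;\, w : \lnot\top^* \mand \lnot\top^*$. The key move is to case-split on whether $w$ denotes the unit, which is exactly what $EM$ supplies: applying $EM$ at $w$ produces one branch whose relational context gains $(\epsilon, w \simp \epsilon)$ and one whose context gains the inequality $(w \not= \epsilon)$.

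In the first branch the atom $(\epsilon, w \simp \epsilon)$ records $w = \epsilon$, so I would apply $Eq_1$ to substitute $\epsilon$ for $w$; the succedent labelled formula $w:\top^*$ becomes $\epsilon : \top^*$, which $\top^* R$ discharges at once. The second branch, which now carries $(w \not= \epsilon)$, is where splittability does the work: applying $S$ with $z := w$ introduces fresh labels $x,y$ and enriches the context with $(x, y \simp w)$, $(x \not= \epsilon)$, $(y \not= \epsilon)$ (while keeping $(w\not=\epsilon)$). Since the relational atom $(x, y \simp w)$ is now present, I apply $\mand R$ to the principal formula $w : \lnot\top^* \mand \lnot\top^*$, obtaining two symmetric goals that must establish $x : \lnot\top^*$ and $y : \lnot\top^*$ respectively, each still carrying $w:\top^*$ and the compound formula on the right.

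Consider the $\mand R$ premise for $x$ (the one for $y$ is identical). Decomposing $x : \lnot\top^*$ on the right shifts $x : \top^*$ into the antecedent, and $\top^* L$ then appends $(\epsilon, x \simp \epsilon)$, which forces $x = \epsilon$. But $S$ has already placed $(x \not= \epsilon)$ in the context, so the branch is contradictory: applying $Eq_1$ to substitute $\epsilon$ for $x$ rewrites $(x \not= \epsilon)$ into the reflexive inequality $(\epsilon \not= \epsilon)$, which $\not= L$ closes as a zero-premise axiom. I expect this closure to be the only subtle point: the whole derivation turns on the interplay between the inequality $(x \not= \epsilon)$ manufactured by $S$ and the equality $x = \epsilon$ forced by $\top^* L$, together with the fact that $Eq_1$ propagates substitutions uniformly through the inequality structures so that $\not= L$ becomes applicable. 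Everything else is routine bottom-up decomposition; as $cut$ is never used, the resulting derivation witnesses provability in $\lsbbi + \{S, \not= L, EM\}$.
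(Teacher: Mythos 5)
Your derivation is correct and follows essentially the same route as the paper's own proof: $\limp R$, shift $\top^*$ to the succedent, case-split with $EM$, close the $w=\epsilon$ branch via $Eq_1$ and $\top^* R$, apply $S$ then $\mand R$ on $(x,y\simp w)$ in the other branch, and close each premise by deriving $(\epsilon\not=\epsilon)$ from $\top^* L$ plus $Eq_1$ and finishing with $\not= L$. The only cosmetic difference is that you simulate $\lnot$ via $\limp\bot$ where the paper invokes $\lnot L$/$\lnot R$ directly; the substance is identical.
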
 

It is easy to check that the rules $S, \not = L, EM, CS, CS_C$ do not break cut-elimination.
Since the completeness proofs for splittability and cross-split both
involve modifications of our previous counter-model construction
method, we present them together in Appendix~\ref{sec:app3_sub}. 

\begin{theorem}
\label{thm:comp_s_cs}
$\lspslh + \{S,\not=L,EM,CS,CS_C\}$ is sound and cut-free complete with respect to the class of $\psl$ Kripke relational frames 
(and separation algebras) 
with splittability and cross-split.
\end{theorem}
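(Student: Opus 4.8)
The plan is to treat soundness and cut-free completeness separately, reusing the machinery of Section~\ref{sec:complete_lspslh} with the five new rules folded into the counter-model construction. Soundness is immediate: by the base soundness theorem the rules of $\lspslh$ preserve falsifiability upwards in every PASL frame, hence in particular in those with splittability and cross-split, and by the preceding Proposition the rules $S,\not=L,EM,CS,CS_C$ preserve falsifiability upwards in exactly this subclass; so a cut-free derivation of $\turnstile w:F$ forces $F$ to be valid over the subclass. Cut-elimination persists (as noted in the text), so it suffices to build, for every $F$ not cut-free provable in $\lspslh+\{S,\not=L,EM,CS,CS_C\}$, a falsifying model whose frame satisfies splittability and cross-split.

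For completeness I would first move to the intermediate system $\ilspslh$ extended with the new rules, carrying the inequality structures $(a\neq b)$ alongside the relational atoms in $\Gcal$, and re-establish the analogue of Lemma~\ref{lm:interm}. I would then augment the Hintikka conditions of Def.~\ref{definition:hintikka_seq} with: (i) an \emph{inequality-consistency} clause forbidding $(a\neq b)\in\Gcal$ whenever $a =_\Gcal b$; (ii) an \emph{excluded-middle} clause requiring, for every label $w$, either $w =_\Gcal \epsilon$ or $(w\neq\epsilon)\in\Gcal$; (iii) a \emph{splittability} clause requiring, whenever $(z\neq\epsilon)\in\Gcal$, the existence of $x,y,z'$ with $(x,y\simp z')\in\Gcal$, $z =_\Gcal z'$, $(x\neq\epsilon)\in\Gcal$ and $(y\neq\epsilon)\in\Gcal$; and (iv) a \emph{cross-split} clause requiring, whenever $(x,y\simp z)\in\Gcal$ and $(u,v\simp z')\in\Gcal$ with $z =_\Gcal z'$, labels $p,q,s,t$ realising $(p,q\simp x),(p,s\simp u),(s,t\simp y),(q,t\simp v)$ modulo $=_\Gcal$, the coincident splittings being subsumed by the $CS_C$ variant. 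I would then extend Lemma~\ref{lem:hintikka_sat}, checking that the quotient frame over the classes $[a]_\Gcal$ still satisfies the PASL axioms and additionally validates splittability and cross-split: clause~(i) ensures inequalities descend to genuine disequalities of classes, clauses~(ii) and~(iii) split each non-identity world into two non-identity worlds, and clause~(iv) yields cross-split directly.

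Next I would enlarge the scheduler of Def.~\ref{definition:fair} by extending the grammar to $ExF ::= F \mid \Ubb \mid \Ebb \mid \Abb \mid \Abb_C \mid \Sbb \mid \EMbb \mid \CSbb$, the new symbols being the dummy principals for $S$, $EM$, and the cross-split rules; the schedule set stays a finite product of countable sets, so a fair scheduler still exists. The construction of Def.~\ref{definition:sequent-series} then gains four clauses. For $\Sbb$ with $R_i=\{(z\neq\epsilon)\}\subseteq\Gcal_i$ I add the fresh split $(a_{2i},a_{2i+1}\simp z)$ with $(a_{2i}\neq\epsilon)$ and $(a_{2i+1}\neq\epsilon)$. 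For $\EMbb$ at label $m_i$ I branch as for the other two-premise rules: if $\myseq{(\epsilon,m_i\simp\epsilon);\Gcal_i}{\Gamma_i}{\Delta_i}$ has no derivation I add $(\epsilon,m_i\simp\epsilon)$, otherwise I add $(m_i\neq\epsilon)$; one branch stays consistent since a derivation of both would, via $EM$, derive the consistent conclusion $\myseq{\Gcal_i}{\Gamma_i}{\Delta_i}$. For $\CSbb$ I add the four cross-split atoms with fresh labels, using the $CS_C$ clause when the two scheduled splittings coincide. I would then re-prove Lemma~\ref{lm:construction} and Lemma~\ref{lem:lim_hintikka}, the latter using fairness to force the limit sequent to satisfy clauses~(i)--(iv), and conclude exactly as in Theorem~\ref{thm:completeness} and Corollary~\ref{cor:lspslh_complete}.

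The main obstacle is twofold. First, the cross-split rules introduce four fresh labels at once, so the bookkeeping that hands each step the two fresh labels $a_{2i},a_{2i+1}$ must be refactored to supply a fresh block of four labels per step while preserving the freshness invariant of Lemma~\ref{lm:construction}. Second, and more delicate, is verifying splittability in the quotient frame, where the new inequality structure and the $EM$ branching must interlock: one must show that a world $[z]\neq[\epsilon]$ genuinely decomposes into two \emph{non-identity} worlds, which requires $z\neq_\Gcal\epsilon$ to propagate through the excluded-middle clause to an explicit $(z\neq\epsilon)\in\Gcal$, the splittability clause to supply a decomposition carrying $(x\neq\epsilon)$ and $(y\neq\epsilon)$, and inequality-consistency to lift these to $[x]\neq[\epsilon]$ and $[y]\neq[\epsilon]$. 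Keeping these inequality facts coherent with $=_\Gcal$ throughout the saturation is the crux of the argument.
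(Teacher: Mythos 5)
Your proposal matches the paper's own proof essentially step for step: the same four additional Hintikka conditions (inequality-consistency, excluded middle, splittability, cross-split), the same derivability-based branching for $EM$, the same fresh-block-of-four relabelling of the construction (the paper switches from $a_{2i},a_{2i+1}$ to $a_{4i},\dots,a_{4i+3}$), and the same verification that the quotient frame inherits splittability and cross-split from the inequality structure. The only differences are cosmetic (the paper keeps $\CSbb_C$ as a separate dummy principal and adds a separate inequality slot $I$ to the schedule rather than folding inequalities into $R$), so the proposal is correct and takes the same route.
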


Following previous work~\cite{dockins2009,brotherston2013}, we refer to the set of additional properties: cancellativity, partial-determinism, indivisible unit, disjointness, splittability, and cross-split (we assume single unit) as \emph{separation theory}. And the set of rules $C,P,IU,D,S,\not= L, EM, CS,CS_C$ is called $\lsst$. By a proof that appropriately incorporates the treatments for these properties in this section, we obtain the following result:

\begin{theorem}
\label{thm:comp_bbi_st}
$\lsbbi + \lsst$ is sound and cut-free complete with respect to the class
of BBI Kripke relational frames with separation theory.
\end{theorem}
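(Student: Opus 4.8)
The plan is to establish soundness, cut-elimination, and cut-free completeness separately, exploiting the modularity of the labelled calculus. Observe first that $\lsbbi + \lsst$ is exactly $\lspslh + \{IU, D, S, \not=L, EM, CS, CS_C\}$, since $\lspslh = \lsbbi + \{C,P\}$. Thus the theorem amounts to showing that all the rules treated separately in Theorems~\ref{thm:add_iu}, \ref{thm:add_d}, and~\ref{thm:comp_s_cs} can be combined in a single calculus without interfering with one another. Each of the three properties is then obtained by re-running the corresponding argument for $\lspslh$ while accumulating the extra cases contributed by every additional rule.

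For soundness I would argue, exactly as in the soundness proof for $\lspslh$, that every rule preserves falsifiability reading upwards. Each rule was already shown sound with respect to frames satisfying its own characteristic property (the soundness results for $IU$, $D$, and for $S,\not=L,EM,CS,CS_C$). Since a frame with full separation theory satisfies all of these semantic conditions simultaneously, every rule of $\lsbbi+\lsst$ preserves falsifiability over precisely this class, so derivability of $\vdash w:F$ yields validity of $F$ in all such frames. For cut-elimination I would reuse the procedure for $\lspslh$, checking that the supporting lemmas (substitution, admissibility of weakening, invertibility, admissibility of contraction) all survive. The substitution-style rules $P,C$ and the single-substitution rules $IU,D,Eq_1,Eq_2$ are handled uniformly as in the $Eq_1$ case; the additive structural rules $S,\not=L,EM$ and the cross-split rules $CS,CS_C$ introduce no principal formula and commute freely with $cut$. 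The delicate point is contraction admissibility: the special-case rules $A_C$ and $CS_C$ were introduced precisely so that contracting a duplicated principal relational atom can be matched by an instance whose two atoms coincide, and I would verify that no new cross-rule interaction among the full rule set forces a further special case.

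The bulk of the work is completeness, by the counter-model construction of Section~\ref{sec:complete_lspslh}. I would enlarge the definition of Hintikka sequent (Def.~\ref{definition:hintikka_seq}) with one saturation clause per property: for $IU$/$D$, clauses forcing $a =_\Gcal \epsilon$ whenever $(a,b\simp\epsilon)$ (resp. $(a,a\simp b)$) occurs; for splittability, a clause providing non-unit splittings of every non-unit label together with the bookkeeping of the inequality structures; and for $CS,CS_C$, clauses yielding the four ``intersection'' labels. Correspondingly I would extend the extended-formula grammar with dummy principals $\Sbb,\EMbb,\CSbb$ (and a variant for $CS_C$) so the fair scheduler enumerates every new structural rule infinitely often, and add the matching cases to the sequent-series construction of Def.~\ref{definition:sequent-series}, always selecting fresh labels via the indexing. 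The analogues of Lemmas~\ref{lm:construction} and~\ref{lem:lim_hintikka} then show the limit sequent is a Hintikka sequent of the enlarged kind, and an extension of Lemma~\ref{lem:hintikka_sat} extracts from it a Kripke relational frame whose relation, by construction, verifies every separation-theory condition; as in Theorem~\ref{thm:completeness} this frame falsifies $a_1:F$, so $F$ is not valid.

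The hard part will be this completeness step, and within it two interlocking difficulties. First, the inequality structures needed for splittability are a genuinely new kind of structure, so I must re-examine the notions of finite-consistency and of a consistent finite sequent to accommodate them, and confirm that the $EM$ and $\not=L$ rules keep the constructed sequents consistent whenever an inequality and its negation would otherwise coexist. Second, I must verify that the single frame extracted from the limit sequent meets all properties at once: disjointness must not clash with the non-unit splittings demanded by splittability, and the fresh ``intersection'' worlds of cross-split must respect cancellativity and partial-determinism. Because Theorem~\ref{thm:comp_s_cs} already reconciles splittability and cross-split with cancellativity and partial-determinism, and because disjointness subsumes indivisible unit, I expect no real conflict; the labour lies in assembling all the saturation clauses into one construction and checking, case by case, that the limit sequent satisfies every Hintikka condition simultaneously.
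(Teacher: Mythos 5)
Your proposal is correct and matches the paper's (largely implicit) argument: the paper proves this theorem precisely by combining the soundness, cut-elimination, and counter-model-construction treatments already given for $IU$, $D$, and $\{S,\not=L,EM,CS,CS_C\}$ on top of $\lspslh$, extending the Hintikka sequent, the scheduler with dummy principals, and the limit-sequent construction exactly as you describe. The only cosmetic difference is that for $IU$ and $D$ the paper adds no new Hintikka clauses at all but instead folds these substitution rules into the entailment $\vdash_E$ (so the existing conditions, being parametric in $=_\Gcal$, handle them automatically), which is equivalent in effect to the explicit clauses you propose.
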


Subsystems of $\lsbbi+\lsst$ are also easy to obtain, we leave them
for the reader to verify. We give some examples of subsystems in the next section.

\section{Example subsystems of $\lsbbi+\lsst$}

We now consider various labelled calculi obtained
by extending $\lsbbi$ with one or more structural rules that correspond to
partial-determinism ($P$), cancellativity ($C$),
indivisible unit (IU), and disjointness ($D$). 
Most of the results in this section either directly follow from the proofs 
in previous sections, or are easy adaptations.  As those conditions 
for monoids are often given in a modular way, e.g., 
in~\cite{dockins2009,brotherston2013}, it is not
surprising that our structural rules can also be added modularly to $\lsbbi$, since they
just simulate those conditions directly and individually in the labelled sequent calculus.

\paragraph{Calculi without cancellativity.}
Some notions of separation logic omit cancellativity~\cite{JensenBirkedal2012},
so dropping the rule $C$ in $\lspslh$ gives an interesting system. 
The proofs in Sec.~\ref{sec:complete_lspslh} still work
if we just insist on a partial commutative monoid, and drop $C$ in $\vdash_E$. 

\begin{theorem}
\label{theorem:lsbbi_p}
The labelled sequent calculus $\lsbbi + P$ is sound and cut-free complete with respect to the partial commutative monoidal semantics for BBI.
\end{theorem}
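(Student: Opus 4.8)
The plan is to follow exactly the same three-stage architecture used for $\lspslh$ (soundness, cut-elimination, counter-model completeness), but with the rule $C$ removed everywhere and the target semantics weakened from separation algebras to arbitrary partial commutative monoids (i.e., PASL frames without the cancellativity condition). For \emph{soundness}, the argument from the Soundness Theorem carries over unchanged: each remaining rule preserves falsifiability upwards, and since we have simply deleted a rule, no new rule needs checking; the only observation required is that the partial commutative monoid frames form a superclass of separation algebras, so validity is taken with respect to this larger class. For \emph{cut-elimination}, I would reuse the Substitution, weakening-admissibility, Invertibility, and contraction-admissibility lemmas, noting that none of their proofs rely on $C$, and that removing a structural rule cannot introduce new cut-reduction cases; hence the cut-elimination theorem for $\lsbbi + P$ is immediate.

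The substantive work is \emph{completeness}, and here I would mirror Section~\ref{sec:complete_lspslh} step by step. First I would form the intermediate system: define $\vdash_E$ exactly as in Def.~\ref{dfn:equiv-entail} but with the generating rule set restricted to $\{Eq_1, Eq_2, P\}$, dropping $C$. I would then re-establish Lemma~\ref{lm:eq_concat} and Lemma~\ref{lm:vdashe_eq} (that $=_\Gcal$ is an equivalence relation) and the equivalence Lemma~\ref{lm:interm} between the intermediate and direct calculi; all of these go through verbatim, since their proofs use the substitution lemma and generic properties of $\Scal(\Gcal,\sigma)$ rather than any specific feature of $C$. Next I would take the Hintikka-sequent definition (Def.~\ref{definition:hintikka_seq}) and simply omit the closure condition induced by cancellativity; correspondingly, in the scheduler (Def.~\ref{definition:fair}) and the sequent-series construction (Def.~\ref{definition:sequent-series}) I would keep all the clauses for the logical connectives and for $U, E, A$, while removing the $C$-related saturation behaviour from $\vdash_E$.

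The key remaining obligation is the analogue of Lemma~\ref{lem:hintikka_sat}: that every Hintikka sequent (in the weakened sense) yields a falsifying model. Here I must verify that the frame extracted from the limit sequent is a \emph{partial commutative monoid} rather than a separation algebra — that is, it satisfies identity, commutativity, associativity, and partial-determinism, but we no longer need (and no longer claim) cancellativity. The frame construction takes worlds to be $=_\Gcal$-equivalence classes and defines the ternary relation from the relational atoms of $\Gcal^\omega$; identity comes from Hintikka condition~12, commutativity from condition~13, associativity from condition~14, and partial-determinism is forced by the use of quotient-by-$=_\Gcal$ together with rule $P$ (whose saturation is still present in $\vdash_E$). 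Crucially, the forcing-relation verification for $\mand$ and $\mimp$ in Lemma~\ref{lem:hintikka_sat} never appealed to cancellativity, so the proof of falsifiability is unaffected by its removal. The main obstacle, and the one point requiring genuine care, is confirming that dropping the cancellativity closure condition from the Hintikka definition does not silently break the extracted frame or the forcing argument — i.e., checking that cancellativity was a \emph{free} semantic bonus rather than a load-bearing assumption in any of the earlier lemmas. Once this is confirmed, Lemma~\ref{lem:lim_hintikka} (limit sequent is Hintikka) and the Completeness Theorem~\ref{thm:completeness} transfer directly, giving the desired result for $\lsbbi + P$ against partial commutative monoidal semantics.
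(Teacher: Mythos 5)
Your proposal is correct and takes essentially the same route as the paper, which proves this theorem simply by observing that the proofs of Section~\ref{sec:complete_lspslh} go through once $C$ is dropped from the generating set of $\vdash_E$ and the extracted frame is only claimed to be a partial commutative monoid (cancellativity of the quotient frame was derived precisely from the presence of $C$ in $\vdash_E$, and nothing else in the forcing argument uses it). Two cosmetic corrections: the Hintikka-sequent definition is parametric in $\vdash_E$, so there is no explicit cancellativity closure condition to delete from it, and the structural saturation should retain $\Abb_C$ alongside $U$, $E$, $A$ --- the subscript there denotes the contracted associativity case, not cancellativity.
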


As a result, it is easy to obtain the following sound and complete labelled calculi 
for the corresponding semantics: $\lsbbi + P + IU$ and  $\lsbbi + P + D$. 
The proofs are similar to that for Theorem~\ref{theorem:add_iu}.

\paragraph{Calculi  without partial-determinism.} 
Similar to above, dropping partial-determinism gives another sound and
complete labelled calculus $\lsbbi + C$,
although we are not aware of any concrete models in separation logic that employ
this framework.

\begin{theorem}
\label{theorem:lsbbi_C}
The labelled sequent calculus $\lsbbi + C$ is sound and cut-free complete with respect to the cancellative commutative monoidal semantics for BBI.
\end{theorem}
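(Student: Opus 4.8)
The plan is to mirror the proof of Theorem~\ref{theorem:lsbbi_p} for the dual case, establishing soundness and cut-free completeness of $\lsbbi + C$ with respect to the cancellative commutative monoidal semantics. Soundness follows the same route as the soundness proof of $\lspslh$: I would check that the rule $C$ preserves falsifiability upwards, which is immediate since $C$ directly encodes the cancellativity condition $(R(h_1,h_2,h_3) \,\&\, R(h_1,h_4,h_3)) \Rightarrow h_2 = h_4$, and the remaining $\lsbbi$ rules are already known to be sound for commutative monoidal semantics. Cut-elimination carries over because $C$ is a structural substitution rule whose cut-elimination treatment is identical to that for $Eq_1$ (as already noted in the cut-elimination theorem for $\lspslh$), and dropping the rule $P$ only removes cases rather than introducing new difficulties.

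The substantive part is cut-free completeness, and here the strategy is to replay the counter-model construction of Section~\ref{sec:complete_lspslh} with partial-determinism excised throughout. First I would form the intermediate system, call it $ILS_{BBI}+C$, by isolating the substitution rules $Eq_1,Eq_2,C$ into the equivalence entailment $\vdash_E$, exactly as in Def.~\ref{dfn:equiv-entail} but with the sequence $\sigma$ drawn from $\{Eq_1,Eq_2,C\}$ only (omitting $P$). The equivalence-relation properties of $=_\Gcal$ (Lemma~\ref{lm:vdashe_eq}) and the equivalence between the intermediate and original systems (Lemma~\ref{lm:interm}) go through unchanged, since their proofs rest on the substitution Lemma~\ref{lem:subs}, which holds for any subset of the structural rules.

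Next I would adapt the Hintikka-sequent machinery. Crucially, the only Hintikka condition tied to partial-determinism is absent here, so the relevant change is purely the removal of $P$ from the scheduler's repertoire of structural rules and the corresponding omission of its case in the sequent-series construction (Def.~\ref{definition:sequent-series}); all other cases, including those for $C$ via $\vdash_E$, are retained verbatim. The counter-model extraction (Lemma~\ref{lem:hintikka_sat}) then produces a frame satisfying identity, commutativity, associativity, and cancellativity but not necessarily partial-determinism, i.e.\ a cancellative commutative monoidal frame in the intended (possibly non-deterministic) sense.

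The main obstacle I anticipate is verifying that the extracted frame genuinely validates cancellativity without partial-determinism, since in the full PASL proof the two properties interact through $\vdash_E$ and the construction of equivalence classes $[a]_\Gcal$. I expect the resolution to be that cancellativity is enforced entirely through the $C$-applications folded into $\vdash_E$, which merge labels $x$ and $y$ whenever $(x,w\simp z)$ and $(x,y\simp z)$ both appear, so that the quotient relation $R$ defined on $=_\Gcal$-classes automatically satisfies the cancellativity closure condition. The delicate point is confirming that dropping $P$ does not leave the relation $R$ ill-defined on equivalence classes; since the quotient is taken with respect to $=_\Gcal$ rather than with respect to functionality of $R$, well-definedness of $R$ as a ternary \emph{relation} (not a partial function) suffices, and this is exactly what the non-deterministic monoidal semantics for BBI requires. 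I would therefore conclude by appealing to Lemma~\ref{lem:lim_hintikka} and Theorem~\ref{thm:completeness}, instantiated to the reduced rule set, to obtain the counter-model and hence completeness.
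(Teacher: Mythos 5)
Your proposal is correct and follows essentially the same route as the paper, which proves this theorem only by remarking that the Section~3 arguments go through when $P$ is dropped from the equivalence entailment $\vdash_E$ (dually to the $\lsbbi+P$ case): soundness and cut-elimination are unaffected, and the counter-model construction yields a frame where cancellativity is verified via the $C$-steps in $\vdash_E$ while the partial-determinism check is simply omitted. One tiny imprecision: $P$ never appears in the scheduler or the sequent-series construction (only $\Ubb,\Ebb,\Abb,\Abb_C$ do), so the only change needed is to the definition of $\vdash_E$, exactly as you identify elsewhere in your argument.
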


Again, using a similar argument as in Theorem.~\ref{theorem:add_iu}, 
we can obtain sound and complete labelled calculi $\lsbbi + C + IU$ and $\lsbbi + C + D$.

\paragraph{Calculi without partial-determinism and cancellativity.}
The labelled calculus $\lsbbi + IU$ is sound and complete by 
Prop.~\ref{prop:iu_axiom}, and cut-elimination holds. 

\begin{theorem}
The labelled sequent calculus $\lsbbi + IU$ is sound and cut-free complete 
with respect to the commutative monoidal semantics for BBI with indivisible unit.
\end{theorem}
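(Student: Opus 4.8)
The plan is to reduce the theorem for $\lsbbi + IU$ to results already established in the paper, treating the indivisible-unit case entirely analogously to the cancellativity/partial-determinism cases handled in Section~\ref{sec:complete_lspslh}. Soundness is immediate: I would check that the rule $IU$ preserves falsifiability upwards, exactly as in the proof of the Soundness theorem. Concretely, given an extended model falsifying the conclusion $\myseq{(x,y\simp \epsilon);\Gcal}{\Gamma}{\Delta}$, the relational atom $(x,y\simp\epsilon)$ forces $R(\rho(x),\rho(y),\epsilon)$, and the indivisible-unit frame condition yields $\rho(x)=\epsilon$; since any label mapping sends $\epsilon$ to the identity world, substituting $\epsilon$ for $x$ produces a model falsifying the premise. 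Cut-elimination is equally routine: $IU$ is a structural rule employing a label substitution, so its interaction with $cut$ is handled by the same argument used for $Eq_1$, relying on the Substitution Lemma~\ref{lem:subs}; the other admissibility lemmas (weakening, invertibility, contraction) transfer without change.

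The completeness direction is the substantive part, and here the cleanest route is to mirror the counter-model construction of Section~\ref{sec:complete_lspslh}. First I would fold $IU$ into the intermediate system by incorporating its substitution into the equivalence entailment $\vdash_E$, extending Definition~\ref{dfn:equiv-entail} so that $\sigma$ ranges over applications of $Eq_1,Eq_2,IU$ (no $P,C$ here, since this calculus omits them). Lemmas~\ref{lm:eq_concat} and~\ref{lm:vdashe_eq} still hold because they depend only on the Substitution Lemma, and the equivalence of the intermediate system to $\lsbbi+IU$ (the analogue of Lemma~\ref{lm:interm}) goes through verbatim. Then I would augment the Hintikka-sequent conditions (Def.~\ref{definition:hintikka_seq}) with a clause capturing indivisible unit semantically, namely: if $(a,b\simp c)\in\Gcal$ and $c =_\Gcal \epsilon$, then $a =_\Gcal \epsilon$ and $b =_\Gcal \epsilon$. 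Correspondingly I would add a new extended-formula constant for $IU$ to the scheduler and a matching case to the sequent-series construction of Def.~\ref{definition:sequent-series} that, whenever $(x,y\simp z)\in\Gcal_i$ with $\Gcal_i\vdash_E(z=\epsilon)$, records the effect of applying $IU$. With this clause in place, one re-runs Lemma~\ref{lem:hintikka_sat} to confirm that the frame read off a Hintikka sequent now additionally satisfies the indivisible-unit condition, and re-runs Lemma~\ref{lem:lim_hintikka} to confirm that the limit sequent of an unprovable formula still meets all the (now augmented) Hintikka conditions.

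The main obstacle is verifying that the new Hintikka clause is actually satisfied by the limit sequent, i.e.\ that fairness plus the new construction case genuinely saturates indivisible unit. The delicate point is the interaction between $=_\Gcal$ and the substitution performed by $IU$: because $IU$ substitutes $\epsilon$ for a variable rather than merging two existing labels, I must confirm that the entailment machinery correctly propagates ``$c =_\Gcal \epsilon$'' to ``$a =_\Gcal \epsilon$ and $b =_\Gcal \epsilon$'' in the limit, and that this does not disturb the other saturation arguments (in particular the $\top^* L/R$, $id$, and $\mand/\mimp$ cases, which all consult $=_\Gcal$). Since Prop.~\ref{prop:iu_axiom} already shows $IU$ derives the intended axiom, and since the rule is structurally of the same shape as the substitution rules already treated, I expect this verification to be a careful but unsurprising adaptation rather than a source of genuinely new difficulty; the modularity of labelled calculi, emphasised throughout Section~\ref{sec:extension_psl}, is precisely what makes the transfer mechanical.
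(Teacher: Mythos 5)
Your argument is essentially correct, but it is not the route the paper takes for this particular theorem. Because indivisible unit --- unlike partial-determinism, cancellativity and disjointness --- \emph{is} BBI-axiomatisable, by $\top^*\land(A\mand B)\limp A$~\cite{brotherston2013}, the paper dispatches $\lsbbi+IU$ in one line: Prop.~\ref{prop:iu_axiom} shows this axiom is derivable in $\lsbbi+IU$, so completeness is inherited from the Hilbert-axiomatisation argument already used to prove $\lsbbi$ complete in~\cite{hou2013}, and cut-elimination (checked for $IU$ exactly as you describe) upgrades this to cut-free completeness. Your counter-model construction is instead the paper's strategy for $\lspslh+IU$ (Theorem~\ref{thm:add_iu}) and for the non-axiomatisable $\lsbbi+D$; applied here it costs more work but buys uniformity with the other extensions and delivers an explicit counter-model for each unprovable formula, rather than leaning on the external axiomatisability result. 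Two simplifications to your version are worth noting. First, the extra Hintikka clause and the new scheduler constant with its construction case are unnecessary: $IU$ is a substitution rule, so once it is folded into $\vdash_E$ the existing Hintikka conditions, which are parametric in $\vdash_E$, already force the quotient frame to satisfy indivisible unit --- given $[a],[b]\simp_\Gcal[\epsilon]$ one picks $(a',b'\simp c')\in\Gcal$ with $c'=_\Gcal\epsilon$ and derives $\Gcal\vdash_E(a'=\epsilon)$ by an $IU$ step inside the entailment, exactly as in the appendix proof of Theorem~\ref{thm:add_iu}; your worry about ``saturating'' indivisible unit therefore evaporates, since the property holds for \emph{any} $\Gcal$ once $IU$ is in $\vdash_E$, not only for limit sequents. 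Second, if your ``construction case'' were realised as an actual application of $IU$ during saturation it would perform a global substitution and break the monotonicity $\myseq{\Gcal_i}{\Gamma_i}{\Delta_i}\subseteq\myseq{\Gcal_{i+1}}{\Gamma_{i+1}}{\Delta_{i+1}}$ of Lemma~\ref{lm:construction}; the whole point of the intermediate system is that substitution rules are never applied during saturation, only consulted through $\vdash_E$. With those adjustments your proof goes through.
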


To prove the completeness of the calculus $\lsbbi + D$, we need to go
through the counter-model construction proof, since disjointness is
not axiomatisable. It is easy to check that the proofs in
Section~\ref{sec:complete_lspslh} do not break when we define
$\vdash_E$ by using $Eq_1,Eq_2,D$ only, and the Hintikka sequent then
gives the BBI Kripke relational frame that obeys disjointness. The
other proofs remain the same.
\begin{theorem}
The labelled sequent calculus $\lsbbi + D$ is sound and cut-free
complete with respect to the commutative monoidal semantics for BBI
with disjointness.
\end{theorem}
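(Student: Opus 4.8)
The plan is to mirror the completeness arguments of Section~\ref{sec:complete_lspslh}, specialising them to the calculus $\lsbbi + D$. The key observation, already noted in the excerpt, is that disjointness is \emph{not} axiomatisable in $\mathrm{BBI}$, so cut-elimination alone cannot yield completeness; we must instead redo the counter-model construction. First I would define the equivalence entailment $\vdash_E$ for this calculus by restricting the admissible substitution rules to $\{Eq_1, Eq_2, D\}$ (dropping $P$ and $C$, which are absent from $\lsbbi + D$). Since $D$, like $Eq_1$, is a substitution rule replacing a label variable by $\epsilon$, the Substitution Lemma (Lemma~\ref{lem:subs}) applies verbatim, so Lemma~\ref{lm:eq_concat} and hence the fact that $=_\Gcal$ is an equivalence relation (Lemma~\ref{lm:vdashe_eq}) carry over unchanged. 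I would then establish the analogue of the intermediate-system equivalence (Lemma~\ref{lm:interm}) for $\lsbbi + D$, so that counter-model construction performed in the intermediate system transfers to $\lsbbi + D$ itself.

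Next I would adapt the notion of Hintikka sequent (Def.~\ref{definition:hintikka_seq}) by adding one extra closure condition corresponding to disjointness: whenever $(a,a \simp b) \in \Gcal$ then $a =_\Gcal \epsilon$. Correspondingly, the scheduler (Def.~\ref{definition:fair}) must enumerate a new dummy principal formula, say $\Ebb\mathbb{D}$, for the structural rule $D$, and Def.~\ref{definition:sequent-series} gains a clause: when the scheduled relational atom is $(x,x \simp y) \in \Gcal_i$, we set $\Gcal_{i+1} = \Gcal_i[\epsilon/x] \cup \{(\epsilon,\epsilon \simp y[\epsilon/x])\}$ and likewise substitute $\epsilon$ for $x$ throughout $\Gamma_i, \Delta_i$. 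I would then verify that Lemma~\ref{lm:construction} (consistency preservation, freshness, and the chain property) still holds, the only new case being this $D$-clause; consistency is preserved here precisely because $D$ is a sound rule, so applying it upward to a consistent sequent yields a consistent premise.

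With the construction in hand, I would prove the analogue of Lemma~\ref{lem:lim_hintikka}, namely that the limit sequent satisfies all Hintikka conditions including the new disjointness condition; fairness of the scheduler guarantees that every applicable instance of $D$ is eventually fired, which forces the disjointness closure on $\Gcal^\omega$. Finally I would check that the counter-model extracted from a Hintikka sequent (Lemma~\ref{lem:hintikka_sat}) now lives in a frame satisfying disjointness: the frame is built from equivalence classes $[a]_\Gcal$ with the induced ternary relation, and the new condition ensures that $R([a],[a],[b])$ forces $[a] = [\epsilon]$, which is exactly the relational disjointness requirement. Completeness of $\lsbbi + D$ then follows exactly as Theorem~\ref{thm:completeness} and Corollary~\ref{cor:lspslh_complete} do, and cut-free completeness follows since cut-elimination for $\lsbbi + D$ is routine (the rule $D$ is treated like $Eq_1$, as already indicated for $P$ and $C$).

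The main obstacle I anticipate is not any single deep step but rather the bookkeeping in re-establishing Lemma~\ref{lm:construction} and Lemma~\ref{lem:lim_hintikka} with the substitution-based rule $D$ present: because $D$ substitutes $\epsilon$ for a label \emph{and} rewrites $\Gcal_i$, one must confirm that the freshness invariant $\Lcal_i \subseteq \{a_0,\ldots,a_{2i-1}\}$ and the finitely-consistent property interact correctly with these substitutions, much as the original proof had to handle $Eq_1, Eq_2$. I expect this to be genuinely routine given the machinery already developed for $\lspslh$, which is why the excerpt states only that ``the proofs in Section~\ref{sec:complete_lspslh} do not break.''
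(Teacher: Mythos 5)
Your first paragraph is exactly the paper's route: fold $D$ into $\vdash_E$ alongside $Eq_1,Eq_2$, re-check the substitution and equivalence lemmas, and rerun the counter-model construction. The paper stops there, because (as its proof of Theorem~\ref{thm:add_iu} makes explicit for the analogous $IU$ case) the Hintikka definition is \emph{parametric} in $\vdash_E$: no new Hintikka clause and no new scheduler case are needed. Frame disjointness is then read off directly: given $[a],[a]\simp_\Gcal[b]$, there is $(a',a''\simp b')\in\Gcal$ with $a'=_\Gcal a''$, a $\vdash_E$-sequence makes $a'$ and $a''$ syntactically equal, $D$ then yields $\Gcal\vdash_E(a'=\epsilon)$, and Lemma~\ref{lm:eq_concat} gives $[a]=[\epsilon]$. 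Note that your proposed Hintikka condition, stated only for syntactically identical first components $(a,a\simp b)$, would not by itself cover this $=_\Gcal$-equal case; the $\vdash_E$-based argument is what closes it.

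The genuine problem is your second paragraph. You schedule $D$ as an explicit saturation step that sets $\Gcal_{i+1}=\Gcal_i[\epsilon/x]\cup\{(\epsilon,\epsilon\simp y[\epsilon/x])\}$ and substitutes through $\Gamma_i,\Delta_i$. This destroys item~3 of Lemma~\ref{lm:construction}, the monotone chain $\myseq{\Gcal_i}{\Gamma_i}{\Delta_i}\subseteq_f\myseq{\Gcal_{i+1}}{\Gamma_{i+1}}{\Delta_{i+1}}$, on which the limit sequent (defined as a union) and the proof that it is a Hintikka sequent both rest: after your step, $(x,x\simp y)\in\Gcal_i\setminus\Gcal_{i+1}$, yet it reappears in $\Gcal^\omega$. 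Reintroducing substitutions into the saturation is precisely the difficulty the intermediate system was built to avoid, and it is also unnecessary: once $D$ lives inside $\vdash_E$ it must never be fired in Def.~\ref{definition:sequent-series} at all (just as $Eq_1,Eq_2,P,C$ never are). Deleting your $D$-clause and the extra Hintikka condition, and instead verifying disjointness of the quotient frame via $\vdash_E$ as above, repairs the argument and recovers the paper's proof. The cut-elimination remark at the end of your third paragraph is fine.
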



To summarise, our approach offers a sound and cut-free calculus for the
extension of BBI with every combination of the properties $P,C,IU,D$. The case
where none of the properties hold, i.e. regular BBI, have already been
solved~\cite{hou2013,park2013}. Omitting the cases covered by the implication
of $IU$ by $D$, this provides us with the following eleven labelled calculi:
$$
\begin{array}{lll}
\lsbbi + IU & \lsbbi + C & \lsbbi + D\\
\lsbbi + P & \multicolumn{2}{l}{\lspslh (= \lsbbi + P + C)} \\
\lsbbi + P + IU & \lsbbi + C + IU & \lspslh + IU\\
\lsbbi + P + D & \lsbbi + C + D & \lspslh +D
\end{array}
$$

The subsystems containing splittability or cross-split can be obtained
by incorporating the treatments in the proof for
Thm.~\ref{thm:comp_s_cs}. As there are too many combinations, we do
not show them here.



\section{Implementation and experiment}
\label{sec:experiments}

We discuss here an implementation of the proof system $\lspslh + D$.
It turns out that the $A_C$ rule is admissible in this system; in fact
it is admissible in the subsystem $\lsbbi + C$, as shown next. 
So we do not implement the $A_C$ rule. 
See Appendix~\ref{sec:app4} for detailed proofs in this section.

\begin{proposition}
\label{prop:admissible_a_c}
The $A_C$ rule is admissible in $\lsbbi + C.$
\end{proposition}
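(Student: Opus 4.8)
The plan is to establish admissibility in the usual local form: it suffices to show that whenever the single premise of an instance of $A_C$ is derivable in $\lsbbi + C$, so is its conclusion; removing topmost occurrences of $A_C$ one at a time then purges the rule from any derivation. So I fix an instance with conclusion $\myseq{(x,y\simp x);\Gcal}{\Gamma}{\Delta}$ and premise $P_0 = \myseq{(x,w\simp x);(y,y\simp w);(x,y\simp x);\Gcal}{\Gamma}{\Delta}$, where $w$ is fresh, and assume $P_0$ is derivable in $\lsbbi + C$.

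The key observation is semantic but will be realised proof-theoretically. The relational atom $(x,y\simp x)$ asserts $x\circ y = x$, and together with the identity fact $x\circ\epsilon = x$, cancellativity forces $y = \epsilon$. Once $y$ collapses to $\epsilon$, the fresh label $w\;(= y\circ y)$ also collapses to $\epsilon$, so both relational atoms added by $A_C$ degenerate into trivial instances of identity. I will turn this into a derivation of the conclusion using only $U$, $C$ and the Substitution Lemma (Lemma~\ref{lem:subs}), all of which are available in $\lsbbi + C$.

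Concretely, I would build the derivation bottom-up as follows. First apply $U$ to the conclusion to introduce the identity atom $(x,\epsilon\simp x)$, reducing the goal to $\myseq{(x,\epsilon\simp x);(x,y\simp x);\Gcal}{\Gamma}{\Delta}$. The two atoms $(x,\epsilon\simp x)$ and $(x,y\simp x)$ share their first argument $x$ and their result $x$, so a single application of $C$, taking $(x,\epsilon\simp x)$ as the retained atom and thereby substituting $\epsilon$ for $y$, reduces the goal to $S_1 := \myseq{(x,\epsilon\simp x);\Gcal[\epsilon/y]}{\Gamma[\epsilon/y]}{\Delta[\epsilon/y]}$. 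It remains to derive $S_1$ from $P_0$. Applying Lemma~\ref{lem:subs} to $P_0$ first with $[\epsilon/y]$ and then with $[\epsilon/w]$ yields a derivation of $\myseq{(x,\epsilon\simp x);(\epsilon,\epsilon\simp \epsilon);\Gcal[\epsilon/y]}{\Gamma[\epsilon/y]}{\Delta[\epsilon/y]}$; here freshness of $w$ and the genericity $x\neq y$ guarantee that the remaining atoms of $P_0$ simplify exactly to these. A final application of $U$ with label $\epsilon$ deletes the trivial atom $(\epsilon,\epsilon\simp\epsilon)$ to give $S_1$. Composing these steps produces the required $\lsbbi + C$ derivation of the conclusion, and since only $U$, $C$ and substitution are used, the argument indeed lives in $\lsbbi + C$, as asserted.

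The main point requiring care is the interaction of the substitutions with the side condition that $\epsilon$ may not be substituted \emph{for}: the cancellativity step must be set up with $\epsilon$ as the second argument of the retained atom, so that the induced substitution is the legal $[\epsilon/y]$ rather than the forbidden $[y/\epsilon]$. I would also dispatch the degenerate cases ($y=\epsilon$, or $x=\epsilon$, or $x=y$) separately: for instance when $y=\epsilon$ the $C$ step is vacuous and one argues directly by substituting $[\epsilon/w]$ in $P_0$ and then applying $U$, while $x=y$ collapses $x$ to $\epsilon$ by the same cancellativity argument. The remaining routine check is that $w$ being fresh makes the double substitution on $P_0$ behave precisely as claimed.
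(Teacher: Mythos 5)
Your proposal is correct and follows essentially the same route as the paper's proof: introduce $(x,\epsilon\simp x)$ by $U$, use $C$ against $(x,y\simp x)$ to force the substitution $[\epsilon/y]$, and discharge the resulting goal by applying the Substitution Lemma with $[\epsilon/y]$ and $[\epsilon/w]$ to the premise followed by one more $U$ to absorb $(\epsilon,\epsilon\simp\epsilon)$. Your additional remarks on the direction of the substitution and the degenerate label cases are sensible refinements that the paper leaves implicit.
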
 

On the other hand, we restrict the rule $U$ to create the identity
relational atom $(w,\epsilon\simp w)$ only if $w$ occurs in the
conclusion (denoted as $U'$ below). This does not reduce the power of $\lspslh$, as will be
shown next.

\begin{lemma}
\label{lem:u_rule}
If $\myseq{\Gcal}{\Gamma}{\Delta}$ is derivable in $\lspslh$, then it is derivable 
in $\lspslh - U + U'$.
\end{lemma}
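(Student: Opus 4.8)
The plan is to prove this lemma \emph{semantically}, by combining the soundness of \lspslh\ with a completeness argument for the restricted calculus $\lspslh - U + U'$, rather than by a direct syntactic transformation of derivations. I deliberately avoid a purely syntactic ``prune the fresh world'' induction: a world $x$ introduced by $U$ on a fresh label is not genuinely isolated, because the atom $(x,\epsilon\simp x)$ links $x$ to $\epsilon$, and the structural rules $E,A,A_C$ together with $\mimp L$ can propagate formulae between $x$ and $\epsilon$ going upward. This makes the ``$x$ never matters'' invariant awkward to state and maintain, whereas a semantic route sidesteps it entirely.

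First I would note that $U'$ is merely an instance of $U$, so every rule of $\lspslh - U + U'$ is sound; hence, by the falsifiability-preservation argument underlying the Soundness theorem, any sequent $\myseq{\Gcal}{\Gamma}{\Delta}$ derivable in \lspslh\ is not falsifiable. It therefore suffices to establish a completeness statement for the restricted calculus, namely that every non-falsifiable sequent is derivable in $\lspslh - U + U'$; equivalently, by contraposition, that every sequent with no (cut-free) derivation in $\lspslh - U + U'$ is falsifiable. Chaining these gives the lemma: derivable in \lspslh\ $\Rightarrow$ not falsifiable $\Rightarrow$ derivable in $\lspslh - U + U'$.

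Second, I would obtain this restricted completeness by re-running the counter-model construction of Section~\ref{sec:complete_lspslh} in the corresponding intermediate system (appealing to the equivalence of Lemma~\ref{lm:interm}), generalised to start from an arbitrary finite sequent rather than from $\vdash a_1:F$; this generalisation is routine since none of the Hintikka machinery depends on the shape of the initial sequent. The only substantive change is in the $\Ubb$ case of Definition~\ref{definition:sequent-series}: instead of adding $(a_n,\epsilon\simp a_n)$ for every index $n\le 2i+1$, I would add it only when $a_n$ already occurs in $\myseq{\Gcal_i}{\Gamma_i}{\Delta_i}$, so that this step faithfully mirrors $U'$ rather than $U$. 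Because the scheduler is fair, for each label $a_n$ that ever occurs the schedule with extended formula $\Ubb$ and relational set $\{(a_n,\epsilon\simp a_n)\}$ is selected at infinitely many stages, and at all sufficiently late such stages $a_n$ does occur; hence the limit sequent still contains $(a_n,\epsilon\simp a_n)$ for every occurring label.

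The main obstacle is then re-verifying condition~12 of Definition~\ref{definition:hintikka_seq} under this restriction. As literally stated it quantifies over all labels $m\in\Lcal$, whereas the modified construction only guarantees $(m,\epsilon\simp m)\in\Gcal^\omega$ for labels $m$ occurring in the limit sequent. I would argue this suffices: in the falsifiability proof (Lemma~\ref{lem:hintikka_sat}) the carrier of the extracted frame consists exactly of the $=_{\Gcal^\omega}$-equivalence classes of \emph{occurring} labels, so the identity law of the frame only ever appeals to $(m,\epsilon\simp m)$ for occurring $m$, while non-occurring labels contribute no worlds and are irrelevant. Confirming that this occurring-labels form of condition~12 is all that Lemma~\ref{lem:hintikka_sat} actually consumes — and that no other Hintikka condition silently relied on the unrestricted $U$ — is the delicate part. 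Once this is checked, the modified limit sequent is a genuine Hintikka sequent and hence falsifiable, establishing completeness of $\lspslh - U + U'$ and, combined with the non-falsifiability from soundness, the lemma.
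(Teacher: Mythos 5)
Your semantic route is viable in outline, but it is a genuinely different and far heavier argument than the paper's, and your reason for rejecting the syntactic route rests on a misreading of what that route is. The paper does not ``prune'' the fresh world introduced by $U$: it splits $U$ into $U'$ and the fresh-label case $U''$, and eliminates $U''$ by \emph{substituting} an existing label $w$ of the conclusion for the fresh label $x$ via the Substitution Lemma (Lemma~\ref{lem:subs}). Under that substitution the atom $(x,\epsilon\simp x)$ becomes $(w,\epsilon\simp w)$, which $U'$ can produce, and a one-line induction on derivation height finishes the proof. So the invariant you feared having to maintain never arises; the fresh world is merged, not deleted.

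Your alternative buys nothing here and costs a great deal: it requires generalising the entire counter-model construction to arbitrary initial sequents, setting up and proving equivalent a restricted intermediate system, and re-verifying every Hintikka and frame condition. Moreover it contains a factual slip you would have to repair: in Lemma~\ref{lem:hintikka_sat} the carrier is $H=\{[a]_\Gcal\mid a\in\Lcal\}$, i.e.\ equivalence classes of \emph{all} labels, not just occurring ones, and the identity check genuinely invokes condition~12 for every $m\in\Lcal$. To make your weakened condition~12 suffice you must also change the definition of $H$ to range over occurring labels only and re-check that all frame conditions and forcing clauses survive (they should, since all witnesses come from relational atoms of $\Gcal^\omega$, but this is an additional modification you assert rather than make). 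None of this is fatal, but it turns a half-page substitution argument into a re-run of Section~\ref{sec:complete_lspslh}; I would strongly recommend the paper's direct approach.
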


Our implementation is based on the following strategy when applying rules:
\begin{enumerate}
\item Try to close the branch by rules $id,\bot L,\top^* R,\top^* R$.
\item If (1) not applicable, apply all possible $Eq_1,Eq_2,P,C,IU,D$
  rules to unify labels
\footnote{Although $IU$ is admissible, we keep it because it simplifies proof search.}.
\item If (1-2) not applicable, apply invertible rules $\land L$, $\land R$, $\limp L$, $\limp R$, $\mand L$, $\mimp R$, $\top^* L$ in all possible ways.
\item If (1-3) not applicable, try $\mand R$ or $\mimp L$ by choosing existing relational atoms.
\item If none of the existing relational atoms are applicable, or all combinations of $\mand R,\mimp L$ formulae and relational atoms are already applied in (4), apply structural rules on the set $\Gcal_0$ of relational atoms in the sequent as follows.
\begin{enumerate}
\item Use $E$ to generate all commutative variants of existing relational atoms in $\Gcal_0$, giving a set $\Gcal_1$.
\item Apply $A$ for each applicable pair in $\Gcal_1$, generating a set $\Gcal_2$.
\item Use $U'$ to generate all identity relational atoms for each label in $\Gcal_2$, giving the set $\Gcal_3$.
\end{enumerate}
\item If none of above is applicable, fail.
\end{enumerate}
Step (2) is terminating, because each substitution eliminates a label,
and we only have finitely many labels. Step (5) is not applicable when $\Gcal_3 =
\Gcal_0$. It is also clear that step (5)
is terminating. Note that we forbid applications of the rule $A$ to
the pair $\{(x,y\simp z),(u,v\simp x)\}$ of relational atoms when
$\{(u,w\simp z),(y,v\simp w)\}$, for some label $w$, (or any
commutative variants of this pair, e.g., $\{(w,u\simp z);(v,y\simp w)\}$
) is already in the sequent. This is because the created
relational atoms in such an $A$ application can be unified to existing
ones by using rules $P$,$C$.

We view $\Gamma,\Delta$ in a sequent $\myseq{\Gcal}{\Gamma}{\Delta}$
as lists,
and each time a logical rule is
applied, we place the subformulae in the front of the list. Thus our
proof search has a ``focusing flavour'', that always tries to
decompose the subformulae of a principal formula if possible. 
To guarantee completeness, each time we apply a $\mand R$ or $\mimp L$
rule, the principal formula is moved to the end of the list, so that
each principal formula for non-determinism rules $\mand R,\mimp L$ is
considered fairly, i.e., applied in turn.

\begin{table*}[t!]
\footnotesize
\centering
\begin{tabular}{|l|l|l|l|l|l|}
\hline
& Formula & BBeye  & $\fvlsbbi$ & Separata \\
   &     & (opt) & (heuristic) &  \\
\hline
(1) & $(a \mimp b) \land (\top \mand (\top^* \land a)) \limp b$ & 0.076 & 0.002 & 0.002\\
(2) & $(\top^* \mimp \lnot (\lnot a \mand \top^*)) \limp a$ & 0.080 & 0.004 & 0.002\\
(3) & $\lnot ((a \mimp \lnot (a \mand b)) \land ((\lnot a \mimp \lnot b) \land b))$ & 0.064 & 0.003 & 0.002\\
(4) & $\top^* \limp ((a \mimp (b \mimp c)) \mimp ((a \mand b) \mimp c))$ & 0.060 & 0.003 & 0.002\\
(5) & $\top^* \limp ((a \mand (b \mand c)) \mimp ((a \mand b) \mand c))$ & 0.071 & 0.002 & 0.004\\
(6) & $\top^* \limp ((a \mand ((b \mimp e) \mand c)) \mimp ((a \mand (b \mimp e)) \mand c))$ & 0.107 & 0.004 & 0.008\\
(7) & $\lnot ((a \mimp \lnot (\lnot (d \mimp \lnot (a \mand (c \mand b))) \mand a)) \land c \mand (d \land (a \mand b)))$ & 0.058 & 0.002 & 0.006\\
(8) & $\lnot ((c \mand (d \mand e)) \land B)$ where & 0.047 & 0.002 & 0.013\\
 & $ B := ((a \mimp \lnot (\lnot (b \mimp \lnot (d \mand (e \mand c)))
\mand a)) \mand (b \land (a \mand \top)))$ & & &\\
(9) & $\lnot ( C \mand (d \land (a \mand (b \mand e))))$ where & 94.230
& 0.003 & 0.053\\
 & $C := ((a \mimp \lnot (\lnot (d \mimp \lnot ((c \mand e) \mand (b
\mand a))) \mand a)) \land c)$ & & & \\
(10) & $(a \mand (b \mand (c \mand d))) \limp (d \mand (c \mand (b \mand a)))$ & 0.030 & 0.004 & 0.002\\
(11) & $(a \mand (b \mand (c \mand d))) \limp (d \mand (b \mand (c \mand a)))$ & 0.173 & 0.002 & 0.002\\
(12) & $(a \mand (b \mand (c \mand (d \mand e)))) \limp (e \mand (d \mand (a \mand (b \mand c))))$ & 1.810 & 0.003 & 0.002\\
(13) & $(a \mand (b \mand (c \mand (d \mand e)))) \limp (e \mand (b \mand (a \mand (c \mand d))))$ & 144.802 & 0.003 & 0.002\\
(14) & $\top^* \limp (a \mand ((b \mimp e) \mand (c \mand d)) \mimp ((a \mand d) \mand (c \mand (b \mimp e))))$ & 6.445 & 0.003 & 0.044\\
(15) & $\lnot(\top^* \land (a \land (b \mand \lnot(c \mimp (\top^* \limp a)))))$ & timeout(1000s) & 0.003 & 0.003\\
(16) & $((D \limp (E \mimp (D \mand E))) \limp (b \mimp ((D \limp (E \mimp ((D \mand a) \mand a))) \mand b)))$, where  & 0.039 & 0.005 & 8.772\\
 & $D := \top^* \limp a$ and $E :=  a\mand a$ & & & \\
(17) & $((\top^* \limp (a \mimp (((a \mand (a \mimp b)) \mand \lnot b) \mimp (a \mand (a \mand ((a \mimp b) \mand \lnot b)))))) \limp$ & timeout(1000s) & fail & 49.584\\
 & $((((\top^* \mand a) \mand (a \mand ((a \mimp b) \mand \lnot b))) \limp (((a \mand a) \mand (a \mimp b)) \mand \lnot b)) \mand \top^*))$ & & &\\ 
(18) & $(F\mand F)\limp F$, where $F := \lnot(\top \mimp \lnot \top^*)$ & invalid & invalid & 0.004\\
(19) & $(\top^* \land (a \mand b)) \limp a$ & invalid & invalid & 0.003\\
\hline
\end{tabular}
\caption{Experimental results from the prover Separata.}
\label{tab:experiment1}
\end{table*}

We incorporate a number of optimisations in the proof search. (1)
Back-jumping~\cite{baader2003} is used to collect the ``unsatisfiable
core'' along each branch. When one premise of a binary rule has a
derivation, we try to derive the other premise only when the
unsatisfiable core is not included in that premise. 
(2) A search
strategy discussed by Park et al~\cite{park2013} is also adopted. For
$\mand R$ and $\mimp L$ applications, we forbid the search to consider
applying the rule twice with the same pair of principal formula and
principal relational atom, since the effect is the same as
contraction, which is admissible. 
(3) Previous work on theorem proving for BBI has shown that associativity of $\mand$ is
a source of inefficiency in proof search~\cite{park2013,hou2013}. We borrow
the idea of the heuristic method presented in~\cite{hou2013} to
quickly solve certain associativity instances. When we detect
$z:A\mand B$ on the right hand side of a sequent, we try to search for
possible worlds (labels) for the subformulae of $A,B$ in the sequent,
and construct a binary tree using these labels. For example, if we can
find $x:A$ and $y:B$ in the sequent, we will take $x,y$ as the
children of $z$. When we can build such a binary tree of labels,
the corresponding relational atoms given by the binary tree will be
used (if they are in the sequent) as the prioritised ones when
decomposing $z:A\mand B$ and its subformulae. Of course, without a
free-variable system, our handling of this heuristic method is just a
special case of the original one, but this approach can
speed up the search in certain cases.

The experiments in this paper are conducted on a Dell Optiplex 790 desktop with Intel CORE i7
2600 @ 3.4 GHz CPU and 8GB memory, running Ubuntu 13.04. The theorem provers are written in Ocaml.

We test our prover Separata for $\lspslh + D$ on the formulae listed in Table~\ref{tab:experiment1}; the times displayed are in seconds. We compare the results with provers for BBI, BBeye~\cite{park2013} and the incomplete heuristic-based $\fvlsbbi$~\cite{hou2013}, when the formula is valid in BBI. We run BBeye in an iterative deepening way, and the time counted for BBeye is the total time it spends. Formulae (1-14) are used by Park et al. to test their prover BBeye for BBI~\cite{park2013}.
We can see that for formulae (1-14) the performance of Separata is comparable with the heuristic based prover for $\fvlsbbi$. Both provers are generally faster than BBeye. 
Formula (15) is one that BBeye had trouble with~\cite{hou2013}, but Separata handles it trivially.
However, there are cases where BBeye is faster than Separata. We found
the example formula (16) from a set of testings on randomly generated
BBI theorems. Formula (17) is a converse example where a randomly
generated BBI theorem causes BBeye to time out and $\fvlsbbi$ with
heuristics to terminate within the timeout but without finding a proof
due to its incompleteness. 
Formula (18) is valid only when the monoid is partial~\cite{wendling2010}, and formula (19) is the axiom of indivisible unit. Some interesting cases for disjointness will be shown later.
We do not investigate the details in the performances between these provers because they are for different logics. We leave further optimisations for Separata as future work.


\section{Future work}
\label{sec:futurework}

In this paper we have focused on \emph{propositional} inference, but the
assertion language of separation logic is generally taken to include first-order
logic, usually extended with arithmetic, or at least equality. More importantly,
this language is interpreted only with respect to some \emph{concrete}
semantics, the most well-known of which is the original heap model of
Reynolds~\cite{reynolds2002}. We refer readers to that paper for a more careful
description of this model; for the purposes of this section we will remark that
\emph{values} range across the integers, and \emph{addresses} across some
specified subset of the integers; that \emph{heaps}
are finite partial functions from addresses to values; and that
\emph{expressions} are built up from variables (evaluated with respect to some
store), values, and the usual arithmetic operations.

The advantage of this model is that it supports the interpretation of the
\emph{points-to} predicate $\mapsto$, which allows direct reference to the
contents of the heap: $E\mapsto E'$ is satisfied by a heap iff it is a singleton
map sending the address specified by the expression $E$ to the value specified
by the expression $E'$%
.

The question for future research is whether our labelled sequent calculus and
implementation could be extended to reason about such concrete predicates; this
section presents preliminary work in this direction. While the full power of pointer
arithmetic is an important subject for future work, for the purpose of
this work we set arithmetic aside and let expressions range across store
variables $e,e_1,e_2,\ldots$ only, as is done for example by Berdine et
al~\cite{berdine2005b}. The rules for quantifiers are straightforward, e.g.:
\begin{center}
\AxiomC{$\myseq{\Gcal}{\Gamma;h:F[e/x]}{\Delta}$}
\RightLabel{\tiny $\exists L$}
\UnaryInfC{$\myseq{\Gcal}{\Gamma;h:\exists x.F}{\Delta}$}
\DisplayProof
$\quad$
\AxiomC{$\myseq{\Gcal}{\Gamma}{h:F[e/x];\Delta}$}
\RightLabel{\tiny $\exists R$}
\UnaryInfC{$\myseq{\Gcal}{\Gamma}{h:\exists x.F;\Delta}$}
\DisplayProof
\end{center}
where $e$ does not appear free in the conclusion of $\exists L$.

Equality between variables simply requires that they are assigned by the store
to the same value, giving rise to the rules
\begin{center}
\AxiomC{$\myseq{\Gcal}{\Gamma[e_2/e_1]}{\Delta[e_2/e_1]}$}
\RightLabel{\tiny $= L$}
\UnaryInfC{$\myseq{\Gcal}{\Gamma;h:e_1 = e_2}{\Delta}$}
\DisplayProof
$\quad$
\AxiomC{}
\RightLabel{\tiny $= R$}
\UnaryInfC{$\myseq{\Gcal}{\Gamma}{h:e = e;\Delta}$}
\DisplayProof
\end{center}

Points-to poses a more complex problem as it involves direct interaction
with the contents of heaps; Fig.~\ref{fig:pointsto} presents putative
labelled sequent rules for this predicate. The semantics of $e_1
\mapsto e_2$ first require that the heap be a singleton, which is a spatial
property that can be captured by abstract semantics: a `singleton' world
is not equal to the identity world $\epsilon$, and cannot be split into two
non-$\epsilon$ worlds. This motivates rules $\mapsto L_1$ and $\mapsto L_2$. The
rules $\mapsto L_3$ and $\mapsto L_4$ address the content of heaps:
$\mapsto L_3$ says that two heaps 
with the same address (value of $e_1$)
must be the same heap, and $\mapsto L_4$ says that a singleton heap makes a
unique assignment.
\begin{figure*}[ht!]
\centering
\begin{tabular}{cc}
\multicolumn{2}{c}{
\raisebox{-2ex}{
\AxiomC{}
\RightLabel{\tiny $\mapsto L_1$}
\UnaryInfC{$\myseq{\Gcal}{\Gamma;\epsilon:e_1\mapsto e_2}{\Delta}$}
\DisplayProof}
$\qquad$
\AxiomC{$\myseq{(\epsilon,h_0\simp h_0);\Gcal[\epsilon/h_1][h_0/h_2]}{\Gamma[\epsilon/h_1][h_0/h_2];h_0:e_1\mapsto e_2}{\Delta[\epsilon/h_1][h_0/h_2]}$}
\alwaysNoLine
\UnaryInfC{$\myseq{(h_0,\epsilon\simp h_0);\Gcal[\epsilon/h_2][h_0/h_1]}{\Gamma[\epsilon/h_2][h_0/h_1];h_0:e_1\mapsto e_2}{\Delta[\epsilon/h_2][h_0/h_1]}$}
\alwaysSingleLine
\RightLabel{\tiny $\mapsto L_2$}
\UnaryInfC{$\myseq{(h_1,h_2\simp h_0);\Gcal}{\Gamma;h_0:e_1\mapsto e_2}{\Delta}$}
\DisplayProof
}\\[25px]
\AxiomC{$\myseq{\Gcal[h/h']}{\Gamma[h/h'];h:e_1\mapsto e_2;h:e_1\mapsto e_3}{\Delta[h/h']}$}
\RightLabel{\tiny $\mapsto L_3$}
\UnaryInfC{$\myseq{\Gcal}{\Gamma;h:e_1\mapsto e_2;h':e_1\mapsto e_3}{\Delta}$}
\DisplayProof
&
\AxiomC{$\myseq{\Gcal}{\Gamma[e_1/e_3][e_2/e_4];h:e_1\mapsto e_2}{\Delta[e_1/e_3][e_2/e_4]}$}
\RightLabel{\tiny $\mapsto L_4$}
\UnaryInfC{$\myseq{\Gcal}{\Gamma;h:e_1\mapsto e_2;h:e_3\mapsto e_4}{\Delta}$}
\DisplayProof
\end{tabular}
\caption{Some rules for the predicate $\mapsto$ in separation logic.}
\label{fig:pointsto}
\end{figure*}

\begin{table*}[t!]
\footnotesize
\centering
\begin{tabular}{|l|l|l|}
\hline
& Formula & Separata+ \\
\hline
(1) & $((e_1 \mapsto e_2) \mand (e_1 \mapsto e_2)) \limp \bot$ & 0.004\\
(2) & $(((e1 \mapsto e2) \mand (e3 \mapsto e4)) \land ((e1 \mapsto e2) \mand (e5 \mapsto e6))) \limp ((e3 \mapsto e6) * \top)$ & 0.002\\
(3) & $(\exists x3 x2 x1.(((x3 \mapsto x2) \mand (x1 \mapsto e)) \land (x2 = x1))) \limp (\exists x4 x5.((x4 \mapsto x5) \mand (x5 \mapsto e)))$ & 0.001\\
(4) & $\lnot((e1 \mapsto e2) \mimp \lnot(e3 \mapsto e4)) \limp ((e1 = e3) \land ((e2 = e4) \land \top^*))$ & 0.004\\
(5) & $\lnot(((e1 \mapsto p) \mand (e2 \mapsto q)) \mimp \lnot(e3 \mapsto r)) \limp \lnot(((e1 \mapsto p) \mimp \lnot(\lnot((e2 \mapsto q) \mimp \lnot(e3 \mapsto r)))))$ & 0.002\\
(6) & $\lnot((e1 \mapsto p) \mimp \lnot(e2 \mapsto q)) \limp \lnot((e1 \mapsto p) \mimp \lnot((e2 \mapsto q) \land ((e1 \mapsto p) \mand \top)))$ & 0.003\\
\hline
\end{tabular}
\caption{Experimental results from the prover Separata+.}
\label{tab:experiment2}
\end{table*}

Our implementation of the calculus defined by adding these rules to
$LS_{PASL}+D$ is not complete w.r.t. Reynolds' semantics: for example it is unable to prove the
formula below, which is based on a property for septraction due to Vafeiadis and Parkinson~\cite{vafeiadis2007}, and is valid in the heap model:
\begin{equation}
\label{eq_unprovable}
\top^* \limp \lnot((e_1\mapsto e_2) \mimp \lnot(e_1\mapsto e_2))
\end{equation}
This formula essentially asserts
that a heap satisfying
$e_1\mapsto e_2$ is possible to construct, but our prover does not
support explicit heap construction.  Nevertheless this incomplete
calculus does support strikingly elegant proofs of non-trivial
separation logic inferences, such as the DISJOINT axiom
of Parkinson~\cite[Cha.~5.3]{parkinson2005}:
\begin{center}
\AxiomC{}
\RightLabel{\tiny $\mapsto L1$}
\UnaryInfC{$\myseq{(\epsilon,\epsilon\simp a_0)}{\epsilon:(e_1 \mapsto e_2)}{a_0:\bot} $}
\RightLabel{\tiny $D$}
\UnaryInfC{$\myseq{(a_1,a_1\simp a_0)}{a_1:(e_1 \mapsto e_2)}{a_0:\bot}$}
\RightLabel{\tiny $\mapsto L3$}
\UnaryInfC{$\myseq{(a_1,a_2\simp a_0)}{a_1:(e_1 \mapsto e_2);a_2:(e_1 \mapsto e_2)}{a_0:\bot}$}
\RightLabel{\tiny $\mand L$}
\UnaryInfC{$\myseq{~}{a_0:(e_1 \mapsto e_2) \mand (e_1 \mapsto e_2)}{ a_0:\bot} $}
\RightLabel{\tiny $\limp R$}
\UnaryInfC{$\myseq{~}{~}{a_0:((e_1 \mapsto e_2) \mand   (e_1 \mapsto e_2)) \limp   \bot}$}
\DisplayProof
\end{center}

Our experimental prover Separata+, extending $LS_{PASL}+D$ with the rules of this
section, has proved a number of tested SL formulae very rapidly; see
Table~\ref{tab:experiment2} for some examples. Formulae (1-3) are taken from Galmiche and
M\'{e}ry~\cite{galmiche2010}; in particular, the first is the DISJOINT axiom
proved above. Formulae (4-6) are taken from Vafeiadis and Parkinson's study of
magic wand's De Morgan dual \emph{septraction},
$\lnot (A \mimp \lnot B)$~\cite{vafeiadis2007}. These results present
encouraging evidence that the work of this paper may form the basis of practical
theorem proving for the assertion language of separation logic.

Dealing with splittability and cross-split in our labelled calculi is one of our next goals. We are also interested in extending the techniques of this paper to concrete
semantics other than Reynolds' heap models, such as those surveyed by Calcagno
et al~\cite{calcagno2007} and Jensen and Birkedal~\cite{JensenBirkedal2012}.

\section{Related work}
\label{sec:conclusion}

There are many more automated tools, formalisations, and logical embeddings for
separation logic than can reasonably be surveyed within the scope of this
conference paper. Almost all are not directly comparable to this paper because
they deal with separation logic for some \emph{concrete} semantics.

One exception to this concrete approach is Holfoot~\cite{Tuerk2009}, a HOL
mechanisation with support for automated reasoning about the `shape' of SL
specifications  -- exactly those aspects captured by abstract separation logic.
However, unlike Separata, Holfoot does not support magic wand. This is a common
restriction when automating any notion of SL, because $\mimp$ is a source of
undecidability~\cite{brochenin2012}.
Conversely, the mechanisations and
embeddings that do incorporate magic wand tend to give little thought to (semi-)
decision procedures. An important exception to this is the tableaux of
Galmiche and M\'{e}ry~\cite{galmiche2010}, which are designed for the decidable
fragment of the assertion language of concrete separation logic with $\mimp$
identified by Calcagno et al~\cite{Calcagno01}, but may also be extendable to
the full assertion language. These methods have not been implemented, and given
the difficulty of the development we expect that practical implementation would
be non-trivial.
Another partial exception to the trend to omit $\mimp$ is
SmallfootRG~\cite{SmallfootRG}, which supports automation yet includes
\emph{septraction}~\cite{vafeiadis2007}, the De Morgan dual of $\mimp$. However
SmallfootRG does not support additive negation nor implication, and so $\mimp$
cannot be recovered; indeed in this setting septraction is mere `syntactic
sugar' that can be eliminated.

The denigration of magic wand is not without cost, as the connective, while
surely less useful than $\mand$, has found application. A non-exhaustive list
follows: generating weakest preconditions via backwards
reasoning~\cite{IshtiaqOHearn01}; specifying
iterators~\cite{parkinson2005,neelakantan2006,haack09}; reasoning about
parallelism~\cite{Dodds2011}; and various applications of septraction, such as
the specification of iterators and buffers~\cite{Cherini09}. For a particularly
deeply developed example, see the correctness proof for the Schorr-Waite Graph
Marking Algorithm of Yang~\cite{YangPhD}, which involves non-trivial inferences
involving $\mimp$ (Lems. 78 and 79). These examples provide ample motivation to
build proof calculi and tool support that include magic wand. Undecidability,
which in any case is pervasive in program proof, should not deter us from
seeking practically useful automation. 

Our work builds upon the labelled sequent calculi for BBI of H\'{o}u
et al~\cite{hou2013}. Their prover $FVLS_{BBI}$ implements a
free-variable calculus for BBI but is incomplete. Our extensions to
H\'{o}u et al involves two main advances: first, a counter-model
construction necessary to prove completeness; second, our prover deals
with labelled sequents directly and (given certain fairness
assumptions) is a complete semi-decision procedure for PASL and its
variants.
%
The link between BBI and SL is also emphasised as
motivation by Park et al~\cite{park2013}, whose BBI prover BBeye was used for
comparisons in Sec.~\ref{sec:experiments}.
This work was recently refined by Lee and Park~\cite{Lee2013}, in work
independent to our own, to a labelled sequent calculus for Reynolds' heap model.
Their calculus, like ours, cannot prove the formula \eqref{eq_unprovable}%
\footnote{Confirmed by private communications with authors.}
and so is not complete for these semantics.
Also related, but so far not
implemented, are the tableaux for partial-deterministic BBI of Larchey-Wendling
and Galmiche~\cite{wendling2009,wendling2012}, which, as mentioned in the
introduction to this paper, are claimed to be extendable with cancellativity to
attain PASL via a ``rather involved'' proof. In contrast, the relative ease
with which certain properties can be added or removed from labelled sequent
calculi is an important benefit of our approach;
this advantage comes from structural rules which directly capture the conditions
on Kripke relational frames, and handle the equality of worlds by explicit
global substitutions.

Finally we note that the counter-model construction of this paper was necessary
to prove completeness because many of the properties we are interested in are
not BBI-axiomatisable, as proved by Brotherston and
Villard~\cite{brotherston2013}; that paper goes on to give a sound and complete
Hilbert axiomatisation of these properties by extending BBI  with techniques from hybrid logic.
Sequent calculus and proof search in this setting is another promising
future direction.

\bibliographystyle{plain}
\bibliography{root}

\appendix
\section{Appendix}

This section provides the details of the proofs in this paper. When presenting derivations, we use double line to indicate that the premise and the conclusion are equivalent; and use dotted line to indicate the use of lemmas, theorems, etc..

\subsection{Proofs in Section~\ref{subsec:cut-elim}}
\label{sec:app1}

\noindent Proof of Lemma~\ref{lem:subs}.

\begin{proof}
By induction on $ht(\Pi)$, we do a case analysis on the last rule applied in the derivation. There are three sub-cases: (1) neither $x$ nor $y$ is the label of the principal formula, (2) $y$ is the label of the principal formula, and (3) $x$ is the label of the principal formula. Most of the rules can be proved as in $\lsbbi$, here we only illustrate the new rules $P$ and $C$. Apparently they all fall into the first sub-case, as they are structural rules and there is no principal formula for them. 

If the last rule in $\Pi$ is $P$, which generally runs as below,
\begin{center}
\AxiomC{$\myseq{\Gcal[c/d];(a,b\simp c)}{\Gamma[c/d]}{\Delta[c/d]}$}
\RightLabel{\tiny $P$}
\UnaryInfC{$\myseq{\Gcal;(a,b\simp c);(a,b\simp d)}{\Gamma}{\Delta}$}
\DisplayProof
\end{center}
we further distinguish three cases: (1) $x\not = d$ and $x\not = c$; (2) $x = d$; and (3) $x = c$.
\begin{enumerate}
\item If $x\not = d$ and $x\not = c$, we need to consider three sub-cases:
\begin{enumerate}
\item If $y \not = d$ and $y \not = c$, then the two substitutions $[y/x]$ and $[c/d]$ do not interfere with each other, thus we can use the induction hypothesis to substitute $[y/x]$ and reorder the substitutions to obtain the desired derivation.
\begin{center}
\AxiomC{$\Pi'$}
\alwaysNoLine
\UnaryInfC{$\myseq{\Gcal[c/d][y/x];(a,b\simp c)[y/x]}{\Gamma[c/d][y/x]}{\Delta[c/d][y/x]}$}
\doubleLine
\UnaryInfC{$\myseq{\Gcal[y/x][c/d];(a,b\simp c)[y/x]}{\Gamma[y/x][c/d]}{\Delta[y/x][c/d]}$}
\alwaysSingleLine
\RightLabel{\tiny $P$}
\UnaryInfC{$\myseq{\Gcal[y/x];(a,b\simp c)[y/x];(a,b\simp d)[y/x]}{\Gamma[y/x]}{\Delta[y/x]}$}
\DisplayProof
\end{center} 
\item If $y = d$ we first use the induction hypothesis, substituting $[c/x]$, then obtain the following derivation:
\begin{center}
\AxiomC{$\Pi'$}
\alwaysNoLine
\UnaryInfC{$\myseq{\Gcal[c/d][c/x];(a,b\simp c)[c/x]}{\Gamma[c/d][c/x]}{\Delta[c/d][c/x]}$}
\doubleLine
\UnaryInfC{$\myseq{\Gcal[c/x][c/d];(a,b\simp c)[c/x]}{\Gamma[c/x][c/d]}{\Delta[c/x][c/d]}$}
\doubleLine
\UnaryInfC{$\myseq{\Gcal[d/x][c/d];(a,b\simp c)[d/x]}{\Gamma[d/x][c/d]}{\Delta[d/x][c/d]}$}
\alwaysSingleLine
\RightLabel{\tiny $P$}
\UnaryInfC{$\myseq{\Gcal[d/x];(a,b\simp c)[d/x];(a,b\simp d)[d/x]}{\Gamma[d/x]}{\Delta[d/x]}$}
\DisplayProof
\end{center}
\item If $y = c$, the proof is similar to above, without the second last step.
\end{enumerate}
\item If $x = d$, we consider three sub-cases:
\begin{enumerate}
\item If $y\not = c$ and $y \not = \epsilon$, we use the induction hypothesis to substitute $[c/y]$, and obtain the following derivation:
\begin{center}
\AxiomC{$\Pi'$}
\alwaysNoLine
\UnaryInfC{$\myseq{\Gcal[c/d][c/y];(a,b\simp c)[c/y]}{\Gamma[c/d][c/y]}{\Delta[c/d][c/y]}$}
\doubleLine
\UnaryInfC{$\myseq{\Gcal[y/d][c/y];(a,b\simp c)[c/y]}{\Gamma[y/d][c/y]}{\Delta[y/d][c/y]}$}
\alwaysSingleLine
\RightLabel{\tiny $P$}
\UnaryInfC{$\myseq{\Gcal[y/d];(a,b\simp c);(a,b\simp y)}{\Gamma[y/d]}{\Delta[y/d]}$}
\DisplayProof
\end{center}
\item If $y\not=c$ but $y =\epsilon$, then we use induction hypothesis to substitute $[\epsilon/c]$, and obtain the following derivation:
\begin{center}
\AxiomC{$\Pi'$}
\alwaysNoLine
\UnaryInfC{$\myseq{\Gcal[c/d][\epsilon/c];(a,b\simp \epsilon)}{\Gamma[c/d][\epsilon/c]}{\Delta[c/d][\epsilon/c]}$}
\doubleLine
\UnaryInfC{$\myseq{\Gcal[\epsilon/d][\epsilon/c];(a,b\simp \epsilon)}{\Gamma[\epsilon/d][\epsilon/c]}{\Delta[\epsilon/d][\epsilon/c]}$}
\alwaysSingleLine
\RightLabel{\tiny $P$}
\UnaryInfC{$\myseq{\Gcal[\epsilon/d];(a,b\simp c);(a,b\simp \epsilon)}{\Gamma[\epsilon/d]}{\Delta[\epsilon/d]}$}
\DisplayProof
\end{center}
\item If $y = c$, then the case is reduced to admissibility of weakening on relational atoms.
\end{enumerate}

\item If $x = c$, the cases are similar to those for $x = d$.
\end{enumerate}

If the last rule in $\Pi$ is $C$, the proof is analogous to the proof for $P$. \qed
\end{proof} \ \\

\noindent Proof of Lemma~\ref{lem:invert}.

\begin{proof}
The rules $P,C$ themselves are trivially invertible, since the inverted versions can be proved by using Lemma~\ref{lem:subs}. The invertibility of all other rules except for $\top^* L$ can be proved similarly as in $\lsbbi$. Here we show the proof for $\top^* L$ in $\lspslh$. We do an induction on the height of the derivation. Base case is the same as the proof for $\lsbbi$. For the inductive case, we illustrate the cases where the last rule in the derivation is $P$ or $C$. Assume w.l.o.g. that the principal formula for the rule $\top^* L$ is $x:\top^*$. 
\begin{enumerate}
\item If the last rule is $P$, which runs as below.
\begin{center}
\AxiomC{$\myseq{\Gcal[c/d];(a,b\simp c)}{\Gamma[c/d];x:\top^*[c/d]}{\Delta[c/d]}$}
\RightLabel{\tiny $P$}
\UnaryInfC{$\myseq{\Gcal;(a,b\simp c);(a,b\simp d)}{\Gamma;x:\top^*}{\Delta}$}
\DisplayProof
\end{center}
we distinguish three sub-cases:
\begin{enumerate}
\item If $x \not = d$ and $x \not = c$, then the substitutions $[\epsilon/x]$ and $[c/d]$ are independent, thus we can use the induction hypothesis and applied the rule $\top^* L$ (meanwhile switch the order of substitutions) to obtain the desired derivation.
\item If $x = d$, the original derivation is as follows.
\begin{center}
\AxiomC{$\Pi$}
\alwaysNoLine
\UnaryInfC{$\myseq{\Gcal[c/d];(a,b\simp c)}{c:\top^*;\Gamma[c/d]}{\Delta[c/d]}$}
\alwaysSingleLine
\RightLabel{\tiny $P$}
\UnaryInfC{$\myseq{\Gcal;(a,b\simp c);(a,b\simp d)}{d:\top^*;\Gamma}{\Delta}$}
\DisplayProof
\end{center}
We apply the induction hypothesis on the premise, then apply $P$ to obtain the following derivation:
\begin{center}
\AxiomC{$\Pi'$}
\alwaysNoLine
\UnaryInfC{$\myseq{\Gcal[c/d][\epsilon/c];(a,b\simp \epsilon)}{\Gamma[c/d][\epsilon/c]}{\Delta[c/d][\epsilon/c]}$}
\doubleLine
\UnaryInfC{$\myseq{\Gcal[\epsilon/d][\epsilon/c];(a,b\simp \epsilon)}{\Gamma[\epsilon/d][\epsilon/c]}{\Delta[\epsilon/d][\epsilon/c]}$}
\alwaysSingleLine
\RightLabel{\tiny $P$}
\UnaryInfC{$\myseq{\Gcal[\epsilon/d];(a,b\simp c);(a,b\simp \epsilon)}{\Gamma[\epsilon/d]}{\Delta[\epsilon/d]}$}
\DisplayProof
\end{center}
\item If $x = c$, the case is similar.
\end{enumerate}
\item The case where the last rule is $C$ is similar to above.
\end{enumerate}
\qed
\end{proof}

\subsection{Proofs in Section~\ref{subsec:interm}}
\label{sec:app1b}

\noindent Proof of Lemma~\ref{lm:vdashe_eq}.
\begin{proof}
We show that $\vdash_E$ satisfies the following conditions:
\begin{description}
\item[Reflexivity:] for any label $a$ that occurs in $\Gcal$, we have 
$\Gcal\vdash_E (a = a)$ by applying an empty sequence of $Eq_1,Eq_2,P,C$ rules.

\item[Symmetry:] if $\Gcal\vdash_E (x = y)$, via a sequence $\sigma$
  of $Eq_1,Eq_2,P,C$ applications. Let $\theta = subst(\sigma)$, then
  by definition $x\theta\equiv y\theta$ in $\Gcal\theta$. Thus
  $y\theta \equiv x\theta$, and we obtain that $\Gcal\vdash_E (y= x)$.

\item[Transitivity:] if $\Gcal\vdash_E (x = y)$ and $\Gcal\vdash_E (y
  = z)$, then by Lemma~\ref{lm:eq_concat} we obtain a sequence
  $\sigma$ of $Eq_1,Eq_2,P,C$ applications, and let
  $\theta=subst(\sigma)$, then $x\theta \equiv y\theta \equiv
  z\theta$. Thus $\Gcal\vdash_E (x = z)$.
\end{description}
\vspace{-4ex}
\qed
\end{proof}

\subsection{Proofs in Section~\ref{subsec:counter_model_constr}}
\label{sec:app2}

\noindent Proof of Lemma~\ref{lem:hintikka_sat}.

\begin{proof}
Let $\myseq{\Gcal}{\Gamma}{\Delta}$ be a Hintikka sequent.
We construct an extended model
$\Mcal = (H, \simp_\Gcal, \epsilon_\Gcal, \val, \rho)$ as follows: 
\begin{itemize}
\item $H = \{[a]_\Gcal \mid a \in \Lcal \}$
\item $\simp_\Gcal([a]_\Gcal,[b]_\Gcal, [c]_\Gcal)$ 
iff $\exists a', b', c'.(a',b'\simp c') \in \Gcal$, 
$a =_\Gcal a'$, 
$b =_\Gcal b'$, 
$c =_\Gcal c'$
\item $\epsilon_\Gcal = [\epsilon]_\Gcal$
\item $\val(p) = \{ [a]_\Gcal \mid a : p \in \Gamma \}$ for every $p \in Var$
\item $\rho(a) = [a]_\Gcal$ for every $a \in \Lcal$
\end{itemize}

To reduce clutter, we shall drop the subscript $\Gcal$ 
in $[a]_\Gcal$ and 
write $[a], [b] \simp_\Gcal [c]$ instead
of $\simp_\Gcal([a],[b],[c]).$

We first show that $\Fcal = \myframe{H}{\simp_{\Gcal}}{\epsilon_\Gcal}$ is a $\psl$ Kripke relational frame.

\begin{description}
\item[identity:] 
for each $[a]\in H$, by definition, there must be a label $a'\in \Lcal$ such that $[a] = [a']$. 
It follows from condition 12 in Def.~\ref{definition:hintikka_seq} that $(a',\epsilon\simp a')\in\Gcal$, thus $[a],[\epsilon]\simp_{\Gcal} [a]$ holds.
\item[commutativity:] if $[a],[b]\simp_{\Gcal} [c]$ holds, there must
  be some $(a',b'\simp c')\in\Gcal$
  s.t. $[a]=[a'],[b]=[b'],[c]=[c']$. Then by condition 13 in
  Def.~\ref{definition:hintikka_seq}, $(b',a'\simp c')\in\Gcal$, therefore
  $[b],[a]\simp_{\Gcal} [c]$ holds.
\item[associativity:] if $[a],[b]\simp_{\Gcal} [c]$ and
  $[d],[e]\simp_{\Gcal} [a]$ holds, then there exist some $(a',b'\simp
  c')\in\Gcal$ and $(d',e'\simp a'')\in\Gcal$
  s.t. $[a]=[a']=[a''],[b]=[b'],[c]=[c'],[d]=[d'],[e]=[e']$. Then by
  condition 14 in Def.~\ref{definition:hintikka_seq}, there also exist labels
  $f,f'$ s.t. $(d',f\simp c')\in\Gcal$ and $(b',e'\simp f')\in\Gcal$
  and $[f]=[f']$. Thus we can find $[f]$ s.t. $[d],[f]\simp_{\Gcal}
  [c]$ and $[b],[e]\simp_{\Gcal} [f]$ hold.
\item[Partial-determinism:] If $[a],[b]\simp_{\Gcal} [c]$ and
  $[a],[b]\simp_{\Gcal} [d]$ hold, then there exists some $(a',b'\simp
  c')\in\Gcal$ and $(a'',b''\simp d')\in\Gcal$ s.t. $[a] =[a'] =[a''],
  [b]= [b']=[b''], [c]=[c'], [d]=[d']$. Then by
  Lemma~\ref{lm:eq_concat}, $\Gcal\vdash_E (c' = d')$ by using
  rule $P$ to unify $c'$ and $d'$, thus we obtain that $[c]=[c']=[d] =[d']$.

\item[Cancellativity:] if $[a],[b]\simp_{\Gcal} [c]$ and
  $[a],[d]\simp_{\Gcal} [c]$ hold, then we can find some $(a',b'\simp
  c')\in\Gcal$ and $(a'',d'\simp c'')\in\Gcal$ s.t. $[a] =[a'] =[a''],
  [c]= [c']=[c''], [b]=[b'], [d]=[d']$. Then by
  Lemma~\ref{lm:eq_concat}, $\Gcal\vdash_E (b' = c')$ by using $C$
  to unify $b'$ and $c'$, thus we obtain that $[b]=[b']=[c] =[c']$.
\end{description}
So $\Mcal$ is indeed a model based on a $\psl$ Kripke relational frame. 
We prove next that $\myseq{\Gcal}{\Gamma}{\Delta}$ is falsifiable
in $\Mcal.$
We need to show the following (where $\rho(m) = [m]$):
\begin{enumerate}
\item[(1)] If $(a, b \simp c) \in \Gcal$ then 
$([a], [b] \simp_\Gcal [c]).$
\item[(2)] If $m : A \in \Gamma$ then 
$\Mcal, \rho(m) \Vdash A.$
\item[(3)] If $m : A \in \Delta$ then
$\Mcal, \rho(m) \not \Vdash A.$
\end{enumerate}
Item (1) follows from the definition of $\simp_\Gcal$. 
We prove (2) and (3) simultaneously by induction on the size of $A$. 
In the following, to simplify presentation, we omit the $\Mcal$ from the 
forcing relation. 
\begin{description}
\item[Base cases:] when $A$ is an atomic proposition $p$.
\begin{itemize}
\item If $m:p\in\Gamma$ then $[m]\in \val(p)$ by definition of $\val$, so $[m] \Vdash p.$
\item Suppose $m:p\in\Delta$, but $[m] \Vdash p$. 
Then $m':p\in\Gamma$, for some $m'$ s.t. $m' =_\Gcal m$. 
This violates condition 1 in Def.~\ref{definition:hintikka_seq}. Thus $[m]\not\Vdash p$.
\end{itemize}
\item[Inductive cases:] when $A$ is a compound formula. We do a case
  analysis on the main connective of $A$.
\begin{itemize}
\item If $m:A\land B \in\Gamma$, by condition 2 in
  Def.~\ref{definition:hintikka_seq}, $m:A\in\Gamma$ and $m:B\in\Gamma$. By
  the induction hypothesis, $[m]\Vdash A$ and $[m]\Vdash B$, thus
  $[m]\Vdash A\land B$.
\item If $m:A\land B \in\Delta$, by condition 3 in
  Def.~\ref{definition:hintikka_seq}, $m:A\in\Delta$ or $m:B\in\Delta$. By
  the induction hypothesis, $[m]\not\Vdash A$ or $[m]\not\Vdash B$,
  thus $[m]\not\Vdash A\land B$
\item If $m:A\limp B\in\Gamma$, by condition 4 in
  Def.~\ref{definition:hintikka_seq}, $m:A\in\Delta$ or $m:B\in\Gamma$. By
  the induction hypothesis, $[m]\not\Vdash A$ or $[m]\Vdash B$, thus
  $[m]\Vdash A\limp B$.
\item If $m:A\limp B\in\Delta$, by condition 5 in
  Def.~\ref{definition:hintikka_seq}, $m:A\in\Gamma$ and $m:B\in\Delta$. By
  the induction hypothesis, $[m]\Vdash A$ and $[m]\not\Vdash B$, thus
  $[m]\not\Vdash A\limp B$.
\item If $m:\top^*\in\Gamma$ then  $[m] = [\epsilon]$ by condition 6 in
  Def.~\ref{definition:hintikka_seq}. Since
  $[\epsilon]\Vdash\top^*$, we obtain $[m]\Vdash\top^*$.
\item If $m:\top^*\in\Delta$, by condition 7 in
  Def.~\ref{definition:hintikka_seq}, $[m] \not = [\epsilon]$ and then
  $[m]\not\Vdash\top^*$.
\item If $m:A\mand B\in\Gamma$, by condition 8 in
  Def.~\ref{definition:hintikka_seq}, $\exists a,b,m'$ s.t. $(a,b\simp
  m')\in\Gcal$ and $[m] =[m']$ and $a:A\in\Gamma$ and
  $b:B\in\Gamma$. By the induction hypothesis, $[a]\Vdash A$ and
  $[b]\Vdash B$. Thus $[a],[b]\simp_{\Gcal} [m]$ holds and $[m]\Vdash
  A\mand B$.
\item If $m:A\mand B\in\Delta$, by condition 9 in
  Def.~\ref{definition:hintikka_seq}, $\forall a,b,m'$ if $(a,b\simp
  m')\in\Gcal$ and $[m]=[m']$, then $a:A\in\Delta$ or
  $b:B\in\Delta$. By the induction hypothesis, if such $a,b$ exist,
  then $[a]\not\Vdash A$ or $[b]\not\Vdash B$. For any
  $[a],[b]\simp_{\Gcal} [m]$, there must be some $(a',b'\simp
  m'')\in\Gcal$ s.t. $[a] = [a'], [b] = [b'], [m] = [m'']$. Then
  $[a]\not\Vdash A$ or $[b]\not\Vdash B$ therefore
  $[m]\not\Vdash A\mand B$.
\item If $m:A\mimp B\in\Gamma$, by condition 10 in
  Def.~\ref{definition:hintikka_seq}, $\forall a,b,m'$ if $(a,m'\simp
  b)\in\Gcal$ and $[m]=[m']$, then $a:A\in\Delta$ or
  $b:B\in\Gamma$. By the induction hypothesis, if such $a,b$ exists,
  then $[a]\not\Vdash A$ or $[b]\Vdash B$. Consider any
  $[a],[m]\simp_{\Gcal} [b]$, there must be some $(a',m''\simp
  b')\in\Gcal$ s.t. $[a] = [a']$, $[m''] = [m]$, and $[b] = [b']$. So $[a] \not\Vdash A$ or $[b]\Vdash B$,
  thus $[m]\Vdash A\mimp B$.
\item If $m:A\mimp B\in\Delta$, by condition 11 in
  Def.~\ref{definition:hintikka_seq}, $\exists a,b,m'$ s.t. $(a,m'\simp
  b)\in\Gcal$ and $[m]=[m']$ and $a:A\in\Gamma$ and $b:B\in\Delta$. By
  the induction hypothesis, $[a]\Vdash A$ and $[b]\not\Vdash B$ and
  $[a],[m]\simp_{\Gcal} [b]$ holds, thus $[m]\not\Vdash A\mimp B$.
\end{itemize}
\end{description}
\qed
\end{proof}\ \\

\noindent Proof of Lemma~\ref{lm:construction}.

\begin{proof}
Item 1 is based on the fact that the inference rules preserves
falsifiability upwards, and we always choose the branch with no
derivation.  To show item 2, we do an induction on $i$. Base case, $i = 1$, 
$\Lcal_1 \subseteq \{a_0, a_1\}$ (recall that $a_0 = \epsilon$).  Inductive cases: suppose item 2 holds for any
$i \leq n$, for $n+1$, we consider five cases depending on which rule
is applied on $\myseq{\Gcal_i}{\Gamma_i}{\Delta_i}$.
\begin{enumerate}
\item If $\mand L$ is applied, then 
$\Lcal_{i+1} = \Lcal_i\cup\{a_{2i},a_{2i+1}\}\subseteq\{a_1,\cdots,a_{2i+1}\}.$
\item If $\mimp R$ is applied, same as above.
\item If $U$ is applied, 
which generates $(a_n,\epsilon\simp a_n)$, then $n\leq 2i+1$, thus 
$\Lcal_{i+1} = \Lcal_i\cup\{a_n\}\subseteq\{a_1,\cdots,a_{2i+1}\}$.
\item If $A$ is applied, the fresh label in the premise is $a_{2i}$. 
Thus $\Lcal_{i+1} = \Lcal_i\cup\{a_{2i}\}\subseteq\{a_1,\cdots,a_{2i+1}\}$.
\item Otherwise, $\Lcal_{i+1} = \Lcal_i \subseteq \{a_1,\cdots,a_{2i+1}\}$.   
\end{enumerate}
Item 3 is obvious from the construction of $\myseq{\Gcal_{i+1}}{\Gamma_{i+1}}{\Delta_{i+1}}.$\qed
\end{proof} \ \\ 

\noindent Proof of Lemma~\ref{lem:lim_hintikka}.

\begin{proof}
Let $\myseq{\Gcal^\omega}{\Gamma^\omega}{\Delta^\omega}$ be the limit sequent. 
First we show that $\myseq{\Gcal^\omega}{\Gamma^\omega}{\Delta^\omega}$ is finitely-consistent. Consider any
$\myseq{\Gcal}{\Gamma}{\Delta}\subseteq_f \myseq{\Gcal^\omega}{\Gamma^\omega}{\Delta^\omega}$,
we show that $\myseq{\Gcal}{\Gamma}{\Delta}$ has no derivation. Since
$\Gcal,\Gamma,\Delta$ are finite sets, there exists $i\in
\mathcal{N}$ s.t. $\Gcal\subseteq \Gcal_i$, $\Gamma\subseteq
\Gamma_i$, and $\Delta\subseteq \Delta_i$. 
Moreover, $\myseq{\Gcal_i}{\Gamma_i}{\Delta_i}$ is not provable in $\ilspslh$.
Since weakening is admissible in $\ilspslh$,
$\myseq{\Gcal}{\Gamma}{\Delta}\subseteq_f
\myseq{\Gcal_i}{\Gamma_i}{\Delta_i}$ cannot be provable either.
So condition 1, 7, and 15 in Definition~\ref{definition:hintikka_seq} hold for the limit sequent, 
for otherwise we would be able to construct a provable finite labelled sequent from 
the limit sequent. We show the proofs that the other conditions 
in~Definition~\ref{definition:hintikka_seq} are also satisfied by the
limit sequent. The following cases are numbered according to items in Definition~\ref{definition:hintikka_seq}.
\begin{enumerate}
\setcounter{enumi}{1}
\item If $m:F_1\land F_2\in\Gamma^\omega$, then it is in some 
$\Gamma_i$, where $i\in\mathcal{N}$. Since $\phi$ select the formula infinitely often, 
there is $j > i$ such that $\phi(j) = (0,m,F_1\land F_2,R)$. Then by construction 
$\{m:F_1,m:F_2\}\subseteq \Gamma_{j+1}\subseteq \Gamma^\omega$. 

\item If $m:F_1\land F_2\in\Delta^\omega$, then it is in some $\Delta_i$, where $i\in\mathcal{N}$. 
Since $\phi$ select the formula infinitely often, there is $j > i$ such that
$\phi(j) = (1,m,F_1\land F_2,R)$. Then by construction $m:F_n\in\Delta_{j+1}\subseteq\Delta^\omega$, 
where $n\in\{1,2\}$ and $\myseq{\Gcal_j}{\Gamma_j}{m:F_n;\Delta_j}$ does not have a derivation.

\item If $m:F_1\limp F_2\in\Gamma^\omega$, similar to case 3.

\item If $m:F_1\limp F_2\in\Delta^\omega$, similar to case 2.

\item If $m:\top^*\in\Gamma^\omega$, then $m:\top^*\in\Gamma_i$, for some
  $i\in\mathcal{N}$, since each labelled formula from $\Gamma^\omega$ must appear somewhere
  in the sequence.
Then there exists $j > i$ such that $\phi(j) = (0,m,\top^*,R)$ where
this formula becomes principal.
By construction $(\epsilon,m\simp \epsilon)\in\Gcal_{j+1}\subseteq\Gcal^\omega$. 
Then $\Gcal^\omega\vdash_E (m = \epsilon)$ because $\Gcal_{j+1}\vdash_E (m = \epsilon)$. 
So $m =_{\Gcal^\omega} \epsilon$. 

\setcounter{enumi}{7}

\item\label{item:1} If $m:F_1\mand F_2\in\Gamma^\omega$, then it is in some $\Gamma_i$, where $i\in\mathcal{N}$. 
Then there exists $j > i$ such that $\phi(j) = (0,m,F_1\mand F_2,R)$. 
By construction $\Gcal_{j+1} = \Gcal_j\cup\{(a_{2j},a_{2j+1}\simp m)\}\subseteq\Gcal^\omega$, 
and $\Gamma_{j+1} = \Gamma_j\cup\{a_{2j}:F_1,a_{2j+1}:F_2\}\subseteq\Gamma^\omega$. 

\item\label{item:2} If $m:F_1\mand F_2\in\Delta^\omega$, then it is in some $\Delta_i$, where 
$i\in\mathcal{N}$. For any $(x,y\simp m')\in\Gcal^\omega$ such that $\Gcal^\omega\vdash_E(m = m')$, 
there exists $j > i$ such that $(x,y\simp m')\in\Gcal_j$ and $\Gcal_j\vdash_E (m = m')$. 
Also, there exists $k > j$ such that $\phi(k) = (1,m,F_1\mand
F_2,\{(x,y\simp m')\})$ where the labelled formula becomes principal.
Since $(x,y\simp m')\in\Gcal_k$ and $\Gcal_k\vdash (m = m')$,
we have either $x : F_1 \in \Delta_{k+1} \subseteq \Delta^\omega$ 
or $y : F_2 \in \Delta_{k+1} \subseteq \Delta^\omega.$

\item If $m:F_1\mimp F_2\in\Gamma^\omega$, similar to case~\ref{item:1}.

\item If $m:F_1\mimp F_2\in\Delta^\omega$, similar to case~\ref{item:2}.

\item For each $a_n\in \Lcal$, there is a $j \geq n$ such that 
$\phi(j) = (O,m,\Ubb,\{(a_n,\epsilon\simp a_n)\})$ where $U$ is
applied to $a_n$. Then 
$\Gcal_{j+1} = \Gcal_j\cup\{(a_n,\epsilon\simp a_n)\}\subseteq\Gcal^\omega$, because $n\leq 2j+1$. 

\item If $(x,y\simp z)\in\Gcal^\omega$, then it is in some $\Gcal_i$, where $i\in\mathcal{N}$. 
Then there is a $j > i$ such that $\phi(j) = (O,m,\Ebb,\{(x,y\simp
z)\})$ where $E$ is applied.
Then $\Gcal_{j+1} = \Gcal_j\cup\{(y,x\simp z)\}\subseteq\Gcal^\omega$.

\item If $(x,y\simp z) \in \Gcal^\omega$, $(u,v\simp x') \in \Gcal^\omega$, and $x =_{\Gcal^\omega} x'$, then there 
is some $\Gcal_i$, $i\in\mathcal{N}$ such that $\{(x,y\simp z),(u,v\simp x')\} \subseteq \Gcal_i$ 
and $\Gcal_i\vdash_E (x = x')$. 
There are two cases to consider, depending on whether $(x,y\simp z)$ and $(u,v\simp x')$ are the same
relational atoms. 
Suppose they are distinct. Then there must be some $j > i$ such that 
$\phi(j) = (O,m,\Abb,\{(x,y\simp z),(u,v\simp x')\})$. 
Then $\{(x,y\simp z),(u,v\simp x')\}\in\Gcal_j$ and $\Gcal_j\vdash_E (x = x')$. 
By construction we obtain that $\Gcal_{j+1} = \Gcal_j\cup\{(u,a_{2j}\simp z),(y,v\simp a_{2j})\}\subseteq\Gcal^\omega$. 
If $(x,y\simp z)$ and $(u,v\simp x')$ are the same
relational atom, then a similar argument can be applied, but in this case the
schedule to choose is one which selects $\Abb_C$ rather than $\Abb.$
\end{enumerate}
\qed
\end{proof}

\subsection{Proofs in Section~\ref{sec:extension_psl}}
\label{sec:app3}

\noindent Proof of Proposition~\ref{prop:iu_axiom}.
\begin{proof}
\ \\
\begin{center}
\AxiomC{}
\RightLabel{\tiny $id$}
\UnaryInfC{$\myseq{(\epsilon,a_2\simp \epsilon)}{\epsilon:A;a_2:B}{\epsilon:A}$}
\RightLabel{\tiny $IU$}
\UnaryInfC{$\myseq{(a_1,a_2\simp \epsilon)}{a_1:A;a_2:B}{\epsilon:A}$}
\RightLabel{\tiny $\top^* L$}
\UnaryInfC{$\myseq{(a_1,a_2\simp a_0)}{a_0:\top^*;a_1:A;a_2:B}{a_0:A}$}
\RightLabel{\tiny $\mand L$}
\UnaryInfC{$\myseq{}{a_0:\top^*;a_0:A\mand B}{a_0:A}$}
\RightLabel{\tiny $\land L$}
\UnaryInfC{$\myseq{}{a_0:\top^*\land (A\mand B)}{a_0:A}$}
\RightLabel{\tiny $\limp R$}
\UnaryInfC{$\myseq{~}{~}{a_0:(\top^*\land (A\mand B))\limp A}$}
\DisplayProof
\end{center}\qed
\end{proof}\ \\

\noindent Proof of Theorem~\ref{thm:add_iu}.

\begin{proof}
Soundness is straightforward as the rule $IU$ is essentially just
encodes the semantics into the labelled sequent calculus.

Cut-elimination follows by checking each lemmas in Section~\ref{subsec:cut-elim}. Specifically, we show the details of Lemma~\ref{lem:subs} (substitution) and Lemma~\ref{lem:invert} (invertibility) here. 
\begin{description}
\item[Substitution:]  Prove by induction on the height of the derivation, here we examine the case where $IU$ is the last rule in the derivation. The rule $IU$ is a structural rule, thus does not have a principal formula and belongs to the case where neither $x$ nor $y$ in the substitution $[y/x]$ is the label of the principal formula. We consider the subcases of $x,y$ being $a$ or not respectively, where the $IU$ application is shown below.
\begin{center}
\AxiomC{$\Pi$}
\noLine
\UnaryInfC{$\myseq{\Gcal[\epsilon/a];(\epsilon,b\simp \epsilon)}{\Gamma[\epsilon/a]}{\Delta[\epsilon/a]}$}
\RightLabel{\tiny $IU$}
\UnaryInfC{$\myseq{\Gcal;(a,b\simp \epsilon)}{\Gamma}{\Delta}$}
\DisplayProof
\end{center}
\begin{enumerate}
\item If $x \not =a$ we consider two sub-cases.
\begin{enumerate}
\item If $y \not = a$ then the substitutions $[y/x],[\epsilon/a]$ are independent, thus we can easily use the induction hypothesis to substitute $[y/x]$ and switch the order of substitutions to obtain the desired derivation.
If $x = b$ and $y = \epsilon$, the $IU$ application is reduced to $Eq_1$ and $E$ applications as follows, where $\Pi'$ is obtained by using the induction hypothesis to substitute $[\epsilon/b]$:
\begin{center}
\AxiomC{$\Pi'$}
\noLine
\UnaryInfC{$\myseq{\Gcal[\epsilon/a][\epsilon/b];(\epsilon,\epsilon\simp \epsilon)}{\Gamma[\epsilon/a][\epsilon/b]}{\Delta[\epsilon/a][\epsilon/b]}$}
\doubleLine
\UnaryInfC{$\myseq{\Gcal[\epsilon/b][\epsilon/a];(\epsilon,\epsilon\simp \epsilon)}{\Gamma[\epsilon/b][\epsilon/a]}{\Delta[\epsilon/b][\epsilon/a]}$}
\RightLabel{\tiny $Eq_1$}
\UnaryInfC{$\myseq{\Gcal[\epsilon/b];(a,\epsilon\simp \epsilon);(\epsilon,a\simp \epsilon)}{\Gamma[\epsilon/b]}{\Delta[\epsilon/b]}$}
\RightLabel{\tiny $E$}
\UnaryInfC{$\myseq{\Gcal[\epsilon/b];(a,\epsilon\simp \epsilon)}{\Gamma[\epsilon/b]}{\Delta[\epsilon/b]}$}
\DisplayProof
\end{center}

\item If $y = a$, we use the induction hypothesis to substitute $[\epsilon/x]$, then use $IU$ to obtain the derivation.
\begin{center}
\AxiomC{$\Pi'$}
\alwaysNoLine
\UnaryInfC{$\myseq{\Gcal[\epsilon/a][\epsilon/x];(\epsilon,b\simp \epsilon)}{\Gamma[\epsilon/a][\epsilon/x]}{\Delta[\epsilon/a][\epsilon/x]}$}
\doubleLine
\UnaryInfC{$\myseq{\Gcal[a/x][\epsilon/a];(\epsilon,b\simp \epsilon)}{\Gamma[a/x][\epsilon/a]}{\Delta[a/x][\epsilon/a]}$}
\alwaysSingleLine
\RightLabel{\tiny $IU$}
\UnaryInfC{$\myseq{\Gcal[a/x];(a,b\simp \epsilon)}{\Gamma[a/x]}{\Delta[a/x]}$}
\DisplayProof
\end{center} 
A special case where $x = b$ can be shown similarly.
\end{enumerate}
\item If $x = a$, again, we consider two sub-cases:
\begin{enumerate}
\item If $y\not = \epsilon$, we use the induction hypothesis to substitute $[\epsilon/y]$, and then use $IU$ to obtain the derivation.
\begin{center}
\AxiomC{$\Pi'$}
\alwaysNoLine
\UnaryInfC{$\myseq{\Gcal[\epsilon/a][\epsilon/y];(\epsilon,b\simp \epsilon)}{\Gamma[\epsilon/a][\epsilon/y]}{\Delta[\epsilon/a][\epsilon/y]}$}
\doubleLine
\UnaryInfC{$\myseq{\Gcal[y/a][\epsilon/y];(\epsilon,b\simp \epsilon)}{\Gamma[y/a][\epsilon/y]}{\Delta[y/a][\epsilon/y]}$}
\alwaysSingleLine
\RightLabel{\tiny $IU$}
\UnaryInfC{$\myseq{\Gcal[y/a];(y,b\simp \epsilon)}{\Gamma[y/a]}{\Delta[y/a]}$}
\DisplayProof
\end{center}

\item If $y = \epsilon$, then the substitution gives $\myseq{\Gcal[\epsilon/a];(\epsilon,b\simp \epsilon)}{\Gamma[\epsilon/a]}{\Delta[\epsilon/a]}$, which is known to be derivable by using $\Pi$.

\end{enumerate}

\end{enumerate}

\item[Invertibility:] The rule $IU$ is trivially invertible, as can be proved by using the substitution lemma. We show here (by induction on the height of the derivation) the case for the rule $\top^* L$, where the last rule in the derivation is $IU$. The other rules can be proved similarly as in~\cite{hou2013}. The last rule $IU$ runs as below.
\begin{center}
\AxiomC{$\Pi$}
\noLine
\UnaryInfC{$\myseq{\Gcal[\epsilon/a];(\epsilon,b\simp \epsilon)}{\Gamma[\epsilon/a];x:\top^*[\epsilon/a]}{\Delta[\epsilon/a]}$}
\RightLabel{\tiny $IU$}
\UnaryInfC{$\myseq{\Gcal;(a,b\simp \epsilon)}{\Gamma;x:\top^*}{\Delta}$}
\DisplayProof
\end{center}
we consider three sub-cases: (1) if $x\not = a$ and $x \not = b$, then we can safely apply the induction hypothesis on the premise, switch the order of substitutions, and apply $IU$ to obtain 
$\myseq{\Gcal[\epsilon/x];(a,b\simp \epsilon)}{\Gamma[\epsilon/x]}{\Delta[\epsilon/x]}$.
(2) If $x = a$, then the premise of the $IU$ application is what we need to derive (with a dummy $\epsilon:\top^*$ on the left hand side of the sequent). (3) If $x = b$, then $\myseq{\Gcal[\epsilon/b];(a,\epsilon\simp \epsilon)}{\Gamma[\epsilon/b]}{\Delta[\epsilon/b]}$ can be derived by applying $E$ and $Eq_1$ backwards then use the induction hypothesis to obtain the derivation. The details are the same as the derivation shown in 1(a) of the proof for substitution. 
\end{description}

Completeness can be proved via the same counter-model construction for
$\lspslh$ (Corollary~\ref{cor:lspslh_complete}). 
That is, we first define an intermediate calculus $\ilspslh$ $+ IU$ that is equivalent to
$\lspslh + IU$, and do counter-model construction in $\ilspslh + IU$. 
Since the $IU$ rule involves substitution, the rule will be
localised into the entailment relation $\vdash_E$, so the definition of $\vdash_E$
in Definition \ref{dfn:equiv-entail} is modified to include $IU$ in addition to 
$Eq_1,Eq_2,P$ and $C.$ Thus the rules of $\ilspslh + IU$ are exactly the same as
$\ilspslh$, and the only change is in the definition of $\vdash_E.$ The equivalence between
$\lspslh + IU$ and $\ilspslh + IU$ can be proved as in Lemma~\ref{lm:interm}.

Then we only need to show that a Hintikka sequent yields a Kripke relational frame 
that corresponds to a separation algebra with indivisible unit. 
In particular, no additional clauses are needed in the definition of
Hintikka sequent since it is parametric on the entailment relation $\vdash_E$.

For a Hintikka sequent $\myseq{\Gcal}{\Gamma}{\Delta}$, suppose
$(H,\simp_{\Gcal},[\epsilon])$ is the $\psl$ Kripke relational frame
generated by $\Gcal$. Given any $[a],[b]\simp_{\Gcal}[\epsilon]$, we
can find a $(a',b'\simp c')\in\Gcal$ such that $[a] =
[a'],[b]=[b'],[\epsilon]=[c']$. Also, we
can use the rule $IU$ to derive $\Gcal\vdash_E (a' =
\epsilon)$. Thus by Lemma~\ref{lm:eq_concat}, we obtain $[a]=[a']=[\epsilon]$. So the structure
$(H,\simp_{\Gcal},[\epsilon])$ generated by $\Gcal$ is indeed a $\psl$
Kripke relational frame that obeys indivisible unit.

The saturation with logical rules and structural rules
$\Ebb,\Ubb,\Abb, \Abb_C$ is then 
the same as in Section~\ref{sec:complete_lspslh}.\qed
\end{proof}\ \\

\noindent Proof of Proposition~\ref{lem:d_iu_axiom}.
\begin{proof}
we highlight the principal relational atoms where they are not obvious.
\begin{center}
\footnotesize
\AxiomC{}
\RightLabel{\tiny $id$}
\UnaryInfC{$\myseq{(\epsilon,\epsilon\simp \epsilon);\cdots}{\epsilon:A; \epsilon: B}{\epsilon: A}$}
\RightLabel{\tiny $Eq_1$}
\UnaryInfC{$\myseq{(\epsilon,a_1\simp \epsilon);\cdots}{a_1:A; \epsilon: B}{\epsilon: A}$}
\RightLabel{\tiny $E$}
%
\UnaryInfC{$\myseq{(a_1,w_2\simp w_1);$$(\epsilon,\epsilon\simp w_2);$\fbox{$(a_1,\epsilon\simp \epsilon)$};$\cdots}
{a_1:A; \epsilon: B}{\epsilon: A}$}
%
\RightLabel{\tiny $D$}
\UnaryInfC{$\myseq{(a_1,w_2\simp w_1);$\fbox{$(a_2,a_2\simp w_2)$}$;(a_1,a_2\simp \epsilon);\cdots}
{a_1:A; a_2: B}{ \epsilon: A}$}
\RightLabel{\tiny $A$}
\UnaryInfC{$\myseq{(a_1,w_1\simp \epsilon);$\fbox{$(\epsilon,a_2\simp w_1);(a_1,a_2\simp \epsilon)$}$;\cdots}
{a_1:A; a_2: B}{\epsilon: A}$}
\RightLabel{\tiny $A$}
\UnaryInfC{$\myseq{(\epsilon,\epsilon\simp \epsilon);(a_1,a_2\simp \epsilon)}{a_1:A; a_2: B}{\epsilon: A}$}
\RightLabel{\tiny $U$}
\UnaryInfC{$\myseq{(a_1,a_2\simp \epsilon)}{a_1:A; a_2: B}{\epsilon: A}$}
\RightLabel{\tiny $\top^* L$}
\UnaryInfC{$\myseq{(a_1,a_2\simp a_0)}{a_0: \top^*; a_1:A; a_2: B}{a_0: A}$}
\RightLabel{\tiny $\mand L$}
\UnaryInfC{$\myseq{~}{a_0: \top^*; a_0:A\mand B}{ a_0: A}$}
\RightLabel{\tiny $\land L$}
\UnaryInfC{$\myseq{~}{a_0: \top^*\land (A\mand B)}{a_0: A}$}
\RightLabel{\tiny $\limp R$}
\UnaryInfC{$\myseq{~}{~}{a_0: \top^*\land (A\mand B)\limp A}$}
\DisplayProof
\end{center}
\qed
\end{proof}\ \\

\noindent Proof of Proposition~\ref{prop:add_s}.

\begin{proof}
We start from $\lnot\top^* \limp (\lnot\top^* \mand \lnot \top^*)$, and 
obtain the following derivation backward:
\begin{center}
\small
\AxiomC{}
\RightLabel{\tiny $\top^* R$}
\UnaryInfC{$\myseq{(\epsilon,\epsilon\simp\epsilon)}{}{\epsilon:\top^*;\epsilon:\lnot\top^* \mand \lnot\top^*}$}
\RightLabel{\tiny $Eq_1$}
\UnaryInfC{$\myseq{(\epsilon,w\simp\epsilon)}{}{w:\top^*;w:\lnot\top^* \mand \lnot\top^*}$}
\AxiomC{$\Pi$}
\noLine
\UnaryInfC{$\myseq{(w\not = \epsilon)}{}{w:\top^*;w:\lnot\top^* \mand \lnot\top^*}$}
\RightLabel{\tiny $EM$}
\BinaryInfC{$\myseq{}{}{w:\top^*;w:\lnot\top^* \mand \lnot\top^*}$}
\RightLabel{\tiny $\lnot L$}
\UnaryInfC{$\myseq{}{w:\lnot \top^*}{w:\lnot\top^* \mand \lnot\top^*}$}
\RightLabel{\tiny $\limp R$}
\UnaryInfC{$\myseq{}{}{w:\lnot \top^* \limp (\lnot\top^* \mand \lnot\top^*)}$}
\DisplayProof
\end{center}
Where $\Pi$ is the following derivation:
\begin{center}
\small
\AxiomC{}
\RightLabel{\tiny $\not = L$}
\UnaryInfC{$\myseq{(\epsilon,\epsilon\simp\epsilon);(w\not = \epsilon);(\epsilon,y\simp w);(\epsilon\not = \epsilon);(y\not = \epsilon)}{}{w:\top^*;w:\lnot\top^* \mand \lnot\top^*}$}
\RightLabel{\tiny $Eq_1$}
\UnaryInfC{$\myseq{(\epsilon,x\simp\epsilon);(w\not = \epsilon);(x,y\simp w);(x\not = \epsilon);(y\not = \epsilon)}{}{w:\top^*;w:\lnot\top^* \mand \lnot\top^*}$}
\RightLabel{\tiny $\top^* L$}
\UnaryInfC{$\myseq{(w\not = \epsilon);(x,y\simp w);(x\not = \epsilon);(y\not = \epsilon)}{x:\top^*}{w:\top^*;w:\lnot\top^* \mand \lnot\top^*}$}
\RightLabel{\tiny $\lnot R$}
\UnaryInfC{$\myseq{(w\not = \epsilon);(x,y\simp w);(x\not = \epsilon);(y\not = \epsilon)}{}{x:\lnot\top^*;w:\top^*;w:\lnot\top^* \mand \lnot\top^*}$}
\AxiomC{$\Pi'$}
\RightLabel{\tiny $\mand R$}
\BinaryInfC{$\myseq{(w\not = \epsilon);(x,y\simp w);(x\not = \epsilon);(y\not = \epsilon)}{}{w:\top^*;w:\lnot\top^* \mand \lnot\top^*}$}
\RightLabel{\tiny $S$}
\UnaryInfC{$\myseq{(w\not = \epsilon)}{}{w:\top^*;w:\lnot\top^* \mand \lnot\top^*}$}
\DisplayProof
\end{center}
and $\Pi'$ is a symmetric derivation as the left branch of the $\mand R$ application.\qed
\end{proof}\ \\

\subsubsection{Proof of the completeness of $\lspslh$ plus splittability and cross-split.}
\label{sec:app3_sub}

This subsection proves Theorem~\ref{thm:comp_s_cs}.

The definition of the \emph{equivalence entailment} $\vdash_E$ is the same as Definition~\ref{dfn:equiv-entail}.
We then obtain the intermediate system $\ilspslhe$ as $\ilspslh$ plus $S$, $EM$, $CS_C$, and the following modifications:
\begin{center}
\AxiomC{$\Gcal\vdash_E (w = w')$}
\RightLabel{\tiny $\not = L$}
\UnaryInfC{$\myseq{(w\not = w');\Gcal}{\Gamma}{\Delta}$}
\DisplayProof\\[10px]
\AxiomC{$\myseq{(p,q\simp x);(p,s\simp u);(s,t\simp
  y);(q,t\simp v);(x,y\simp z);(u,v\simp z');\Gcal}{\Gamma}{\Delta}$}
\RightLabel{\tiny $CS$}
\UnaryInfC{$\myseq{(x,y\simp z);(u,v\simp z');\Gcal}{\Gamma}{\Delta}$}
\noLine
\UnaryInfC{$(x,y\simp z);(u,v\simp z');\Gcal\vdash_E (z = z')$}
\noLine
\UnaryInfC{The labels $p,q,s,t$ do not occur in the conclusion}
\DisplayProof
\end{center}
Note that $\vdash_E$ is a side condition instead of a premise.

The system $\ilspslhe$ is equivalent to $\lspslh + \{S,\not = L, EM, CS, CS_C\}$, which is straightforward to show. Thus in what follows we give a counter-model construction procedure for $\ilspslhe$, then obtain the completeness result for both systems. As $\ilspslhe$ is just an extension of $\ilspslh$, we only give the additional definitions and proofs, and the parts where modifications are made.

\begin{definition}[Hintikka sequent]
\label{definition:hitikka_seq2}
A labelled sequent $\myseq{\Gcal}{\Gamma}{\Delta}$ is a {\em Hintikka sequent}
if it satisfies the 
conditions in Definition~\ref{definition:hintikka_seq}
and the following,
for any formulae $A,B$ and any labels $a,a',b,c,d,e,z,z'$:
\begin{enumerate}
\setcounter{enumi}{15}
\item For any label $m\in\Lcal$, either $(m\not = \epsilon)\in\Gcal$ or $m =_\Gcal \epsilon$.
\item If $(z\not = \epsilon)\in\Gcal$, then $\exists x,y,$ s.t. $(x,y\simp z)\in\Gcal$, and $(x\not = \epsilon)\in\Gcal$, and $(y\not = \epsilon)\in\Gcal$.
\item It is not the case that $(a\not = a')\in\Gcal$ and $a =_\Gcal a'$.
\item If $(a,b\simp z)\in\Gcal$ and $(c,d\simp z')\in\Gcal$ and $z =_\Gcal z'$, then $\exists ac,bc,ad,bd$ s.t. $(ad,ac\simp a)\in\Gcal$, $(ad,bd\simp d)\in\Gcal$, $(ac,bc\simp c)\in\Gcal$, and $(bc,bd\simp b)\in\Gcal$.
\end{enumerate}
\end{definition}

We extend the proof for Lemma~\ref{lem:hintikka_sat} for the additional items in the above definition. That is, we show that every Hintikka sequent is satisfiable by additionally showing that the constructed model $(H,\simp_\Gcal,\epsilon_\Gcal,\nu,\rho)$ from the Hintikka sequent satisfies splittability and cross-split.

\begin{proof}
The following belongs to the first part of the proof for Lemma~\ref{lem:hintikka_sat}.
\begin{description}
\item[Splittability:] for each $[a]\in H$, there is some $a'\in\Lcal$ s.t. $[a]=[a']$. By condition 17 in Definition~\ref{definition:hitikka_seq2}, either (1) $(a'\not = \epsilon)\in\Gcal$ or (2) $a'=_\Gcal \epsilon$. If (1) holds, by condition 18 in Definition~\ref{definition:hitikka_seq2}, there exist $x,y$ s.t. $(x,y\simp a')\in\Gcal$, $(x\not = \epsilon)\in\Gcal$, and $(y\not = \epsilon)\in\Gcal$ hold. Thus we can find $[x],[y]$ s.t. $[x],[y]\simp_\Gcal [a]$ holds and $[x]\not = [\epsilon]$, $[y]\not = [\epsilon]$. If (2) holds, splittability trivially holds.
\item[Cross-split:] if $[a],[b]\simp_\Gcal [z]$ and $[c],[d]\simp_\Gcal [z]$ hold, then we can fine some $(a',b'\simp z')\in\Gcal$ and $(c',d'\simp z'')\in\Gcal$ s.t. $[a] = [a']$, $[b] = [b']$, $[c] = [c']$, $[d] = [d']$, and $[z] = [z'] = [z'']$. This implies that $z' =_\Gcal z''$. Then by condition 20 in Definition~\ref{definition:hitikka_seq2}, there are $ad$, $ac$, $bc$, $bd$, s.t. $(ad,ac\simp a')\in\Gcal$, $(ad,bd\simp d')\in\Gcal$, $(ac,bc\simp c')\in\Gcal$, and $(bc,bd\simp b')\in\Gcal$. Therefore we can find $[ad]$, $[ac]$, $[bc]$, $[bd]$, s.t. $[ad],[ac]\simp_\Gcal [a]$, $[ad],[bd]\simp_\Gcal [d]$, $[ac],[bc]\simp_\Gcal [c]$, and $[bc],[bd]\simp_\Gcal [b]$.
\end{description}\qed
\end{proof}

To deal with the new rules we added, we now define the \emph{extended formulae} as
\begin{center}
$ExF ::= F~|~\Ubb~|~\Ebb~|~\Abb~|~\Abb_C~|~\Sbb~|~\EMbb~|~\CSbb~|~\CSbb_C$
\end{center}
where $F$ is a BBI formula, and the others are constants. We also need to redefine the schedule to handle the inequality structure. Thus Definition~\ref{definition:fair} is now modified as below.

\begin{definition}[Scheduler $\phi$]
\label{definition:fair2}
A {\em schedule} is a tuple $(O, m, \exfml{F}, R, I)$, where $O$ is either $0$
(left) or $1$ (right),
$m$ is a label, $\exfml{F}$ is an extended formula, $R$ is a set of relational atoms
such that $|R| \leq 2$, and $I$ is a singleton inequality. 
Let $\Scal$ denote the set of all schedules.
A {\em scheduler} is a function from the set of natural numbers $\Ncal$ to $\Scal.$ 
A scheduler $\phi$ is {\em fair} if 
for every schedule $S\in\Scal$, the set $\{i \mid \phi(i) = S \}$ is infinite.
\end{definition}

It follows from the same reason that there exists a fair scheduler. The new component $I$ in the scheduler is ignored in all the cases in Definition~\ref{definition:sequent-series}. However, we have to slightly extend the definition to accommodate splittability and cross-split. The original definition is rewritten as follows.

\begin{definition}
\label{definition:sequent-series2}
Let $F$ be a formula which is not provable in $\ilspslhe$. 
We construct a series of finite sequents 
$\{\myseq{\Gcal_i}{\Gamma_i}{\Delta_i} \}_{i \in \Ncal}$
from $F$ where
$\Gcal_1 = \Gamma_1 = \emptyset$ and $\Delta_1 =
a_1:F$.

Assuming that $\myseq{\Gcal_i}{\Gamma_i}{\Delta_i}$
has been defined, we define $\myseq{\Gcal_{i+1}}{\Gamma_{i+1}}{\Delta_{i+1}}$ as follows.  
Suppose $\phi(i) = (O_i,m_i,\exfml{F}_i,R_i, I_i).$
\begin{itemize}

\item If $O_i = 0$, $\exfml{F}_i$ is a $\psl$ formula $C_i$ and
  $m_i:C_i\in\Gamma_i$:
\begin{itemize}
\item If $C_i = F_1\land F_2$, same as original def..
\item If $C_i = F_1\limp F_2$, same as original def.. 
\item If $C_i = \top^*$, same as original def..
\item If $C_i = F_1\mand F_2$, then $\Gcal_{i+1} =
  \Gcal_i\cup\{(a_{4i},a_{4i+1}\simp m_i)\}$, $\Gamma_{i+1} =
  \Gamma_i\cup\{a_{4i}:F_1,a_{4i+1}:F_2\}$, $\Delta_{i+1} = \Delta_i$.
\item If $C_i = F_1\mimp F_2$ and $R_i = \{(x,m\simp  y)\}\subseteq\Gcal_i$ and $\Gcal_i\vdash_E (m = m_i)$, same as original def.. 
\end{itemize}

\item If $O_i = 1$, $\exfml{F}_i$ is a $\psl$ formula $C_i$, and
  $m_i:C_i\in\Delta$:
\begin{itemize}
\item If $C_i = F_1\land F_2$, same as original def..
\item If $C_i = F_1\limp F_2$, same as original def..
\item $C_i = F_1\mand F_2$ and $R_i = \{(x,y\simp
  m)\}\subseteq\Gcal_i$ and $\Gcal_i\vdash_E (m_i = m)$, same as original def..
\item If $C_i = F_1\mimp F_2$, then $\Gcal_{i+1} =
  \Gcal_i\cup\{(a_{4i},m_i\simp a_{4i+1})\}$, $\Gamma_{i+1} =
  \Gamma_i\cup\{a_{4i}:F_1\}$, and $\Delta_{i+1} =
  \Delta_i\cup\{a_{4i+1}:F_2\}$.
\end{itemize}
\item If $\exfml{F}_i \in\{\Ubb,\Ebb,\Abb, \Abb_C, \Sbb,\EMbb,\CSbb,\CSbb_C\}$, we proceed as follows:
\begin{itemize}
\item If $\exfml{F}_i = \Ubb$, $R_i = \{(a_n,\epsilon\simp a_n)\}$, where $n
  \leq 4i+3$, then $\Gcal_{i+1}=\Gcal_i\cup\{(a_n,\epsilon\simp
  a_n)\}$, $\Gamma_{i+1} = \Gamma_i$, $\Delta_{i+1} = \Delta_i$.
\item If $\exfml{F}_i = \Ebb$, same as original def..
\item If $\exfml{F}_i = \Abb$, $R_i = \{(x,y\simp z);(u,v\simp
  x')\}\subseteq\Gcal_i$ and $\Gcal_i\vdash_E (x = x')$, then
  $\Gcal_{i+1}=\Gcal_i\cup\{(u,a_{4i}\simp z),(y,v\simp a_{4i})\}$,
  $\Gamma_{i+1} = \Gamma_i$, $\Delta_{i+1} = \Delta_i$.
\item If $\exfml{F}_i = \Abb_C$, $R_i = \{(x,y \simp x') \} \subseteq \Gcal_i$,
and $\Gcal_i \vdash_E (x = x')$ then $\Gcal_{i+1} = \Gcal_i \cup \{(x, a_{4i} \simp x), (y, y \simp a_{4i})\}$,
$\Gamma_{i+1} = \Gamma_i$, $\Delta_{i+1} = \Delta_i.$
\item If $\exfml{F}_i = \Sbb$ and $I_i = \{(w\not = \epsilon)\}\subseteq \Gcal_i$, then $\Gcal_{i+1}=\Gcal_i\cup\{(a_{4i},a_{4i+1}\simp w),(a_{4i}\not = \epsilon),(a_{4i+1}\not = \epsilon)\}$, $\Gamma_{i+1} = \Gamma_i$, $\Delta_{i+1} = \Delta_i$.
\item If $\exfml{F}_i = \EMbb$ and $R_i = \{(\epsilon,a_n\simp \epsilon)\}$, where $n\leq 4i+3$. If there is no derivation for $\myseq{(\epsilon,a_n\simp \epsilon);\Gcal_i}{\Gamma_i}{\Delta_i}$, then $\Gcal_{i+1} = \Gcal_i\cup\{(\epsilon,a_n\simp \epsilon)\}$, otherwise $\Gcal_{i+1} = \Gcal_i\cup\{(a_n\not = \epsilon)\}$. In both cases, $\Gamma_{i+1}=\Gcal_i$ and $\Delta_{i+1} = \Delta_i$.
\item If $\exfml{F}_i = \CSbb$, $R_i = \{(x,y\simp z),(u,v\simp z')\}\subseteq\Gcal_i$, and $\Gcal_i\vdash_E (z = z')$, then $\Gcal_{i+1} = \Gcal_i\cup\{(a_{4i},a_{4i+1}\simp x),(a_{4i},a_{4i+2}\simp u),(a_{4i+2},a_{4i+3}\simp y),(a_{4i+1},a_{4i+3}\simp v)\}$, $\Gamma_{i+1} = \Gamma_i$, $\Delta_{i+1} = \Delta_i$.
\item If $\exfml{F}_i = \CSbb_C$ and $R_i = \{(x,y\simp z)\}\subseteq\Gcal_i$, then $\Gcal_{i+1} = \Gcal_i\cup\{(a_{4i},a_{4i+1}\simp x),(a_{4i},a_{4i+2}\simp x),(a_{4i+2},a_{4i+3}\simp y),(a_{4i+1},a_{4i+3}\simp y)\}$, $\Gamma_{i+1} = \Gamma_i$, $\Delta_{i+1} = \Delta_i$. 
\end{itemize}
\item In all other cases, $\Gcal_{i+1} = \Gcal_i$, $\Gamma_{i+1} =
  \Gamma_i$ and $\Delta_{i+1} = \Delta_i$.
\end{itemize}
\end{definition}

Lemma~\ref{lm:construction} is easy to show for the new definitions, the second item in the lemma should now be stated as $\Lcal\subseteq\{a_0,a_1,\cdots,a_{4i-1}\}$. We are ready to prove Lemma~\ref{lem:lim_hintikka} for the new conditions in the Hintikka sequent.

\begin{proof}
Condition 18 holds because the limit sequent is finitely-consistent. We show the cases for conditions 16, 17, 19 as follows.
\begin{enumerate}
\setcounter{enumi}{15}
\item For each $a_n\in\Lcal$, there is some natural number $j \geq n$ s.t. $\phi(j) = (O,m, \EMbb,$ $\{(\epsilon,a_n\simp \epsilon)\},I)$, where $EM$ is applied to $a_n$. Then either (1) $\Gcal_{j+1} = \Gcal_j\cup\{(\epsilon,a_n\simp \epsilon)\}$ or (2) $\Gcal_{j+1} = \Gcal_j\cup\{(a_n\not = \epsilon)\}$, depending on which choice gives a finitely-consistent sequent $\myseq{\Gcal_{j+1}}{\Gamma_{j+1}}{\Delta_{j+1}}$. If (1) holds, then $(\epsilon,a_n\simp \epsilon)\in\Gcal_{j+1}\subseteq \Gcal^\omega$, and $\Gcal^\omega\vdash_E (a_n = \epsilon)$ by an $Eq_1$ application, giving $a_n =_{\Gcal^\omega} \epsilon$. If (2) holds, then $(a_n\not = \epsilon)\in\Gcal_{j+1}\subseteq \Gcal^\omega$.
\item If $(z\not = \epsilon)\in\Gcal^\omega$, then $(z\not = \epsilon)\in\Gcal_i$, for some $i\in\Ncal$. Then there exists $j > i$ s.t. $\phi(j) = (O,m,\Sbb,R,\{(z\not = \epsilon)\})$. Then $\Gcal_{j+1} = \Gcal_j\cup\{(a_{4j},a_{4j+1}\simp z),(a_{4j}\not = \epsilon),(a_{4j+1}\not = \epsilon)\}\subseteq\Gcal^\omega$.
\setcounter{enumi}{18}
\item If $(x,y\simp z)\in\Gcal^\omega$ and $(u,v\simp z')\in\Gcal^\omega$ and $z =_{\Gcal^\omega} z'$. There must be some $i\in\Ncal$ s.t. $\{(x,y\simp z),(u,v\simp z')\}\subseteq\Gcal_i$ and $\Gcal_i\vdash_E (z = z')$. Suppose $(x,y\simp z)$ and $(u,v\simp z')$ are distinct, then there exists $j > i$ s.t. $\phi(j) = (O,m,\CSbb,\{(x,y\simp z),(u,v\simp z')\},I)$, and $\{(x,y\simp z),(u,v\simp z')\}\subseteq\Gcal_j$, $\Gcal_j\vdash_E (z = z')$ hold. By construction, $\Gcal_{j+1} = \Gcal_j\cup\{(a_{4j},a_{4j+1}\simp x),(a_{4j},a_{4j+2}\simp u),(a_{4j+2},a_{4j+3}\simp y),(a_{4j+1},a_{4j+3}\simp v)\}\subseteq\Gcal^\omega$. If $(x,y\simp z)$ and $(u,v\simp z')$ are the same, a similar argument can be applied by using $\CSbb_C$. 
\end{enumerate}
The other cases are similar to the original proof. Note that the subscript of new labels needs to be adjusted accordingly.\qed
\end{proof}

Therefore the limit sequent in the new definition is indeed an Hintikka sequent, and we can extract an infinite counter-model from it. The completeness result follows.

\subsection{Proofs for Section~\ref{sec:experiments}}
\label{sec:app4}

\noindent Proof of Proposition~\ref{prop:admissible_a_c}.

\begin{proof}
We show that every derivation in $\lsbbi + C$ can be transformed
into one with no applications of $A_C.$ It is sufficient to show that we can
eliminate a single application of $A_C$; then we can eliminate all $A_C$
in a derivation successively starting from the topmost applications in 
that derivation. 
So suppose we have a derivation in $\lsbbi + C$ of the form: 
\begin{center}
\alwaysNoLine
\AxiomC{$\Pi$}
\UnaryInfC{$\myseq{(x,w\simp x);(y,y\simp w);(x,y\simp x);\Gcal}{\Gamma}{\Delta}$}
\alwaysSingleLine
\RightLabel{\tiny $A_C$}
\UnaryInfC{$\myseq{(x,y\simp x);\Gcal}{\Gamma}{\Delta}$}
\DisplayProof
\end{center}
where $w$ is a new label not in the root sequent. 
This is transformed into the following derivation: 
\begin{center}
\alwaysNoLine
\AxiomC{$\Pi'$}
\UnaryInfC{$\myseq{(x,\epsilon\simp x);(\epsilon,\epsilon\simp \epsilon);\Gcal[\epsilon/y]}{\Gamma[\epsilon/y]}{\Delta[\epsilon/y]}$}
\alwaysSingleLine
\RightLabel{\tiny $U$}
\UnaryInfC{$\myseq{(x,\epsilon\simp x);\Gcal[\epsilon/y]}{\Gamma[\epsilon/y]}{\Delta[\epsilon/y]}$}
\RightLabel{\tiny $C$}
\UnaryInfC{$\myseq{(x,\epsilon\simp x);(x,y\simp x);\Gcal}{\Gamma}{\Delta}$}
\RightLabel{\tiny $U$}
\UnaryInfC{$\myseq{(x,y\simp x);\Gcal}{\Gamma}{\Delta}$}
\DisplayProof
\end{center}
where $\Pi'$ is obtained by applying the substitutions $[\epsilon/y]$ and $[\epsilon/w]$
to $\Pi$ (by using Lemma~\ref{lem:subs}). Note that since $w$ does not occur in the root sequent,
$\Gcal[\epsilon/y][\epsilon/w] = \Gcal[\epsilon/y]$, 
$\Gamma[\epsilon/y][\epsilon/w] = \Gamma[\epsilon/y]$
and $\Delta[\epsilon/y][\epsilon/w] = \Delta[\epsilon/y].$
These substitutions do not introduce new instances of $A_C.$ \qed
\end{proof}\ \\

\noindent Proof of Lemma~\ref{lem:u_rule}.

\begin{proof}
The original $U$ rule can be separated into two cases: (1) $U'$, with the
restriction as described above and (2)
$U''$, where the created relational atom $(x,\epsilon\simp x)$
satisfies that $x$ does not occur in the conclusion. We show that 
case (2) is admissible, leaving case (1) complete.

A simple induction on the height $n$ of the derivation for
$\myseq{\Gcal}{\Gamma}{\Delta}$. Suppose $U''$ is the last rule in
the derivation, with a premise:
\begin{center}
$\myseq{(x,\epsilon\simp x);\Gcal}{\Gamma}{\Delta}$
\end{center}
where $x$ is a fresh label. Assume that the conclusion is not an empty
sequent, there must be some label $w$ that occurs in conclusion. By
Lemma~\ref{lem:subs}, replacing $x$ by $w$, we obtain that 
\begin{center}
$\myseq{(w,\epsilon\simp w);\Gcal}{\Gamma}{\Delta}$
\end{center}
is derivable in $n-1$ steps. By the induction hypothesis, there is a
$U''$-free derivation of this sequent, which leads to the derivation
of $\myseq{\Gcal}{\Gamma}{\Delta}$ by applying the restricted rule $U'$.\qed
\end{proof}


\end{document}